\newtheorem{theorem}{Theorem}[section]
 \newtheorem{corollary}[theorem]{Corollary}
 \newtheorem{lemma}[theorem]{Lemma}
 \theoremstyle{definition}
 \theoremstyle{remark}
 \numberwithin{equation}{section}
\DeclareRobustCommand*\cal{\@fontswitch\relax\mathcal}
\newcommand{\ineg}{{\smallsmile}}
\newcommand{\uneg}{{\smallfrown}}
\newcommand{\narb}{{-}}
\newcommand{\wnarb}{\scalebox{1}{$\mathrlap{\Circle}\textcolor{black}{\narb}$}}
\newcommand{\wsmile}{\scalebox{1}{$\mathrlap{\Circle}\textcolor{black}{\smallsmile}$}}
\newcommand{\wfrown}{\scalebox{1}{$\mathrlap{\Circle}\textcolor{black}{\smallfrown}$}}
\newcommand{\connec}{\pentagon}
\newcommand{\Circled}[1]{\textcircled{\scriptsize$#1$}}
\newcommand{\ttt}{t}
\newcommand{\fff}{f}
\newcommand{\TT}{\mathbf{T}}
\newcommand{\FF}{\mathbf{F}}
\newcommand{\XX}{\mathbf{X}}
\newcommand{\val}[4]{\ensuremath{{#1}^{#2}_{#3}(#4)}}
\newcommand{\tru}[2]{\val{\TT}{}{#1}{#2}}
\newcommand{\fal}[2]{\val{\FF}{}{#1}{#2}}
\newcommand{\truQ}[2]{\val{\TT}{Q}{#1}{#2}}
\newcommand{\falQ}[2]{\val{\FF}{Q}{#1}{#2}}
\newcommand{\xval}[2]{\val{\XX}{}{#1}{#2}}
\newcommand{\xvalQ}[2]{\val{\XX}{Q}{#1}{#2}}
\newcommand{\setP}{{\cal P}}
\newcommand{\setS}{{\cal L}}
\newcommand{\M}{{\cal M}}
\newcommand{\D}{{\bf D}}
\newcommand{\B}{{\bf B}}
\newcommand{\K}{{\bf K}}
\newcommand{\T}{{\bf T}}
\newcommand{\FUNC}{{\bf Fun}}
\newcommand{\Q}{{\cal Q}}
\newcommand{\F}{{\cal F}}
\newcommand{\E}{{\cal E}}
\newcommand{\wfr}{\prec}
\newcommand{\wfreq}{\preceq}
\DeclareSymbolFont{symbolsC}{U}{txsyc}{m}{n}
\DeclareMathSymbol{\strictif}{\mathrel}{symbolsC}{74}
\newcommand{\suq}{\subseteq}
\newcommand{\set}[1]{\left\{#1\right\}}
\newcommand{\tup}[1]{\left<#1\right>}
\newcommand{\st}{{\ |\ }}   
\newcommand{\Ra}{\Rightarrow}
\newcommand{\rs}{{\,/\,}} 
\newcommand{\GKF}{{\rm {KF}}}
\newcommand{\mostbasic}{{\rm {P}}}
\newcommand{\GPK}{{\rm {PK}}}
\newcommand{\G}{{\rm {G}}}
\newcommand{\GPKB}{{\rm {PKB}}}
\newcommand{\GPKDB}{{\rm {PKDB}}}
\newcommand{\GPKD}{{\rm {PKD}}}
\newcommand{\GPKT}{{\rm {PKT}}}
\newcommand{\GPKF}{{\rm {PKF}}}
\newcommand{\GPKFour}{{\rm {PK4}}}
\newcommand{\GPKFourD}{{\rm {PKD4}}}
\newcommand{\KFour}{{\bf 4}}
\newcommand{\g}{\Gamma}
\renewcommand{\d}{\Delta}
\newcommand{\w}{\wedge}
\newcommand{\ssrul}[2]{\begin{array}{c}#1\\ \hline \ST #2\end{array}}
\newcommand\ST{\rule[-0.5em]{0pt}{1.5em}}
\newcommand{\dom}[1]{{{\textsl dom}}(#1)}
\newcommand{\Lano}[2]{{#1}{{}\!\setminus\!{#2}}}
\newcommand*{\ddfrac}[2]{\genfrac{}{}{0pt}{0}{#1}{#2}} 
\begin{document}

%
%
%
%
%
%
%
%
%

\title
 {Sequent systems for negative modalities}

\author{Ori Lahav}
\address{Max Planck Institute for Software Systems (MPI-SWS), Germany}
\email{orilahav@mpi-sws.org}

\author{Jo\~ao Marcos}
\address{Federal University of Rio Grande do Norte, Brazil}
\email{jmarcos@dimap.ufrn.br}

\author{Yoni Zohar}
\address{Tel Aviv University, Israel}
\email{yoni.zohar@cs.tau.ac.il}

\keywords{Negative modalities, sequent systems, cut-admissibility, analyticity.}


\begin{abstract}
Non-classical negations may fail to be contradictory-forming operators in more than one way, and they often fail also to respect fundamental meta-logical properties such as the replacement property.  Such drawbacks are witnessed by intricate semantics and proof systems, whose philosophical interpretations and computational properties are found wanting.
In this paper we investigate congruential non-classical negations that live inside very natural systems of normal modal logics over complete distributive lattices; these logics are further enriched by adjustment connectives that  
may be used for handling reasoning under uncertainty caused by inconsistency or undeterminedness.
Using such straightforward semantics, we study the classes of frames characterized by seriality, reflexivity, functionality, symmetry, transitivity, and some combinations thereof, and discuss what they reveal about sub-classical properties of negation. 
To the logics thereby characterized we apply a general mechanism that allows one to endow them with analytic ordinary sequent systems, most of which are even cut-free.  
We also investigate the exact circumstances that allow for classical negation to be explicitly defined inside our logics.%
\footnote[1]{A preliminary and abbreviated version of the results in this paper was presented at the \textit{$11^{\mathrm{th}}\!$ International Conference on Advances in Modal Logic} (cf.\ \cite{lahav_it_2016}).}
\end{abstract}

\maketitle

\section{Capturing the impossible, and its dual}
\label{sec:goals}
\paragraph{Denying instead of affirming}
Many well-known subclassical logics ---in\-clu\-ding intuitionistic logic and several many-valued logics--- share the con\-junc\-tion-disjunction fragment of classical logic, but disagree about the exact notion of opposition and the specific logical features to be embodied in 
negation.
In contrast, modal logics are often thought of as superclassical, and are obtained by the addition of 
`positive modalities'~$\Box$ and~$\lozenge$.
For various well-known cases, such modalities fail to have a finite-valued characterization.
Notwithstanding, each $m$-ary connective~$\connec$ of a modal logic is typically \textit{congruential} (with respect to the underlying consequence relation~$\vdash$), in treating equivalent formulas as synonymous:
if $\alpha_i\vdash\beta_i$ and $\beta_i\vdash\alpha_i$, for every $1\leq i\leq m$, then $\connec(\alpha_1,\ldots,\alpha_m)\vdash\connec(\beta_1,\ldots,\beta_m)$.
To logical systems containing only such sort of connectives one might associate semantics in terms of neighborhood frames (see ch.5 of~\cite{Wojcicki88}), and the same applies if one uses 1-ary `negative modalities' instead, as in~\cite{ripley:PhD}.
The family of systems enjoying congruentiality (a.k.a.~`replacement property') goes sometimes under the name of `classical modal logics' (cf.~\cite{Segerberg71}).
As a matter of fact, the family of `normal modal logics' makes its 1-ary positive modalities respect a stronger property: if $\alpha\vdash\beta$ then $\connec(\alpha)\vdash\connec(\beta)$.
Such monotone behavior may be captured by semantics based on Kripke frames, and the same applies to the antitone behavior that characterize negative modalities, namely: if $\alpha\vdash\beta$ then $\connec(\beta)\vdash\connec(\alpha)$.

\paragraph{Some of our ancestors}
In~\cite{dun:zho:neggag:05} an investigation of negative modalities is accomplished on top of the ${\land}{\lor}{\top}{\bot}$-
fragment of classical logic, and the same base language had already been considered in~\cite{res:combpossneg:SL97} for the combination of positive and negative modalities.  Typically, in studies of positive and negative modalities the so-called compatibility (bi-relational) frames are used, and certain appropriate conditions upon the commutativity of diagrams involving their two relations are imposed, having as effect the heredity of truth (i.e., its persistence towards the future) with respect to one of the mentioned relations (assumed to be a partial order).
There are a number of studies (e.g.~\cite{vaka:cons89:full,Dos:NMOiIL:1984}) in which the above mentioned languages for dealing with negative modalities are upgraded in order to count on an (intuitionistic or classical) implication, and sometimes also its dual, co-implication (cf.~\cite{rau:BH:DM1980}).  If one may count on classical implication, however, it suffices to add to it the modal paraconsistent negation given by `unnecessity' (cf.~\cite{jmarcos:neNMLiP}), and all other connectives of normal modal logics turn out to be definable already from such an impoverished basis (indeed, where~$\ineg$ is a primitive symbol for unnecessity and~$\supset$ represents classical implication, we have that ${\sim}\alpha:=\alpha\supset\ineg(\alpha\supset\alpha)$ behaves as the classical negation of~$\alpha$, and $\Box\alpha:={\sim}\ineg\alpha$ behaves as the usual positive modality box).
In the particular case of $S5$, an even simpler definition of necessity is within reach (cf.~\cite{bez:JAL:05}), namely $\Box\alpha:=\ineg\ineg\alpha$.

\paragraph{Paraconsistency and paracompleteness}
Our basic intuition about the relation between a paracomplete (a.k.a.~`in\-tui\-tion\-istic-like') negation and a paraconsistent negation is that the former would be expected to be more demanding than the latter, while classical negation had better sit between the two (whenever it also turns out to be expressible).
It takes indeed more effort to assert a negated statement on constructive grounds, while such statements are more readily asserted should some contradictions be allowed to subsist; in other words, one could say that negations in a paracomplete logic come at a greater cost than classical negations, while paraconsistent logics indulge on negations in which classical logic would show greater restraint. 
The presence of a classical negation, however, often makes it too easy to forget that there are two distinct kinds of deviations equally worth studying, concerning non-classical negation, for one of these deviations may then be recovered in the standard way as the dual of the other. 
However, duality does not presuppose definability: in the case of the basic language for positive modal logic (cf.~\cite{dun:PML:1995}), it is well-known that no classical negation is definable, and that the positive modalities are not interdefinable (cf.~\cite{cel:jan:PML:99}).

\paragraph{A richer language in which to study negative modalities}
Assume that~$\neg$ is a 1-ary symbol for negation.  According to the classical `consistency assumption', [CA], there is no state of affairs~$v$ and no formula~$\varphi$ such that [Inc] both~$\varphi$ and~$\neg\varphi$ are satisfied in~$v$.  The dual `determinedness assumption', [DA], has it that there is no state of affairs~$v$ and no formula~$\varphi$ such that [Und] both~$\varphi$ and~$\neg\varphi$ are left unsatisfied in~$v$.  Whenever a logic contains a negation that behaves non-classically, at least one of the above mentioned assumptions is bound to fail.
The so-called `Logics of Formal Inconsistency' (\textbf{LFI}s) provide tools for recovering [CA], by offering in their language a 1-ary connective~$\mathsf{C}$ such that $\mathsf{C}\varphi$ is left unsatisfied in~$v$ whenever [Inc] happens to be the case.  Dually, the `Logics of Formal Undeterminedness' (\textbf{LFU}s) offer a 1-ary connective~$\mathsf{D}$ such that $\mathsf{D}\varphi$ is satisfied in~$v$ whenever [Und] happens to be the case.
The \textbf{LFI}s and \textbf{LFU}s investigated in the present paper are in fact a little bit stronger than that: in them, $\mathsf{C}\varphi$ is left unsatisfied \textit{iff} [Inc] is the case, and $\mathsf{D}\varphi$ is satisfied \textit{iff} [Und] is the case.
The `adjustment' (a.k.a.\ `restoration') connectives $\mathsf{C}$ and~$\mathsf{D}$ are meant to exclude the scenarios in which negation deviates from [CA] and [DA], and to allow for a given reasoner, if that be the case, to recover an intended classical behavior from within a non-classical environment.

\paragraph{On the availability of classical negation}
In order to get a better grasp of the duality between paraconsistent and paracomplete modal negations (namely, unnecessity vs.\ impossibility), we here purposefully make an effort to prevent the underlying language from being sufficiently expressive so as to allow for the definition of  a classical negation (or a classical implication) --- whenever such goal lies within reach.  
As we will see, in fact, all normal modal logics in the basic language of negative modalities 
that happen to fail the consistency and the determinedness assumptions also 
fail to be expressive enough so as to allow for a classical negation to be defined, but in a few cases the definability is within reach if adjustment connectives are employed. 
It should be noted, though, that as a byproduct of the presence of the above mentioned adjustment connectives truth will no longer be hereditary in our Kripke models, that is, it will not in general be preserved for all compound formulas towards the future --- this stands in stark contrast with what happens with models of compatibility frames.

\paragraph{What is to follow}
In this paper, first and foremost we will concentrate on the logic $PK$, determined by the class of all Kripke frames, which has been introduced and received a presentation as a sequent system in~\cite{dod:mar:ENTCS2013}, against a theoretical background of definitions situated at the level of abstract consequence relations.  We show here that this logic can be reintroduced in terms of a so-called `basic sequent system', which allows one to take advantage of general techniques developed in~\cite{lah:avr:Unified2013}, including a method for proving soundness and completeness with respect to the Kripke semantics of $PK$ (given in~\cite{dod:mar:ENTCS2013}), as well as a uniform recipe for semantic proofs of cut-admissibility or analyticity.  After that we apply a similar strategy to the study of several extensions of $PK$ that happen to validate principles distinctive of classical negation, and next we also investigate the explicit definability of classical negation within our logics.
Before devoting ourselves to that task, though, the next section shall adopt a semantical perspective to explain the circumstances in which our study is developed. %


\section{On negative modalities}
\label{sec:intro}
We briefly recall the now standard components of a Kripke semantics.
A~\textsl{frame} is a structure consisting of a nonempty set~$W$ (of `worlds') and a binary (`accessibility') relation~$R$ on~$W$.
A(n ordinary) \textsl{model} ${\cal M}=\langle {\cal F},V\rangle$ is based on a frame ${\cal F}=\langle W, R\rangle$ and on a \textsl{valuation} $V:W\times \setS \to \{\fff,\ttt\}$ that assigns truth-values
to worlds $w\in W$ and sentences~$\varphi$ of a propositional language~$\setS$ generated over a denumerable set of propositional variables~$\setP$.
The valuations must satisfy certain conditions that are induced by the fixed interpretation of the connectives of the given language.
When $V(w,\varphi)=\ttt$ we say that~$V$ \textsl{satisfies~$\varphi$ at~$w$}, and denote this by ${\cal M}, w\Vdash \varphi$; otherwise 
we write ${\cal M}, w\not\Vdash \varphi$ and say that~$V$ \textsl{leaves~$\varphi$ unsatisfied at~$w$}.
The connectives from the positive fragment of classical logic receive their standard boolean interpretations locally, world-wise, by recursively setting:

\smallskip

\noindent
\begin{tabular}{l l c l}
{[S$\top$]} & ${\cal M}, w\Vdash \top$\\
{[S$\land$]} & ${\cal M}, w\Vdash \varphi\land\psi$ & iff &  ${\cal M}, w\Vdash \varphi$ and ${\cal M}, w\Vdash\psi$\\
{[S$\lor$]} & ${\cal M}, w\not\Vdash \varphi\lor\psi$ & iff &  ${\cal M}, w\not\Vdash \varphi$ and ${\cal M}, w\not\Vdash\psi$\\
\end{tabular}
\smallskip

\noindent
Given formulas $\Gamma\cup\Delta$ of~$\setS$, and given a class of frames~${\cal E}$, we say that \textsl{$\Gamma$ entails $\Delta$ in~${\cal E}$}, and denote this by $\Gamma\models_{\cal E}\Delta$, if for each model~${\cal M}$ based on a frame ${\cal F}\in{\cal E}$ and each world~$w$ of ${\cal M}$ we have either ${\cal M},w\not\Vdash\gamma$ for some $\gamma\in\Gamma$ or ${\cal M},w\Vdash\delta$ for some $\delta\in\Delta$.
The assertion $\Gamma\models_{\cal E}\Delta$ will be called a \textsl{consecution}; when such assertion happens to be true we may also say about each frame in~$\cal E$ that it \textsl{validates} the given consecution. 
In the next section we will extend the notion of entailment so as to cover sequents instead of formulas.
The subscript~${\cal E}$ shall be omitted in what follows whenever there is no risk of ambiguity.

In the following subsections we extend the above language with connectives whose modal interpretations will be useful for the investigation of negations in a non-classical congruential setting.

\subsection{Adding negations}
\label{addingNEG}

Our first extension of the above language proceeds by the addition of a 1-ary connective~$\ineg$, to be interpreted  non-locally (that is, its satisfaction depends on accessible worlds) as follows:
\smallskip

\noindent
\begin{tabular}{l l c l}
{[S$\ineg$]} & ${\cal M}, w\Vdash \ineg\varphi$ & iff & ${\cal M}, v\not\Vdash \varphi$ for \underline{some} $v\in W$ such that $wRv$\\
\end{tabular}
\smallskip

\noindent
Accordingly, a formula $\ineg\varphi$ is said to be satisfied at a given world of a model precisely when the formula $\varphi$ fails to be satisfied at some world accessible from this given world.
In the following paragraph we will show that~$\ineg$ respects some minimal conditions to deserve being called a `negation', namely, we will demonstrate its ability to invert truth-values assigned to certain formulas (at certain worlds).

Let~$\#$ represent an arbitrary 1-ary connective, and let $\#^j$ abbreviate a $j$-long sequence of~$\#$'s.  The least we will demand from~$\#$ to call it a \textsl{negation} is that, for every $p\in\setP$ and every $k\in\mathbb{N}$:
\smallskip

\noindent
\begin{tabular}{l l l l}
{$\llbracket$\textit{falsificatio}$\rrbracket$} &  $\#^k p\not\models\#^{k+1} p$ \hspace{10mm} &
{$\llbracket$\textit{verificatio}$\rrbracket$} &  $\#^{k+1} p\not\models\#^k p$\\
\end{tabular}
\smallskip

\noindent
To witness {$\llbracket$\textit{falsificatio}$\rrbracket$}, some sentence~$\varphi$ is to be satisfied while the sentence $\#\varphi$ is not simultaneously satisfied; for {$\llbracket$\textit{verificatio}$\rrbracket$} some sentence~$\varphi$ is to be left unsatisfied while at the same time $\#\varphi$ is satisfied.
To check that the connective~$\ineg$ fulfills such requisites, it suffices for instance to build a frame in which $W=\{w_n:n\in\mathbb{N}\}$ and $wRv$ iff $v=w^{+\!+}$ (by which we mean that $v$ is the successor of~$w$), and consider a valuation~$V$ such that $V(w_n,p)=\ttt$ iff $n$ is odd.

It is very easy to see that our connective~$\ineg$
satisfies \textsl{global contraposition}
in the sense that
$\alpha\models\beta\mbox{ implies }\ineg\beta\models\ineg\alpha$.
Indeed, assume $\alpha\models\beta$ and suppose that  ${\cal M}, w\Vdash\ineg\beta$ for some world~$w$ of an arbitrary model~${\cal M}$.  Then, [S$\ineg$] informs us that there must be some world~$v$ in~${\cal M}$ such that $wRv$ and ${\cal M}, v\not\Vdash \beta$.  By the definition of entailment, we conclude thus from the initial assumption that ${\cal M}, v\not\Vdash \alpha$.  Using again [S$\ineg$] it follows that ${\cal M}, w\Vdash\ineg\alpha$.  As a byproduct of this, if one defines an equivalence relation~$\equiv$ on~$\setS$ by setting $\alpha\equiv\beta$ whenever both $\alpha\models\beta$ and $\beta\models\alpha$, then an easy structural induction on~$\setS$ establishes that~$\equiv$ is not only compatible with~$\ineg$ but also with the other connectives that are used in constructing the algebra of formulas. This means that~$\equiv$ constitutes a congruence relation on~$\setS$.

It is straightforward to check that any 
1-ary connective~$\#$ satisfying global contraposition is such that, given $p,q\in\setP$:
\smallskip

\noindent
\begin{tabular}{l l l l}
(DM1.1\#) &  $\#(p\lor q)\models\# p\land \# q$ \hspace{5mm} &
(DM2.1\#) &  $\# p\lor \# q\models\# (p\land q)$\\
\end{tabular}
\smallskip

\noindent
If~$\#$ also respects the following consecutions, then it is said to be a \textsl{full type 
diamond-minus connective}:
\smallskip

\noindent
\begin{tabular}{l l l l}
(DM2.2\#) &  $\#(p\land q)\models\# p\lor \# q$ \hspace{5mm} & %
(DT\#) &  $\# \top\models p$\\
\end{tabular}
\smallskip

\noindent
Note that our negation~$\ineg$ is a full type 
diamond-minus connective.
To check that $\ineg$ satisfies (DM2.2\#), indeed, suppose that ${\cal M}, w\Vdash\ineg(p\land q)$ for some arbitrary world~$w$ of an arbitrary model~${\cal M}$.  By [S$\ineg$] we know that there is some world~$v$ such that $wRv$ and ${\cal M}, v\not\Vdash p\land q$.  It follows by [S$\land$] that ${\cal M}, v\not\Vdash p$ or  ${\cal M}, v\not\Vdash q$.  Using [S$\ineg$] again we conclude that ${\cal M}, w\Vdash\ineg p$ or ${\cal M}, w\Vdash\ineg q$ and [S$\lor$] gives us ${\cal M}, w\Vdash\ineg p\lor \ineg q$.  In addition, to check that~$\ineg$ satisfies (DT\#) one may invoke [S$\ineg$] and [S$\top$].  Note that satisfying (DT\#) means that the nullary connective~$\bot$ taken as an abbreviation of $\# \top$ is interpretable by setting, for every world~$w$ of every model~${\cal M}$:
\smallskip

\noindent
\begin{tabular}{l l c l}
{[S$\bot$]} & ${\cal M}, w\not\Vdash \bot$\\
\end{tabular}
\smallskip

Given a negation~$\#$, we call the logic containing it \textsl{$\#$-paraconsistent} if the following consecution fails, for $p,q\in\setP$:
\smallskip

\noindent
\begin{tabular}{l l}
{$\llbracket$\#-explosion$\rrbracket$} &  $p,\#p\models q$\\
\end{tabular}
\smallskip

\noindent
This means that there must be valuations that satisfy both some sentence~$\varphi$ and the sentence $\#\varphi$ while not satisfying every other sentence.  It is worth noticing that $\llbracket\ineg$-explosion$\rrbracket$ holds good in frames containing exclusively worlds that are accessible to themselves, and to themselves only (such worlds will be called `narcissistic') and worlds that do not access any other world (such worlds will be called `dead ends'): in the former case, it is impossible to simultaneously satisfy both~$\varphi$ and $\ineg\varphi$; in the latter case, the sentence $\ineg\varphi$ is never satisfied.  Note moreover that in the class of all narcissistic frames (those containing only narcissistic worlds) the connective~$\ineg$ happens to behave like \textsl{classical negation}, i.e., it behaves like the symbol~$\sim$ in the following semantic clause:
\smallskip

\noindent
\begin{tabular}{l l c l}
{[S$\sim$]} &  ${\cal M},w\Vdash {\sim} \varphi$ & iff & ${\cal M},w\not\Vdash \varphi$ \\
\end{tabular}
\smallskip

\noindent
In contrast, in the class of all frames whose worlds are all dead ends the connective~$\ineg$ does not respect [\textit{verificatio}], and cannot be said thus to constitute a negation.

We now make a further extension of the above language by adding a 1-ary connective $\uneg$, non-locally interpreted as follows:
\smallskip

\noindent
\begin{tabular}{l l c l}
{[S$\uneg$]} & ${\cal M}, w\Vdash \uneg\varphi$ & iff & ${\cal M}, v\not\Vdash \varphi$ for \underline{\smash{every}} $v\in W$ such that $wRv$\\
\end{tabular}
\smallskip

\noindent
It is not difficult to check that again we have a connective that qualifies as a negation, and satisfies global contraposition.
\noindent
To reinforce the meta-theo\-re\-tical duality between the latter negation and the negation introduced above through [S$\ineg$], we will henceforth refer to the previous interpretation clause in the following equivalent form:
\smallskip

\noindent
\begin{tabular}{l l c l}
{[S$\uneg$]} & ${\cal M}, w\not\Vdash \uneg\varphi$ & iff & ${\cal M}, v\Vdash \varphi$ for \underline{some} $v\in W$ such that $wRv$\\
\end{tabular}
\smallskip

\noindent
A \textsl{full type 
box-minus connective} is a 1-ary connective~$\#$ that respects:
\smallskip

\noindent
\begin{tabular}{l l l l}
(DM1.2\#) &  $\# p\land \# q\models\#(p\lor q)$ \hspace{5mm} & %
(DF\#) &  $p\models \# \bot$\\
\end{tabular}
\smallskip

\noindent
One may easily check that~$\uneg$ is indeed a full type 
box-minus connective.

Given a negation~$\#$, we call the logic contaning it \textsl{$\#$-paracomplete} if it fails the following consecution, for $p,q\in\setP$:
\smallskip

\noindent
\begin{tabular}{l l}
{$\llbracket$\#-implosion$\rrbracket$} &  $q\models \# p, p$\\
\end{tabular}
\smallskip

\noindent
Such failure will clearly be the case for~$\#=\uneg$ as soon as we entertain frames that contain worlds that are neither dead ends nor narcissistic.  Otherwise, we see that $\uneg$ will behave either like classical negation (if all worlds are narcissistic) or like~$\top$ (if all worlds are dead ends).

In the following sections, unless noted otherwise, we will no longer consider classes of frames containing only frames with worlds that are either dead ends or narcissistic --- accordingly, we will only consider entailment relations that are $\ineg$-paracon\-sist\-ent and $\uneg$-paracomplete, for the negative modalities $\ineg$ (assumed to be full-type diamond-minus) and~$\uneg$ (assumed to be full-type box-minus).

\subsection{Recovering negation-consistency and negation-determinedness}
\label{recovering}

In what follows we will call a model \textsl{dadaistic} when it contains some world in which all formulas are satisfied, and call it \textsl{nihilistic} if it leaves all formulas unsatisfied at some world.  It is straightforward to see that the language based on ${\land}{\lor}\top\ineg$, with the above interpretations, admits dadaistic models, while the language based on ${\land}{\lor}\bot\uneg$ admits nihilistic models.

Recall that a $\#$-paraconsistent logic allows for valuations that satisfy certain formulas~$\varphi$ and~$\# \varphi$ while leaving some other formula~$\psi$ unsatisfied (at some fixed world).  There might be reasons for disallowing this phenomenon to occur with an arbitrary~$\varphi$, or for restricting to certain formulas~$\psi$ but not others.  A particularly useful way of keeping a finer control over which `inconsistencies' of the form~$\varphi$ and~$\# \varphi$ are to be acceptable within non-dadaistic models is to mark down the formula thereby involved so as to recover a `gentle' version of $\llbracket$\#-explosion$\rrbracket$.  
Concretely, for us here, a 1-ary `adjustment connective'~$\Circled{\#}$ that strongly internalizes the meta-theoretic consistency assumption at the object language level will be such that:
\smallskip

\noindent
\begin{tabular}{l l c l}
{[SC$\#$]} & ${\cal M}, w\Vdash \Circled{\#}\varphi$ & iff & ${\cal M}, w\not\Vdash \varphi$ or ${\cal M}, w\not\Vdash \#\varphi$\\
\end{tabular}
\smallskip

\noindent
It is easy to check that any connective~$\Circled{\#}$ respecting [SC$\#$] is such that:
\smallskip

\noindent
\begin{tabular}{l l l l l l}
(C1\#) &  $\Circled{\#} p, p, \#  p\models$ \hspace{5mm} &
(C2\#) &  $\models p,\Circled{\#} p$ \hspace{5mm} &
(C3\#) &  $\models \# p,\Circled{\#} p$ \\
\end{tabular}
\smallskip

\noindent
Note in particular that (C1\#) guarantees that there are no valuations that satisfy (at a fixed world) both $p$ and $\# p$ if these are put in the presence of $\Circled{\#} p$.  Thus, in case $\#$ fails $\llbracket$\#-explosion$\rrbracket$ we may look at the latter formula involving~$\Circled{\#}$ as guaranteeing that a weaker form of explosion is available.  On these grounds we shall call the connective~$\Circled{\#}$ an \textsl{adjustment companion} to~$\#$: it allows one to recover explosion from within a non-\#-explosive (i.e., paraconsistent) logical context, and adjust the consecutions of the underlying logic so as to allow for the simulation of the consecutions that would otherwise be justified by reference to $\llbracket$\#-explosion$\rrbracket$.  Semantically, the presence of such connective also guarantees that dadaistic models are not admissible over the language based on ${\land}{\lor}\top\ineg\wsmile$, with the above interpretations.  This is because a formula of the form $\wsmile\varphi\land(\varphi\land\ineg\varphi)$ is equivalent to a formula~$\bot$ respecting [S$\bot$].

Dually, a \#-paracomplete logic allows for valuations that leave the formulas~$\varphi$ and~$\#\varphi$ both unsatisfied (at some fixed world), while satisfying some other formula~$\psi$. A particular way of keeping a finer control over which `indeterminacies' of the form $\varphi$ and~$\#\varphi$ are to be acceptable within non-nihilistic models is to allow for a `gentle' version of $\llbracket${\#-implosion}$\rrbracket$, where a 1-ary connective~$\Circled{\#}$ internalizes the meta-theoretic determinedness assumption at the object language level, in such a way that:
\smallskip

\noindent
\begin{tabular}{l l c l}
{[SD$\#$]} & ${\cal M}, w\not\Vdash \Circled{\#}\varphi$ & iff & ${\cal M}, w\Vdash \varphi$ or ${\cal M}, w\Vdash \#\varphi$\\
\end{tabular}
\smallskip

\noindent
Clearly, any connective~$\Circled{\#}$ respecting [SD$\#$] is such that:
\smallskip

\noindent
\begin{tabular}{l l l l l l}
(D1\#) &  $\models\#  p, p, \Circled{\#} p$ \hspace{5mm} &
(D2\#) &  $\Circled{\#} p,p\models $ \hspace{5mm} &
(D3\#) &  $\Circled{\#} p,\# p\models$ \\
\end{tabular}
\smallskip

Note that a formula of the form $(\varphi\lor\uneg\varphi)\lor\wsmile\varphi$ is equivalent to a formula~$\top$ respecting [S$\top$].
Note, moreover, that whenever it turns out that a connective~$\#$ respects $\llbracket${\#-explosion}$\rrbracket$ and at the same time its adjustment companion~$\Circled{\#}$ re\-spects [SC$\#$], then the formula $\Circled{\#}\varphi$ is equivalent to~$\top$.  In an analogous way, whenever a connective~$\#$ respects $\llbracket$\#-implosion$\rrbracket$ and at the same time its adjustment companion~$\Circled{\#}$ respects [SD$\#$], the formula $\Circled{\#}\varphi$ is equivalent to~$\bot$.  This stresses the fact that the adjustment connectives with which we deal in this subsection are of more interest when they accompany the respective non-classical negations to whose meaning they contribute.
\smallskip

At this point we have finally finished constructing the richest language that will be used throughout the rest of the paper: It will contain the connectives ${\land}{\lor}\top\bot\ineg\wsmile\uneg\wfrown$, disciplined by the corresponding [S\#] conditions above.  In the following subsection we will explain precisely when a classical negation, that is a 1-ary connective~$\sim$ subject to condition [S${\sim}$], is \textit{definable} with the use of our language.
Fixed such language, the logic characterized over it~by the class~$\E$ of all frames will be called $PK$; the logic characterized by~the class~$\E_{\D}$ of all serial frames (the frames with serial accessibility relations) will be called $PKD$; the logic characterized by the class~$\E_{\T}$ of all reflexive frames will be called $PKT$; the logic characterized by the class~$\E_{\FUNC}$ of all frames whose accessibility relations are total functions will be called $PKF$; the logic characterized by the class~$\E_{\B}$ of all symmetric frames will be called $PKB$; the logic characterized by the class~$\E_{\KFour}$ of all transitive frames will be called $PK4$; and by $PKD4$ and $PKDB$ we will refer to the logics characterized by the classes of all frames whose accessibility relations enjoy a combination of the two obvious associated properties, in each case.

\subsection{Around classical negation}
\label{aroundCN}

According to the intuitions laid down at \Cref{sec:goals}, one could expect that in general (a) $\uneg\alpha\vdash{\sim}\alpha$ and (b) ${\sim}\alpha\vdash\ineg\alpha$.  It is easy to see, for instance, that these consecutions are indeed sanctioned by $PKT$, for the classical negation~$\sim$ that may be defined by setting ${\sim}\varphi:=\ineg\varphi\land\wsmile\varphi$ (alternatively, one may set ${\sim}\varphi:=\uneg\varphi\lor\wfrown\varphi$).
Indeed, suppose on the one hand that ${\cal M}, w\Vdash\uneg\alpha$ for a world~$w$ of a model~${\cal M}$ of a reflexive frame.  By reflexivity and [S$\uneg$] we must have ${\cal M}, w\not\Vdash\alpha$, and this is equivalent to ${\cal M}, w\Vdash{\sim}\alpha$ by [S$\sim$].  On the other hand, given ${\cal M}, w\not\Vdash\alpha$, by reflexivity and [S$\ineg$] we immediately conclude that ${\cal M}, w\Vdash\ineg\alpha$.

Meanwhile, in the deductively weaker logic $PKD$ one cannot in general prove neither (a) nor (b), even though a classical negation may be defined in this logic by setting ${\sim}\varphi:=(\uneg\varphi\land\wsmile\varphi)\lor\wfrown\varphi$.  However, one can still easily prove in $PKD$ that (c)~$\uneg\alpha\vdash\ineg\alpha$.
Indeed, suppose that ${\cal M}, w\Vdash\uneg\alpha$.  By seriality and [S$\uneg$] we conclude that ${\cal M}, v\not\Vdash\alpha$ for some world~$v$ such that $wRv$.  But then, by [S$\ineg$] it follows that ${\cal M}, w\Vdash\ineg\alpha$.
In the logic $PKF$, deductively stronger than $PKD$ (but neither stronger nor weaker than $PKT$) one may also prove the converse consecution, (d) $\ineg\alpha\vdash\uneg\alpha$.
Indeed, suppose that ${\cal M}, w\Vdash\ineg\alpha$.  There is, by the fact that the accessibility relation is a total function, a single world~$v$ such that $wRv$.  Then ${\cal M}, v\not\Vdash\alpha$, by [S$\ineg$]. For a similar reason, invoking now [S$\uneg$] we conclude that ${\cal M}, w\Vdash\uneg\alpha$.
Note that (c) and (d) together make our two modal non-classical negations indistinguishable from the viewpoint of $PKF$, yet one should not for this reason imagine, as we will see, that they collapse into classical negation.

The situation concerning classical negation and its relation to its non-clas\-sical neighbours gets even more interesting after one acknowledges that \textit{no}~classical negation is definable in $PK$, the weakest of our logics, and also that \textit{no} classical ne\-gation is definable in the fragment of $PKT$ without neither of the adjustment connectives, nor in the fragment of $PKF$ (or $PKD$) without either one of the adjustment connectives, nor in $PKB$ or in $PK4$.  Detailed proofs concerning the mentioned results about (non)definability of classical negation in the modal logics that constitute our present object of study may be found in \Cref{sec:definability}. 

Notice that in $PKD$ and its extensions 
there are no negated formulas that happen to be true or false at a given world just because there are no worlds accessible from it.
Note also that the logic $PKT$ is: paraconsistent but not paracomplete with respect to the connective~$\ineg$; paracomplete but not paraconsistent with respect to~$\uneg$
(adapting the result in \cite{avr:zam:AiML16}, it may be shown that this logic is indeed the least extension of the positive implicationless fragment of classical logic with the latter mentioned properties).
In all the other logics mentioned above, in contrast, both non-classical negations behave at once as paracomplete and paraconsistent negations (recall, though, that each negation is associated to a different adjustment connective).
We take the cases among these in which no classical negation is available to be particularly attractive for the task of revealing the `uncontamined' nature of non-classical negation.  Establishing well-behaved proof theoretical counterparts for such logics, as we shall do in what follows, is meant to allow for them to be even better understood and dealt with.



\section{A proof system for $PK$}
\label{sec:proofsystem}

A sequent calculus for $PK$, that we denote by $\GPK$, was introduced in~\cite{dod:mar:ENTCS2013}, and consists of the following rules:
%
{\small
\[\begin{array}{ll@{\hspace{2em}}ll}
{[id]} & \ssrul{}{p\Ra p} &
 {[cut]} & \ssrul{\g,\varphi\Ra \d \ \ \ \ \g\Ra \varphi,\d}{\g\Ra \d}
\\[2mm]
{[W{\Ra}]} & \ssrul{\g\Ra \d}{\g,\varphi\Ra \d}
\qquad & {[{\Ra}W]} & \ssrul{\g\Ra\d }{\g\Ra \varphi,\d} \\
\end{array}\]}
{\small
\[\begin{array}{ll@{\hspace{2em}}ll}
{[{\bot}{\Ra}]} & \ssrul{}{\g,\bot\Ra \d} &
{[{\Ra}{\top}]} & \ssrul{}
{\g\Ra\top,\d}
\\
{[{\w}{\Ra}]} & \ssrul{\g,\varphi,\psi\Ra \d}{\g,\varphi\w \psi\Ra \d} &
{[{\Ra}{\w}]} & \ssrul{\g\Ra \varphi,\d\ \ \ \g\Ra \psi,\d}
{\g\Ra \varphi\w \psi,\d}
\\
{[{\vee}{\Ra}]} & \ssrul{\g,\varphi\Ra \d\ \ \ \g,\psi\Ra \d}
{\g,\varphi\vee \psi\Ra \d} &
{[{\Ra}{\vee}]} & \ssrul{\g\Ra \varphi, \psi,\d}{\g\Ra \varphi\vee \psi,\d}
\\[3mm]
{[{\ineg}{\Ra}]} & \ssrul{\g\Ra\varphi, \d}
{\uneg\d,\ineg\varphi\Ra \ineg\g} &
{[{\Ra}{\uneg}]} &
\ssrul{\g,\varphi\Ra \d}{\uneg\d\Ra \uneg\varphi,\ineg\g}
\end{array}\]
\[\begin{array}{ll@{\hspace{2em}}ll}
{[{\wsmile}{\Ra}]} & \ssrul{\g\Ra\varphi, \d\ \ \ \g\Ra\ineg\varphi,\d}
{\g,\wsmile\varphi\Ra\d} &
{[{\Ra}{\wsmile}]} &
\ssrul{\g,\varphi,\ineg\varphi\Ra \d}{\g\Ra\wsmile\varphi,\d}
\\[1mm]
{[{\wfrown}{\Ra}]} & \ssrul{\g\Ra\varphi,\uneg\varphi,\d}
{\g,\wfrown\varphi\Ra\d} &
{[{\Ra}{\wfrown}]} &
\ssrul{\g,\varphi\Ra \d\ \ \ \g,\uneg\varphi\Ra\d}{\g\Ra\wfrown\varphi,\d}
\end{array}\]
}

\noindent
Above, sequents are taken to have the form $\Sigma\Ra\Pi$ where~$\Sigma$ and~$\Pi$ are finite sets of formulas, and given a unary connective~$\#$ and $\Psi\subseteq\mathcal{L}$, by $\#\Psi$ we denote the set $\set{\#\psi\st\psi\in\Psi}$.
We write $S\vdash_{\GPK}s$ to say that there is a derivation in $\GPK$ of a sequent~$s$ from a set~$S$ of sequents.
That establishes a consequence relation between sequents.
A consequence relation between formulas is defined by setting
$\g\vdash_{\GPK}\varphi$ if $\vdash_{\GPK}\g'\Ra\varphi$ for some finite subset~$\g'$ of~$\g$.
The overloaded notation  $\vdash_{\GPK}$ will always be resolved by the pertinent context.
%
Moreover, using a straightforward induction, it is easy to verify that we have $\varphi\Ra\varphi$ for every formula~$\varphi$, and not just for atomic formulas.
This property, called `axiom-expansion' in \cite{Avron:2009} and `id-inductivity' in \cite{sep-connectives-logic}, is sometimes considered important, when designing sequent systems, for a unique characterization of the connectives by a collection of rules. 
It is worth pointing out, nevertheless, that various sequent systems for paraconsistent logics--- e.g., the systems in \cite{Avron2015219}--- do not enjoy this property.

We utilize in what follows the general mechanisms and techniques applicable to the so-called `basic systems' of~\cite{lah:avr:Unified2013}
in order to prove soundness, completeness and cut-admissibility.
Actually, we note that the aforementioned ``axiom expansion" can be also proven using these techniques, however, a simple induction suffices for the particular systems in the present paper.
In~\cite{lah:avr:Unified2013} one may also find a sufficient condition for a  semantic framework to define the same logic as a given basic system. So, instead of using induction on derivations in $\GPK$ for checking soundness, and constructing a canonical model and a maximal theory for completeness, we just {\em write} $\GPK$ as a basic system, and {\em verify} that our models from Section~\ref{sec:intro} satisfy this condition. 

The viewpoint of basic sequent systems enforces the usual distinction between side formulas and principal formulas in sequents. Thus, each sequent is seen as a combination of a `main sequent' (that includes the principal formula) and a `context sequent' (that includes the side formulas).
For an example, in ${[{\Ra}{\vee}]}$, the main sequent of the premise is $\Ra\varphi,\psi$; the main sequent of the conclusion is $\Ra\varphi\vee\psi$; and the context sequent of both is $\g\Ra\d$. Note that in the rules for $\ineg$ and $\uneg$, the context sequent of the premise is different from the context sequent of the conclusion.
Accordingly, \cite{lah:avr:Unified2013} introduces the notion of a \textsl{basic rule}.  Each premise in a basic rule takes the form $\tup{s;\pi}$, where~$s$ is a sequent that corresponds to the main sequent of the premise, and $\pi$ is a relation between singleton-sequents (that is, sequents of the form $\varphi\Ra$ or $\Ra\varphi$), called a \textsl{context relation}, that determines the behavior of the context sequents.
The sequent calculus $\GPK$ may be naturally regarded as a basic system that employs two context relations, namely:
$\pi_0=\set{\tup{q_1\Ra\;;\;q_1\Ra},\tup{\Ra q_1\;;\;\Ra q_1}}$, and
$\pi_1=\set{ \tup{q_1\Ra\;;\;\Ra\ineg q_1}, \tup{\Ra q_1\;;\;\uneg q_1\Ra}}$.
%
The rules of $\GPK$ may then be presented as particular instances of basic rules.
For example, the following are the basic rules for $\wedge,\ineg,\uneg$ and~$\wsmile$:
\smallskip

\noindent
{\small
\(\begin{array}{ll@{\hspace{1.5em}}ll}
{[{\Ra}{\wedge}]} & \tup{\Ra p_1;\pi_0},\tup{\Ra  p_2;\pi_0}\rs \Ra p_1\wedge p_2 &
{[{\wedge}{\Ra}]} & \tup{p_1, p_2\Ra;\pi_0}\rs p_1\wedge p_{2}\Ra \\
{[{\ineg}{\Ra}]} & \tup{\Ra p_1;\pi_1}\rs\ineg p_1\Ra &
{[{\Ra}{\uneg}]} & \tup{p_1\Ra ;\pi_1}\rs\Ra\uneg p_1 \\
{[{\wsmile}{\Ra}]} & \tup{\Ra p_1;\pi_0},\tup{\Ra \ineg p_1;\pi_0}\rs\wsmile p_1\Ra &
{[{\Ra}{\wsmile}]} & \tup{p_1,\ineg p_1\Ra;\pi_0}\rs\Ra\wsmile p_1
\end{array}\)
}
\smallskip

\noindent
The system $\GPK$ employs only one context relation besides $\pi_{0}$, and this context relation uses only one atomic formula $q_{1}$. Moreover, each rule employs exactly one of the context relations. In fact, all systems that we explore in this paper satisfy this property, and their additional context relation is a superset of~$\pi_{1}$ (the only exception being $\GPKF$, below, in which this is only true after~$\ineg$ and~$\uneg$ are identified).
We therefore actually use only a fraction of the full generality \cite{lah:avr:Unified2013} provides.
We shall now use adapted definitions and results that follow from
the latter paper, with the terminology refurbished for this
special case that we are in.

In general, a single\-ton-sequent $x$ will be said to \textsl{relate to a single\-ton-sequent $y$ with respect to a context relation~$\pi$} if there are single\-ton sequents $x'$ and $y'$ such that~$x$ is obtained from~$x'$ and~$y$ is obtained from~$y'$ by replacing $q_{1}$ with some formula~$\varphi$. 
By extension, a sequent will be said to relate to another sequent with respect to~$\pi$ if both sequents can be written as unions of single\-ton-sequents that appropriately relate to one another with respect to~$\pi$.

In applications of $[{\Ra}{\wsmile}]$, the context sequent is left unchanged, as two sequents relate to each other with respect to~$\pi_{0}$ iff they are the same.
In contrast, applications of $[{\ineg}{\Ra}]$ are based on~$\pi_{1}$. Note that a sequent $\g_{1}\Ra\d_{1}$ relates to a sequent $\g_{2}\Ra\d_{2}$ with respect to~$\pi_{1}$ iff $\g_{2}=\uneg\d_{1}$ and $\d_{2}=\ineg\g_{1}$.

Basic systems are endowed with a Kripke semantics. 
The notion of \textit{satisfaction} from Section~\ref{sec:intro} is extended to 
sequents by setting $\M,w\Vdash \g\Ra\d$ if $\M,w\not\Vdash\gamma$ for some $\gamma\in\g$ or $\M,w\Vdash\delta$ for some $\delta\in\d$; in a similar fashion we may talk now about \textit{valid} sequents.
In the general case, each context relation is associated with a certain accessibility relation in frames. For the case of $\pi_{0}$, we take here the trivial relation that consists solely of loops. Actually, this is done implicitly. We simply define the semantic constraints that are associated with this relation {\em locally} in each world. For $\pi_{1}$, on the other hand, we associate the usual accessibility relation found in the models of our frames. 
We stress that the full power of \cite{lah:avr:Unified2013}, that goes far beyond what we need here, 
would require more than one accessibility relation in a single frame. 
--- to wit, the condition $[RR_{r'}]$ (presented below) would need to be verified for each accessibility relation that is associated with the context relations that it employs; furthermore, the condition $[RC_{\pi}]$ (also presented below) would need to be verified for each context relation that occurs in the rules.

For a given basic system $\G$ that employs only one context relation~$\pi$ in addition to $\pi_{0}$, Definitions 4.5 and 4.12 of \cite{lah:avr:Unified2013} impose what we will call here a `$\G$-legal model'.
A model is said to be \textsl{$\G$-legal} if it 
`respects' the basic rules and context relations that constitute~$\G$, where:  \\
${[RR_{r}]}$ \textsl{respecting a basic rule~$r$ that utilizes only $\pi_{0}$} amounts to guaranteeing that in each world the main sequent of the conclusion of~$r$ is satisfied whenever all main sequents of the premises of~$r$ are satisfied;  \\
${[RR_{r'}]}$ \textsl{respecting a basic rule $r'$ that utilizes only some $\pi\neq\pi_0$} amounts to guaranteeing that in each world the main sequent of the conclusion of~$r'$ is satisfied whenever all main sequents of the premises of~$r'$ are satisfied at all accessible worlds; and \\
${[RC_{\pi}]}$ \textsl{respecting a context relation $\pi$} means that the satisfaction of a singleton sequent~$x$ at a world~$u$ implies the satisfaction of a singleton sequent $y$ at a world $w$ whenever~$u$ is accessible from~$w$ and~$x$ relates to~$y$ with respect to~$\pi$.

For an example,  $[RR_{[{\ineg}{\Ra}]}]$  induces the condition:
``if $\M,v\Vdash\;\Ra\varphi$ for every world $v$ such that $w R v$, then $\M,w\Vdash\ineg\varphi\Ra$'',
and this is equivalent to: ``If ${\cal M},w\Vdash \ineg\varphi$  then  ${\cal M}, v\not\Vdash \varphi$ for some $v\in W$ such that $wRv$''.  This amounts to half of clause [S$\ineg$], from Section~\ref{sec:intro}.
Furthermore, $[RC_{\pi_{1}}]$  induces an additional semantic condition: ``if $w R v$ then $\M,w\Vdash\; \Ra\ineg\varphi$ whenever $\M,v\Vdash\varphi\Ra$''.
This amounts to the other half of clause [S$\ineg$], namely:
``${\cal M},w\Vdash \ineg\varphi$  whenever  ${\cal M}, v\not\Vdash \varphi$ for some $v\in W$ such that $wRv$''.
Systematically applying this semantic reading to all rules and all context relations of $\GPK$,
we obtain
the class of all models $\tup{\F,V}$, where~$\F$ is an arbitrary frame and each valuation $V:W\times\setS\rightarrow\set{\fff,\ttt}$ respects the following conditions, for every $w\in W$ and $\varphi,\psi\in\setS$:

\smallskip
\noindent
\begin{tabular}{rl}
	{[$\TT\top$]} & $\tru{w}{\top}$\\[.5mm]
	{[$\FF\bot$]} & $\fal{w}{\bot}$\\[.5mm]
\end{tabular} \\
\begin{tabular}{rl}
	{[$\TT\land$]} & if $\tru{w}{\varphi}$ and $\tru{w}{\psi}$, then $\tru{w}{\varphi\w\psi}$\\
	{[$\FF\land$]} & if $\fal{w}{\varphi}$ or $\fal{w}{\psi}$, then $\fal{w}{\varphi\w\psi}$
	\\[.5mm]
\end{tabular} \\
\begin{tabular}{rl}
	{[$\TT\lor$]} & if $\tru{w}{\varphi}$ or $\tru{w}{\psi}$, then $\tru{w}{\varphi\vee\psi}$\\
	{[$\FF\lor$]} & if $\fal{w}{\varphi}$ and $\fal{w}{\psi}$, then $\fal{w}{\varphi\vee\psi}$
	\\[.5mm]
\end{tabular} \\
\begin{tabular}{rl}
	{[$\TT\ineg$]} & if $\fal{v}{\varphi}$ for some $v\in W$ such that $wRv$, then $\tru{w}{\ineg\varphi}$\\
	{[$\FF\ineg$]} & if $\tru{v}{\varphi}$ for every $v\in W$ such that $wRv$, then $\fal{w}{\ineg\varphi}$
	\\[.5mm]
\end{tabular} \\
\begin{tabular}{rl}
	{[$\TT\uneg$]} & if $\fal{v}{\varphi}$ for every $v\in W$ such that $wRv$, then $\tru{w}{\uneg\varphi}$\\
	{[$\FF\uneg$]} & if $\tru{v}{\varphi}$ for some $v\in W$ such that $wRv$, then $\fal{w}{\uneg\varphi}$
	\\[.5mm]
\end{tabular} \\
\begin{tabular}{rl}
	{[$\TT\wsmile$]} & if $\fal{w}{\varphi}$ or $\fal{w}{\ineg\varphi}$, then $\tru{w}{\wsmile\varphi}$\\
	{[$\FF\wsmile$]} & if $\tru{w}{\varphi}$ and $\tru{w}{\ineg\varphi}$, then $\fal{w}{\wsmile\varphi}$
	\\[.5mm]
\end{tabular} \\
\begin{tabular}{rl}
	{[$\TT\wfrown$]} & if $\fal{w}{\varphi}$ and $\fal{w}{\uneg\varphi}$, then $\tru{w}{\wfrown\varphi}$\\
	{[$\FF\wfrown$]} & if $\tru{w}{\varphi}$ or $\tru{w}{\uneg\varphi}$, then $\fal{w}{\wfrown\varphi}$\\
\end{tabular}
\medskip

\noindent
where we take `$\tru{u}{\alpha}$' as abbreviating `$V(u,\alpha)=\ttt$', and `$\fal{u}{\alpha}$' as abbreviating `$V(u,\alpha)=\fff$'.  If alternatively one just \textit{rewrites} $V(v,\alpha)=\ttt$ as ${\cal M}, v\Vdash \alpha$ and $V(v,\alpha)=\fff$ as ${\cal M}, v\not\Vdash \alpha$, where ${\cal M}=\tup{\tup{W,R},V}$, what results thereby is a collection of conditions that are essentially identical to the [S\#] clauses introduced in our Section~\ref{sec:intro}.
Thus, every model is $\GPK$-legal.

Fix in what follows a model ${\cal M}=\tup{\tup{W,R},V}$.
We say that $w,v\in W$ \textsl{agree with respect to the formula~$\alpha$, according to~$V$}, if
either (both $\tru{w}{\alpha}$ and $\tru{v}{\alpha}$) or else (both $\fal{w}{\alpha}$ and $\fal{v}{\alpha}$).
We say that ${\cal M}$ is \textsl{differentiated} if we have $w=v$ whenever~$w$ and~$v$ agree with respect to every $\alpha\in{\cal L}$, according to~$V\!$.
Now also fix a basic system $\G$ that employs only $\pi_{0}$ and some other context relation $\pi$.
We say that  a $\G$-legal model ${\cal M}$ is  \textsl{$\G$-strengthened}  if  the converse of ${[RC_{\pi}]}$ holds for~$V\!$.  It is worth stressing that the accessibility relation of a $\G$-strengthened model is uniquely determined by the underlying collection of worlds and valuation of this model.

In the case of $\GPK$, we note that $\GPK$-strengthened means that $w R v$ if and only if: (i) $\fal{v}{\varphi}$ implies $\tru{w}{\ineg\varphi}$ and (ii) $\tru{v}{\varphi}$ implies $\fal{w}{\uneg\varphi}$.

\begin{theorem}[Corollary~4.26 in \cite{lah:avr:Unified2013}]
\label{OrisTheorem}
Every basic system $\G$ that employs the context relation~$\pi_{0}$ and some context relation~$\pi$ such that $\pi_{1}\suq\pi$ is sound and complete with respect to any class of  $\G$-legal models that
contains all  differentiated $\G$-strengthened models.
\end{theorem}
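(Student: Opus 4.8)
The plan is to prove the two halves separately, exploiting the way the hypothesis sandwiches the target class: it consists only of $\G$-legal models, which will hand us soundness for free, yet it contains \emph{every} differentiated $\G$-strengthened model, which is exactly the supply of countermodels that completeness needs. Throughout I read $S\vdash_{\G}s$ against its semantic counterpart --- every model of the class that satisfies all sequents of $S$ at all worlds also satisfies $s$ at every world --- and I argue the contrapositive for the completeness direction.

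For soundness I would induct on the length of a derivation of $s$ from $S$. The base cases $[id]$ and the premises in $S$ are immediate, and structural rules are absorbed by the set-based reading of sequents. For the inductive step I invoke $\G$-legality of every model in the class: the conditions $[RR_{r}]$, $[RR_{r'}]$ and $[RC_{\pi}]$ are precisely the assertions that each basic rule carries satisfaction of its premises (at the world, for $\pi_{0}$-rules; at all accessible worlds, for $\pi$-rules) to satisfaction of its conclusion. Hence each rule application preserves validity. Since this argument appeals to nothing beyond legality, it holds verbatim for any subclass of the $\G$-legal models, which is all the hypothesis guarantees.

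For completeness I would build a canonical countermodel. Assume $S\not\vdash_{\G}s$ with $s=\Gamma\Ra\Delta$. Take the worlds to be the complete valuations $V\colon\setS\to\set{\fff,\ttt}$ whose associated ``full sequent'' $\set{\alpha\st\tru{}{\alpha}}\Ra\set{\alpha\st\fal{}{\alpha}}$ is not derivable from $S$ in $\G$ (obtained by a Lindenbaum-style extension of any consistent seed); let each world carry itself as its local valuation, and define $wRv$ to hold exactly when, for all $\varphi$, $\fal{v}{\varphi}$ implies $\tru{w}{\ineg\varphi}$ and $\tru{v}{\varphi}$ implies $\fal{w}{\uneg\varphi}$ --- the $\GPK$-strengthened characterisation recorded just before the theorem. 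By construction the model is differentiated (distinct worlds are distinct valuations) and $\G$-strengthened (both $[RC_{\pi_{1}}]$ and its converse hold by the very definition of $R$), so it belongs to the class by hypothesis; and because $\pi_{1}\subseteq\pi$, the same relation $R$ witnesses $[RC_{\pi}]$ for the full context relation. Underivability of the full sequents forces every world to satisfy each sequent of $S$, while seeding with $\Gamma$ true and $\Delta$ false yields a world falsifying $s$.

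The hard part will be the modal clause of the accompanying truth lemma, equivalently the verification of $[RR_{r'}]$ for the rules $[{\ineg}{\Ra}]$ and $[{\Ra}{\uneg}]$ at this canonical model: one must show that $\tru{w}{\ineg\varphi}$ genuinely produces an accessible $v$ with $\fal{v}{\varphi}$ (and dually for $\uneg$). Concretely, one must extend the $\pi_{1}$-demand that $w$ imposes into a \emph{complete} underivable valuation that is $R$-related to $w$; this is where $\pi_{1}\subseteq\pi$ is essential, since it ensures the context relation transmits enough of the $\ineg/\uneg$-commitments between $w$ and its successor, and it is the step demanding the delicate Lindenbaum argument (the purely propositional cases of $[RR_{r}]$ follow routinely from maximal consistency of each valuation). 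Once this existence lemma delivers the truth lemma, the distinguished world refutes $s$ inside a model of the class while validating all of $S$, which closes the contrapositive and hence yields completeness.
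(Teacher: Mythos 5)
There is a genuine gap, and it sits exactly where the generality of the statement matters. (For context: the paper itself offers no proof of this theorem --- it is imported wholesale as Corollary~4.26 of \cite{lah:avr:Unified2013}, the stated methodology being precisely to avoid redoing the canonical-model work you attempt; so your sketch is measured against the external proof, whose broad shape it does mirror.) The theorem quantifies over \emph{every} basic system $\G$ whose second context relation $\pi$ merely contains $\pi_{1}$, and ``$\G$-strengthened'' means the converse of $[RC_{\pi}]$ holds for that \emph{full} $\pi$. You define the canonical accessibility relation by the $\pi_{1}$-clauses alone ($wRv$ iff $\fal{v}{\varphi}$ implies $\tru{w}{\ineg\varphi}$ and $\tru{v}{\varphi}$ implies $\fal{w}{\uneg\varphi}$) and then assert that ``because $\pi_{1}\suq\pi$, the same relation $R$ witnesses $[RC_{\pi}]$.'' The inclusion runs the wrong way: enlarging $\pi$ \emph{adds} transfer conditions, so the relation carved out by the $\pi_{1}$-conditions alone is in general too big. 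Take $\G=\GPKB$, whose $\pi_{3}$ additionally contains $\tup{\ineg q_{1}\Ra\;;\;\Ra q_{1}}$ and $\tup{\Ra\uneg q_{1}\;;\;q_{1}\Ra}$: a pair $wRv$ in your canonical model may have $\fal{v}{\ineg\varphi}$ while $\fal{w}{\varphi}$, violating $[RC_{\pi_{3}}]$; the $\pi_{4}$-clauses of $\GPKFour$ fail analogously. Your model is then neither $\G$-legal nor $\G$-strengthened, so the hypothesis that the class contains all differentiated $\G$-strengthened models buys you nothing, and the countermodel never lands inside the class. The repair is to define $wRv$ by the transfer conditions for \emph{all} pairs in $\pi$ (which makes both $[RC_{\pi}]$ and its converse hold by fiat); the true role of $\pi_{1}\suq\pi$ is elsewhere, namely in the existence lemma, where it guarantees that the seed a world $w$ imposes on a prospective successor contains the $\ineg$/$\uneg$ commitments to which the modal rules of $\G$ can be applied, so that derivability of a finite fragment of that seed would contradict the underivability of $w$'s maximal theory.

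Second, that existence lemma --- producing, from $\tru{w}{\ineg\varphi}$, a maximal underivable valuation $v$ with $\fal{v}{\varphi}$ that meets all $\pi$-transfer constraints --- \emph{is} the content of the completeness direction, and your proposal only announces it (``the hard part will be\dots'') without executing the Lindenbaum argument. As written, the verification of $[RR_{r'}]$ for the modal rules at the canonical model, hence the truth lemma, hence completeness, is unproven even after the accessibility relation is corrected. The soundness half, by contrast, is essentially right, though stated loosely: for a $\pi$-rule the conclusion's context is transformed by the context relation, so refuting the conclusion at $w$ requires using $[RR_{r'}]$ to obtain a successor refuting the premise's main sequent and then transporting the context formulas back to $w$ along $[RC_{\pi}]$ --- the two conditions must be invoked jointly, not as a single ``preservation'' clause.
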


Since the class of $\GPK$-legal models is the same as the class of models, we have in particular that the class of models is a class of $\GPK$-legal models that includes all  differentiated 	$\GPK$-strengthened  models.      
The following result from \cite{dod:mar:ENTCS2013} comes as a byproduct:

\begin{corollary}\label{soundnessAndCompletenessForSmiles}
$\g\models_{\E}\varphi$ iff $\g\vdash_{\GPK}\varphi$ for every $\g\cup\set{\varphi}\suq\setS$, where ${\cal E}$ denotes the class of all frames.
\end{corollary}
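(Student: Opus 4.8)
The plan is to read the corollary off \Cref{OrisTheorem} by (i) checking that $\GPK$ meets its hypotheses and (ii) translating its sequent-level conclusion into the stated entailment between formulas. First I would instantiate \Cref{OrisTheorem} with $\G=\GPK$. As recorded above, $\GPK$ is a basic system employing $\pi_{0}$ together with the context relation $\pi_{1}$, so choosing $\pi=\pi_{1}$ makes $\pi_{1}\suq\pi$ hold trivially. Moreover the class of all models was shown to coincide with the class of $\GPK$-legal models, and therefore it contains in particular every differentiated $\GPK$-strengthened model. Hence \Cref{OrisTheorem} applies and delivers soundness and completeness of $\GPK$ relative to the class of all models: a finite sequent $s$ is valid in every model precisely when $\vdash_{\GPK}s$.

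Next I would unfold both sides of the desired equivalence down to this sequent-level fact. Taking $\Delta=\set{\varphi}$ in the definition of $\models_{\E}$, the assertion $\g\models_{\E}\varphi$ says exactly that $\M,w\Vdash\g\Ra\varphi$ at every world $w$ of every model $\M$; that is, $\g\models_{\E}\varphi$ is the validity of the (possibly infinite) sequent $\g\Ra\varphi$. On the proof-theoretic side, $\g\vdash_{\GPK}\varphi$ holds iff $\vdash_{\GPK}\g'\Ra\varphi$ for some finite $\g'\suq\g$, which by the completeness just obtained is equivalent to the validity of $\g'\Ra\varphi$ for some finite $\g'\suq\g$. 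The soundness direction is then immediate: given a finite $\g'\suq\g$ with $\g'\Ra\varphi$ valid, any world satisfying every formula of $\g$ a fortiori satisfies every formula of $\g'$, so it must satisfy $\varphi$; thus $\g\Ra\varphi$ is valid and $\g\models_{\E}\varphi$.

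The completeness direction is the only point of substance. Here one must pass from validity of the infinitary sequent $\g\Ra\varphi$ to validity of some finite sub-sequent $\g'\Ra\varphi$, since only the latter is an object of $\GPK$ on which the completeness half of \Cref{OrisTheorem} can be brought to bear. This is precisely a compactness property of $\models_{\E}$, and I expect it to be the main obstacle: one needs that whatever is entailed by $\g$ is already entailed by a finite part of $\g$. I would secure it by appealing to the framework of \cite{lah:avr:Unified2013} itself, whose completeness is established for the consequence relation (via a maximal-theory construction that handles arbitrary left-hand sides directly) and whose derivations are finite; the witnessing finite $\g'$ is then produced together with the derivation, so that no separate compactness argument is required and the corollary genuinely falls out as a byproduct. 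Combining the two directions yields $\g\models_{\E}\varphi$ iff $\g\vdash_{\GPK}\varphi$, as claimed.

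Should one instead wish to keep the argument entirely self-contained at the single-sequent level, the same reduction could be carried out by proving compactness of $\models_{\E}$ by hand, assembling from the counter-models refuting each $\g'\models_{\E}\varphi$ (finite $\g'\suq\g$) a single model refuting $\g\models_{\E}\varphi$; but I regard invoking the framework's consequence-level completeness as the cleaner route, consistent with the authors' stated aim of sidestepping canonical-model constructions.
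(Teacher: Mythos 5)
Your proposal is correct and takes essentially the same route as the paper: instantiate \Cref{OrisTheorem} with $\G=\GPK$ and $\pi=\pi_{1}$, using the fact that the class of all models coincides with the class of $\GPK$-legal models (hence contains all differentiated $\GPK$-strengthened models), and then read the formula-level equivalence off the definition of $\vdash_{\GPK}$ between formulas. The paper states this in one line, leaving the infinite-$\g$ (compactness) point implicit in its appeal to Corollary~4.26 of \cite{lah:avr:Unified2013} and to \cite{dod:mar:ENTCS2013}, so your explicit treatment of that point is a faithful elaboration of the same argument rather than a different one.
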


\noindent 
Indeed, \Cref{OrisTheorem} provides a mechanism that will be conveniently reutilized in the \Cref{sec:extensions}, when we consider extensions of $\GPK$.  

Two brief comments are in order here.
First, our valuation functions
assign truth-values to \emph{every} formula in every world.  However, as the values of compound formulas are uniquely determined by the values of their subformulas, we could have rested content above with assigning truth-values to propositional variables.
Second, given that for the above valuations $\tru{u}{\alpha}$ is the case iff $\fal{u}{\alpha}$ fails to be the case, the semantic conditions [$\TT\#$] and [$\FF\#$], for each connective~$\#$, are clearly the converse of each other.
In setting the two conditions apart,
we have just given them directionality, pointing from less complex to more complex formulas, and have separated between conditions induced by rules from those induced by context relations.
While neither of these manoeuvres are very useful here, they will allow us to more easily relate, in Section~\ref{sec:analyticity}, valuations to `quasi valuations' that have non-truth-functional semantics.
In what follows we will first show that $\GPK$ is proof-theoretically very well-behaved, and have a look next at the extensions of $\GPK$.

\section{(Almost) Free Lunch: cut-admissibility and analyticity}
\label{sec:analyticity}

In this section we make further use of the powerful machinery introduced in~\cite{lah:avr:Unified2013} to prove that PK enjoys strong cut-admissibility,
in other words, we show that $S\vdash_{\GPK}s$ implies that there is a derivation in $\GPK$ of the sequent~$s$ from the set of sequents~$S$ such that in every application of the cut rule the cut formula~$\varphi$ appears in~$S$.
In particular, $\vdash_{\GPK}s$ implies that~$s$ is derivable in $\GPK$ without any use of the cut rule.

Cut-admissibility in sequent calculi is traditionally proved `through syntactical means', using induction on the derivation length and the complexity of the cut formulas (see, e.g., \cite{Ge34}).
However, such proofs are usually tedious and error-prone.
One of the major contributions of \cite{lah:avr:Unified2013}, that is used here as a substitute of a `syntactic proof', is a \textit{semantic} criterion for cut-admissibility, that allows for a smoother semantic proof of cut-admissibility. Moreover, the modularity of the semantic approach will make it straightforward to adapt the results of the present section to other variants of $\GPK$, that are investigated in the subsequent section. 

The proof is done in two steps. \textit{First}, we present an adequate
semantics for the cut-free fragment of $\GPK$.
\textit{Second}, we show that a countermodel in this new semantics entails the existence of a countermodel in the form of a Kripke model as defined in the previous section. This, together with Corollary~\ref{soundnessAndCompletenessForSmiles}, entails that $\GPK$ is equivalent to its cut-free fragment.

\subsection*{Step 1. Semantics for cut-free $\GPK$}

Semantics for cut-free basic systems may be obtained
through the use of `quasi valuations'.
Models based on quasi valuations differ from ordinary models in two main aspects:
$(a)$ the underlying interpretation is three-valued;
$(b)$ the underlying interpretation is non-deterministic --- the truth-value of a compound formula in a given world is \textit{not} always uniquely determined by the truth values of its subformulas in the collection of worlds of the underlying frame.

Concretely,
given a basic system $\G$ with a context relation $\pi$, and a frame $\F=\tup{W,R}$, a \textsl{quasi valuation} over $\F$ is a function $QV:W\times\setS\rightarrow \set{\set{\fff},\set{\ttt},\set{\fff,\ttt}}$. We call $\tup{\F,QV}$ a \textsl{quasi model}. 
We say that~$QV$ \textsl{satisfies~$\varphi$ at~$w$}, and denote this by ${\cal M}, w\Vdash \varphi$ if  $\ttt\in QV(w,\varphi)$ (instead of $QV(w,\varphi)=\ttt$, that we would have had in case this was an ordinary model).
$\tup{\F,QV}$ is $\G$-legal if it respects all rules of $\G$ and respects~$\pi$. This amounts to conditions ${[RR_{r}]}$ and ${[RC_{\pi}]}$, while taking into account the refined notion of satisfation.
The notions of a differentiated quasi model and of a $\G$-strengthened  quasi model are defined as before, but using the new notion of satisfaction.

To be sure, in the case of $\GPK$, this means satisfying precisely the same semantic conditions laid down in Section~\ref{sec:proofsystem}, where we now take `$\tru{u}{\alpha}$' as abbreviating `$\ttt\in V(u,\alpha)$', and `$\fal{u}{\alpha}$' as abbreviating `$\fff\in V(u,\alpha)$'.
Whenever we need to distinguish between a semantic condition on a tuple $\langle w,\varphi\rangle$ as constraining a valuation~$V$ or a quasi valuation $QV\!$, we will use $\xval{w}{\varphi}$ for the former and  $\xvalQ{w}{\varphi}$ for the latter, where $\XX\in\{\TT,\FF\}$.

\subsection*{Step 2. Semantic proof of cut-admissibility}

The next step is to show that the existence of a countermodel in the form of a $\G$-strengthened differentiated quasi model implies the existence a countermodel in the form of a $\G$-legal model (which, in the case of $\GPK$ simply means an ordinary model).
For this purpose we 
define
an \textsl{instance} of a quasi model $\Q\M=\tup{\tup{W,R},QV}$ as any model of the form $\M=\tup{\tup{W,R'},V}$ such that
$\xvalQ{w}{\varphi}$ whenever $\xval{w}{\varphi}$, for every $\XX\in\{\TT,\FF\}$,
every $w\in W$ and every $\varphi\in{\cal L}$.  Note that a quasi model and its instances may have different accessibility relations.

Indeed, from \cite{lah:avr:Unified2013} we have:

\begin{theorem}[Corollary 5.48 of \cite{lah:avr:Unified2013}]
For every basic system $\G$, if every $\G$-strengthened differentiated quasi model  has a $\G$-legal instance, then $\G$ enjoys strong  cut-admissibility. 
\end{theorem}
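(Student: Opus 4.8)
The plan is to prove strong cut-admissibility semantically, one level finer than \Cref{OrisTheorem}. Recall that the full system $\G$ is adequate for its two-valued $\G$-legal models (the content underlying \Cref{OrisTheorem}). I would match this with an adequacy result at the level of quasi models: for a fixed set $S$ of sequents, a sequent $s$ is derivable from $S$ \emph{using cuts only on formulas occurring in $S$} if and only if $s$ is valid in every $\G$-legal quasi model that validates $S$ and is two-valued (singleton-valued) on the formulas occurring in $S$. The soundness half is a routine rule-by-rule check; the salient point is that an \emph{unrestricted} cut is \emph{not} quasi-sound, since a formula $\varphi$ with $QV(w,\varphi)=\set{\fff,\ttt}$ enjoys both $\truQ{w}{\varphi}$ and $\falQ{w}{\varphi}$, and can thus witness the right premise $\g\Ra\varphi,\d$ and the left premise $\g,\varphi\Ra\d$ at once while $\g\Ra\d$ is refuted — a phenomenon blocked exactly when $\varphi$ occurs in $S$ and is forced to be two-valued. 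The completeness half I would obtain by rerunning the canonical-model construction behind \Cref{OrisTheorem} with quasi valuations in place of valuations, yielding, for any such non-derivable $s$, a $\G$-strengthened differentiated quasi model, two-valued on $S$-formulas, that validates $S$ and refutes $s$.

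Every two-valued $\G$-legal model is in particular a $\G$-legal quasi model, so the inclusion from restricted to unrestricted derivability is trivial and the content of the theorem is to manufacture a two-valued countermodel out of a quasi one. This is delivered by the hypothesis through a reduction lemma: a $\G$-legal instance $\M$ of the quasi countermodel $\Q\M$ just produced is itself a two-valued countermodel, validating $S$ and refuting $s$. Refutation transfers because a sequent is refuted in $\Q\M$ only when, at the refuting world, its antecedent formulas carry the determinate value $\set{\ttt}$ and its succedent formulas the determinate value $\set{\fff}$; the defining property of an instance, namely $\xval{w}{\varphi}\Rightarrow\xvalQ{w}{\varphi}$, then pins $V$ to these same values, so $\M$ refutes $s$ there as well. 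Validity of $S$ transfers because $\Q\M$ is two-valued on $S$-formulas: on those formulas the quasi valuation is already a genuine valuation, the instance must coincide with it, and every sequent of $S$ retains the truth-value it had in $\Q\M$.

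I would then assemble the chain. Assume $S\vdash_\G s$; by soundness of $\G$ (as in \Cref{OrisTheorem}), every two-valued $\G$-legal model that validates $S$ also validates $s$. If $s$ were \emph{not} derivable from $S$ with cuts confined to $S$-formulas, the quasi-completeness of the first step would yield a $\G$-strengthened differentiated quasi model validating $S$ and refuting $s$; the hypothesis of the theorem supplies a $\G$-legal instance of it, and the reduction lemma turns this into a two-valued $\G$-legal model that validates $S$ yet refutes $s$ — contradicting soundness. Hence $s$ is derivable from $S$ with cuts only on $S$-formulas, which is exactly strong cut-admissibility; taking $S$ empty recovers the fully cut-free case noted earlier.

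The main obstacle is the reduction lemma, hand in hand with the canonical quasi-model that feeds it. Cut-free soundness and the maximal-theory construction are faithful adaptations of the two-valued development already used for \Cref{OrisTheorem}, but passing from a quasi countermodel to a genuine two-valued one is delicate: an instance records only $\xval{w}{\varphi}\Rightarrow\xvalQ{w}{\varphi}$ and so discards how an undetermined entry $\set{\fff,\ttt}$ is to be resolved, whence one must verify that the particular $\G$-legal instance granted by the hypothesis keeps the refuting world refuting and every sequent of $S$ satisfied. This is precisely where \emph{strengthening} and \emph{differentiation} enter — the former forcing the accessibility relation of $\Q\M$ to be determined by its quasi valuation, so that the non-local conditions can be carried over, the latter separating distinct worlds — and it is the reason the theorem must \emph{assume} the existence of a $\G$-legal instance rather than construct one in general.
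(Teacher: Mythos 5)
Your proposal is correct and is essentially the proof of the cited result: the paper itself offers no argument for this statement (it is imported verbatim as Corollary 5.48 of \cite{lah:avr:Unified2013}), and the proof there runs exactly along your chain --- adequacy of derivability-with-cuts-confined-to-$S$-formulas with respect to $\G$-legal quasi models determinate on the formulas of $S$ (the canonical saturated-pair construction yielding a $\G$-strengthened differentiated quasi countermodel), then transfer of the refutation of $s$ and the validity of $S$ through the hypothesized $\G$-legal instance, played off against soundness of the full system. You also correctly adopted the fulfillment reading of quasi satisfaction of sequents (antecedents witnessed by $\FF^{Q}$, succedents by $\TT^{Q}$), which is what makes unrestricted cut quasi-unsound and forces a refuted sequent to carry determinate values $\set{\ttt}$ and $\set{\fff}$ at the refuting world --- a point the paper's one-line definition of quasi satisfaction leaves underspecified but which is exactly the definition used in the cited framework.
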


And in particular:

\begin{corollary}
If every $\GPK$-strengthened differentiated quasi model satisfying the semantic conditions $[\TT\#]$ and $[\FF\#]$ from Section~\ref{sec:proofsystem} has an instance (that satisfies the same conditions), then $\GPK$ enjoys strong cut-admissibility.
\end{corollary}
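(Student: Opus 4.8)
The plan is to read this corollary as nothing more than the specialization of the preceding general theorem (Corollary~5.48 of~\cite{lah:avr:Unified2013}) to the particular basic system $\G=\GPK$. All that is really needed is to match the two hypotheses by unfolding what ``$\GPK$-legal'' means in this concrete setting, for both ordinary models and quasi models.

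First I would invoke the identification, already established above, between $\GPK$-legality and the explicit semantic clauses. For ordinary models this is the observation that \emph{every} model is $\GPK$-legal, i.e.\ that respecting the basic rules $[RR_{r}]$ and the context relation $[RC_{\pi_{1}}]$ (together with the local conditions attached to $\pi_{0}$) is exactly the same as satisfying the conditions $[\TT\#]$ and $[\FF\#]$ from Section~\ref{sec:proofsystem}. For quasi models the very same equivalence holds, now under the refined notion of satisfaction in which $\tru{u}{\alpha}$ abbreviates $\ttt\in QV(u,\alpha)$ and $\fal{u}{\alpha}$ abbreviates $\fff\in QV(u,\alpha)$; this is precisely what was recorded in Step~1.

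Next I would use these identifications to rewrite the antecedent of the general theorem. A $\GPK$-strengthened differentiated quasi model is by definition a $\GPK$-legal quasi model that is differentiated and for which the converse of $[RC_{\pi_{1}}]$ holds; since $\GPK$-legality \emph{is} satisfaction of $[\TT\#]$ and $[\FF\#]$, the extra clause ``satisfying the semantic conditions $[\TT\#]$ and $[\FF\#]$'' appearing in the corollary is automatically in force and merely makes this explicit. Dually, an instance of a quasi model is by construction an ordinary model, and every ordinary model is $\GPK$-legal, i.e.\ satisfies $[\TT\#]$ and $[\FF\#]$; hence ``a $\GPK$-legal instance'' and ``an instance that satisfies the same conditions'' denote exactly the same objects. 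The antecedent of the corollary is therefore literally the antecedent of the general theorem.

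With the hypotheses matched, the conclusion is immediate: the general theorem yields that $\GPK$ enjoys strong cut-admissibility. I do not expect any genuine obstacle here; the only point requiring care is the bookkeeping that keeps the ordinary and quasi readings of $\tru{}{}$ and $\fal{}{}$ apart, which is exactly the reason the notation $\xval{w}{\varphi}$ versus $\xvalQ{w}{\varphi}$ was introduced. The substance of the cut-admissibility argument resides entirely in the general theorem and, one step further back, in actually verifying its standing hypothesis for $\GPK$ --- namely that every $\GPK$-strengthened differentiated quasi model really does admit an instance --- but that verification is the task of the next step, not of this corollary.
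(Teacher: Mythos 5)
Your proposal is correct and matches the paper's treatment exactly: the paper states this corollary as an immediate instantiation of Corollary~5.48 of \cite{lah:avr:Unified2013} to $\G=\GPK$, relying on precisely the identifications you spell out --- that $\GPK$-legality (for both quasi models and ordinary models) is the same as satisfying the conditions $[\TT\#]$ and $[\FF\#]$ under the appropriate reading of $\tru{}{}$ and $\fal{}{}$, so that ``$\GPK$-legal instance'' and ``instance satisfying the same conditions'' coincide. You also correctly locate the real mathematical work (verifying the hypothesis via Lemma~\ref{quasinstance}) outside this corollary, just as the paper does.
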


\noindent 
Do note that the latter corollary uses the notion of satisfaction both for quasi-models and for ordinary models.
In what follows, the construction of appropriate instances is done by a recursive definition over the well-founded relation~$\wfr$ on the set of formulas, taken to be the smallest transitive relation satisfying the following:
$(i)$  if $\alpha$ is a proper subformula of $\beta$ then $\alpha\wfr\beta$;
$(ii)$ $\ineg\gamma\wfr\wsmile\gamma$ for every $\gamma\in{\cal L}$;
and $(iii)$ $\uneg\gamma\wfr\wfrown\gamma$ for every $\gamma\in{\cal L}$.
In what follows, $\alpha\wfreq\beta$ abbreviates $\alpha\wfr\beta\vee\alpha=\beta$.

\begin{lemma}
\label{quasinstance}
	Every  quasi model has an  instance.
\end{lemma}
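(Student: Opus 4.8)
The plan is to build the instance on the very same frame, taking $R'=R$, and to let the two-valued valuation $V$ be obtained by ``selecting'' a value inside $QV$ along the well-founded relation $\wfr$. Concretely, I would define $V(w,\cdot)$ for all worlds $w\in W$ simultaneously, by recursion on $\wfr$: on a propositional variable $p$ I pick any value $V(w,p)\in QV(w,p)$ (possible since the set $QV(w,p)$ is nonempty); I set $\tru{w}{\top}$ and $\fal{w}{\bot}$; and on a compound formula I let its value be the one \emph{forced} by the ordinary, deterministic reading of the clauses $[\TT\#]$/$[\FF\#]$ applied to the already-defined values of its $\wfr$-predecessors — recall that for a two-valued $V$ the pair $[\TT\#]$/$[\FF\#]$ pins the value down uniquely. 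The target property is the instance inclusion $V(w,\varphi)\in QV(w,\varphi)$ for every $w$ and $\varphi\in\setS$, i.e.\ $\xvalQ{w}{\varphi}$ whenever $\xval{w}{\varphi}$. (Here I of course rely on the quasi model being $\GPK$-legal, so that $QV$ satisfies the quasi-versions of $[\TT\#]$/$[\FF\#]$; otherwise no instance can exist.)

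By construction $\M=\tup{\tup{W,R},V}$ is then an ordinary model, since the deterministic clauses I use are exactly equivalent to the conjunction of $[\TT\#]$ and $[\FF\#]$ for two-valued valuations; hence the only thing left to verify is the instance inclusion, which I would prove by $\wfr$-induction. The base cases are immediate. For the inductive step I treat each connective separately, splitting on the value $V$ assigns. The propositional cases are routine: for instance, if $\tru{w}{\varphi\w\psi}$ then $\tru{w}{\varphi}$ and $\tru{w}{\psi}$, whence $\truQ{w}{\varphi}$ and $\truQ{w}{\psi}$ by the induction hypothesis and $\truQ{w}{\varphi\w\psi}$ by $[\TT\land]$; the $\fff$-case uses $[\FF\land]$, and $\lor$ is dual.

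The main obstacle lies in the modal clauses for $\ineg$ and $\uneg$, where the forced value is determined by quantifying over the $R$-accessible worlds rather than locally. The key is that, since $\varphi\wfr\ineg\varphi$, the induction hypothesis already gives $V(v,\varphi)\in QV(v,\varphi)$ at \emph{every} world $v$; so if $\tru{w}{\ineg\varphi}$, the witnessing $v$ with $wRv$ and $\fal{v}{\varphi}$ also satisfies $\falQ{v}{\varphi}$, and $[\TT\ineg]$ yields $\truQ{w}{\ineg\varphi}$, while if $\fal{w}{\ineg\varphi}$ then $\tru{v}{\varphi}$ for all $R$-accessible $v$, hence $\truQ{v}{\varphi}$ for all of them, and $[\FF\ineg]$ yields $\falQ{w}{\ineg\varphi}$; the case of $\uneg$ is symmetric, via $[\TT\uneg]$/$[\FF\uneg]$. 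The adjustment connectives $\wsmile$ and $\wfrown$ are again local, but they depend not only on $\varphi$ but also on $\ineg\varphi$ (resp.\ $\uneg\varphi$) at the same world; this is precisely what clauses $(ii)$ and $(iii)$ defining $\wfr$ secure, guaranteeing that $V(w,\ineg\varphi)$ is available (and already shown to lie in $QV(w,\ineg\varphi)$) before $V(w,\wsmile\varphi)$ is computed, so that $[\TT\wsmile]$/$[\FF\wsmile]$ can be applied. The one genuinely delicate point to double-check is the well-foundedness of $\wfr$, because the steps $\ineg\gamma\wfr\wsmile\gamma$ and $\uneg\gamma\wfr\wfrown\gamma$ preserve formula length; but any such length-preserving step must be immediately followed by a strictly length-decreasing subformula step, so no infinite $\wfr$-descending chain exists and the recursion is legitimate.
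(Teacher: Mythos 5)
Your proof is correct, and it is essentially the paper's construction with the two halves of the work swapped. The paper likewise keeps $R'=R$ and recurses along the same relation $\wfr$, but it defines $V(w,\varphi)$ by choosing \emph{inside} $QV(w,\varphi)$: the forced value when $QV(w,\varphi)$ is a singleton (its clauses (R1)/(R2)), and your deterministic value (its clauses (M1)--(M7)) only in the genuinely non-deterministic case $QV(w,\varphi)=\set{\ttt,\fff}$. The instance inclusion is then trivial by construction, and the induction goes into checking that $\tup{\F,V}$ is a model; you instead take $V$ to be the unique deterministic valuation generated by atomic choices inside $QV$, so modelhood is trivial and the induction checks the inclusion $V(w,\varphi)\in QV(w,\varphi)$. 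The two verifications are contrapositive readings of the same implications, invoking the quasi-versions of $[\TT\#]$/$[\FF\#]$ at exactly the same points (and, as you note, the clauses $\ineg\gamma\wfr\wsmile\gamma$ and $\uneg\gamma\wfr\wfrown\gamma$ are precisely what make $V(w,\ineg\psi)$ available, with its inclusion already established, when $V(w,\wsmile\psi)$ is computed); indeed, legality forces the paper's (R1)/(R2) values to coincide with the deterministic ones, so with matching atomic choices the two constructions output the same valuation. Your packaging buys a cleaner division of labor --- no [S\#]-clause needs re-verification and the reliance on legality is isolated in a single inclusion statement --- whereas the paper's packaging is the one reused for the variants in Section 5 (e.g.\ \Cref{func-instance}, where only (M4)/(M7) are altered and $R'$ becomes a proper sub-function of $R$). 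Two small remarks: your explicit argument for the well-foundedness of $\wfr$ is correct and supplies a detail the paper leaves implicit; on the other hand, the parenthetical ``otherwise no instance can exist'' overstates the role of legality, since a non-legal quasi model may still have an instance --- e.g.\ on a one-world irreflexive frame, let $QV$ be $\set{\ttt,\fff}$ everywhere except $QV(w,p\w q)=\set{\ttt}$; this violates the quasi-version of $[\FF\land]$, yet choosing $\tru{w}{p}$ and $\tru{w}{q}$ yields an instance. This slip is harmless: $\GPK$-legality is what the lemma's intended reading supplies, and the paper's own statement is loose in exactly the same way, its proof too beginning by invoking legality.
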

\begin{proof}
	Let $\Q\M=\tup{\F,QV}$ be a quasi model based on a frame $\F=\tup{W,R}$.
We set up now an appropriate valuation $V:W\times\setS\rightarrow\set{\fff,\ttt}$.
For every world~$w$ and formula~$\varphi$, the valuation~$V$ is inductively defined (with respect to~$\wfr$) on~$\varphi$ as follows:
(R1) if $\truQ{w}{\varphi}$ fails for $QV\!$, we postulate $\fal{w}{\varphi}$ to be the case for~$V$;
(R2) if $\falQ{w}{\varphi}$ fails for $QV\!$, we postulate $\tru{w}{\varphi}$ to be the case for~$V$;
(R3) otherwise both $\truQ{w}{\varphi}$ and $\falQ{w}{\varphi}$ hold good for $QV\!$, and in this case we postulate $\tru{w}{\varphi}$ to be the case for~$V$ if one of the following holds:
\smallskip

\noindent
\begin{tabular}{rl}
  {(M1)} & $\varphi$ is a propositional variable or $\varphi$ is $\top$\\
  {(M2)} & $\varphi=\varphi_{1}\w\varphi_{2}$, and both $\tru{w}{\varphi_{1}}$ and $\tru{w}{\varphi_{2}}$\\
  {(M3)} & $\varphi=\varphi_{1}\vee\varphi_{2}$, and either $\tru{w}{\varphi_{1}}$ or $\tru{w}{\varphi_{2}}$\\
  {(M4)} & $\varphi=\ineg\psi$, and $\fal{v}{\psi}$ for some $v\in W$ such that $w R v$\\
  {(M5)} & $\varphi=\uneg\psi$, and $\fal{v}{\psi}$ for every $v\in W$ such that $w R v$\\
  {(M6)} & $\varphi=\wsmile\psi$, and either $\fal{w}{\psi}$ or $\fal{w}{\ineg\psi}$\\
  {(M7)} & $\varphi=\wfrown\psi$, and both $\fal{w}{\psi}$ and $\fal{w}{\uneg\psi}$\\
\end{tabular}\smallskip

\noindent
Otherwise, we postulate $\fal{w}{\varphi}$ to be the case for~$V\!$.
Obviously, $\xval{w}{\varphi}$ implies $\xvalQ{w}{\varphi}$ for every $w\in W\!$, every $\varphi\in{\cal L}$ and every $\XX\in\{\TT,\FF\}$.
It is a routine task to verify that $\tup{\F,V}$ is a model. 
We show here that the semantic conditions for $\ineg$ and $\wsmile$ hold:\\
\noindent
\textbf{[Case of $\ineg$]}
Let $\psi\in{\cal L}$.
Suppose first that $\fal{v}{\psi}$ is the case for some~$v\in W$ such that $wRv$.  Then $\falQ{v}{\psi}$.  Since $\Q\M$ is  $\GPK$-legal, then $\truQ{w}{\ineg\psi}$ is the case.  If, on the one hand, $\falQ{w}{\ineg\psi}$ fails, then we must have $\tru{w}{\ineg\psi}$, by (R2).  If, on the other hand, neither $\truQ{w}{\ineg\psi}$ nor $\falQ{w}{\ineg\psi}$ fail, we are in case (R3).  Since we have $\fal{v}{\psi}$ and $wRv$ we conclude by (M4) that $\tru{w}{\ineg\psi}$ must be the case.
Suppose now that $\tru{v}{\psi}$ is the case for every world~$v$ such that $wRv$.  Then we have $\truQ{v}{\psi}$ for every such world.  Since $\Q\M$ is $\GPK$-legal, it follows that $\falQ{w}{\ineg\psi}$ is the case.  If, on the one hand, $\truQ{w}{\ineg\psi}$ fails, then we must have $\fal{w}{\ineg\psi}$, by (R1). If, on the other hand, neither $\truQ{w}{\ineg\psi}$ nor $\falQ{w}{\ineg\psi}$ fail, we are in case (R3).  Since we have $\tru{v}{\psi}$ for every world $v$ such that $wRv$ we conclude that none of (M1)--(M7) applies, thus $\fal{w}{\ineg\psi}$ is to be the case.  \\
\textbf{[Case of }$\wsmile$\textbf{]}
Let $\psi\in\setS$.
Suppose first that either $\fal{w}{\psi}$ or $\fal{w}{\ineg\psi}$ are the case for some $w\in W$. Then either $\falQ{w}{\psi}$ or $\falQ{w}{\ineg\psi}$. Since $\Q\M$ is $\GPK$-legal, it follows that $\truQ{w}{\wsmile\psi}$. If, on the one hand, $\falQ{w}{\wsmile\psi}$ fails, then we must have $\tru{w}{\wsmile\psi}$, by (R2). If, on the other hand, neither $\truQ{w}{\wsmile\psi}$ nor $\falQ{w}{\wsmile\psi}$ fail, we are in case (R3) and we conclude by (M6) that $\tru{w}{\wsmile\psi}$ must be the case.
Suppose now that both $\tru{w}{\psi}$ and $\tru{w}{\ineg\psi}$ are the case for some $w\in W$. Then $\truQ{w}{\psi}$ and $\truQ{w}{\ineg\psi}$. Since $\Q\M$ is $\GPK$-legal, then $\falQ{w}{\wsmile\psi}$. If, on the one hand, $\truQ{w}{\wsmile\psi}$ fails, then we must have $\fal{w}{\wsmile\psi}$, by (R1). If, on the other hand, neither $\truQ{w}{\wsmile\psi}$ nor $\falQ{w}{\wsmile\psi}$ fail,
we are in case (R3) and $\fal{w}{\wsmile\psi}$ is to be the case because none of (M1)--(M7) applies.
\end{proof}

Since the class of all $\GPK$-legal quasi models contains the $\GPK$-strengthened differentiated quasi models, it follows that:

\begin{corollary}
\label{gpkcut}
	$\GPK$ enjoys strong cut-admissibility.
\end{corollary}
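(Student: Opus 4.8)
The plan is to read off \Cref{gpkcut} as an almost immediate consequence of \Cref{quasinstance} together with the specialized (unlabeled) corollary that precedes it, the one specializing Corollary~5.48 of \cite{lah:avr:Unified2013}; essentially all of the genuine content already resides in \Cref{quasinstance}. That specialized corollary tells us it suffices to prove the following single statement: every $\GPK$-strengthened differentiated quasi model satisfying the conditions $[\TT\#]$ and $[\FF\#]$ admits an instance that again satisfies those conditions. Thus the whole task reduces to verifying this one hypothesis.

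First I would note that any $\GPK$-strengthened differentiated quasi model is in particular a $\GPK$-legal quasi model --- precisely the containment recorded in the sentence immediately before the statement. This observation is what makes \Cref{quasinstance} applicable here: the instance construction in that lemma repeatedly invokes $\GPK$-legality of the given quasi model $\tup{\F,QV}$, so feeding it a $\GPK$-strengthened differentiated quasi model is legitimate and produces an instance $\tup{\F,V}$.

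Next I would confirm that the object so produced is not merely an instance in the weak, order-theoretic sense (that $\xval{w}{\varphi}$ forces $\xvalQ{w}{\varphi}$ for every $w$ and $\varphi$), but a bona fide $\GPK$-legal model respecting every clause $[\TT\#]$ and $[\FF\#]$. This is exactly what the proof of \Cref{quasinstance} delivers: the valuation $V$ defined through (R1)--(R3) and (M1)--(M7) is checked to satisfy all the semantic conditions of \Cref{sec:proofsystem}, the cases of $\ineg$ and $\wsmile$ being treated explicitly and the remaining connectives being entirely analogous. Since it was established in \Cref{sec:proofsystem} that every model is $\GPK$-legal, the instance meets the demands of the specialized corollary.

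With an instance of the required form exhibited for an arbitrary $\GPK$-strengthened differentiated quasi model, the hypothesis of the specialized corollary is satisfied, and its conclusion --- strong cut-admissibility of $\GPK$ --- follows at once. I expect no real obstacle at the level of \Cref{gpkcut} itself: the delicate work is confined to \Cref{quasinstance}, where the subtle point is that the value of $\wsmile\psi$ depends on $\ineg\psi$ and that of $\wfrown\psi$ on $\uneg\psi$, which is why the recursion is organized along the well-founded relation $\wfr$ rather than along ordinary subformula complexity. Once that lemma is granted, the corollary is pure bookkeeping.
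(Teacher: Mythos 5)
Your proposal is correct and follows exactly the paper's route: the paper likewise obtains \Cref{gpkcut} by combining the specialized instance of Corollary~5.48 of \cite{lah:avr:Unified2013} with \Cref{quasinstance}, observing (in a single sentence) that the class of $\GPK$-legal quasi models contains all $\GPK$-strengthened differentiated ones. Your additional remarks --- that the lemma's construction only uses $\GPK$-legality, and that the produced instance genuinely satisfies all the $[\TT\#]$ and $[\FF\#]$ conditions --- merely make explicit what the paper leaves implicit, so there is nothing to correct.
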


Given a basic system~$\G$ and a relation~$\sqsubset$ on its set of for\-mulas, we say that a derivation in $\G$ of a sequent $s$ from a set  $S$ of sequents  is a \textsl{$\sqsubset$-analytic derivation} if every formula~$\varphi$ that occurs in the derivation satisfies $(\varphi\sqsubset\psi)\vee(\varphi=\psi)$ for some~$\psi$ in $S\cup \{s\}$.
We then say that~$\G$ is \textsl{$\sqsubset$-analytic}~if whenever ~$s$ is derivable from  $S$  in~$\G$ there actually is some $\sqsubset$-analytic derivation of~$s$ from~$S$ in~$\G$.  
In case $\varphi\sqsubset\psi$ we may also say that~$\varphi$ is a \textsl{proper $\sqsubset$-subformula} of~$\psi$.

In view of Cor.~\ref{gpkcut}, the inner structure of the rules in $\GPK$ implies that:
\begin{corollary}\label{PKanalyticity}
$\GPK$ is $\wfr$-analytic.
\end{corollary}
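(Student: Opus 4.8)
The plan is to combine the cut-admissibility result from Corollary~\ref{gpkcut} with a direct inspection of the inference rules of $\GPK$. Since $\wfr$-analyticity requires exhibiting, for every derivable sequent, a derivation in which each occurring formula is a $\wfreq$-subformula of some formula in $S\cup\{s\}$, and since cut is the only rule that can introduce a formula unrelated to its conclusion, the natural strategy is first to eliminate cut and then to argue that every remaining rule has the \emph{subformula-like} property with respect to~$\wfr$.

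First I would invoke Corollary~\ref{gpkcut} to replace an arbitrary derivation of $s$ from $S$ by a cut-free derivation (more precisely, a strongly cut-admissible one in which cut formulas already occur in~$S$). The point of this reduction is that $[cut]$ is exactly the rule whose premises contain a formula~$\varphi$ not guaranteed to appear in the conclusion; every \emph{other} rule of $\GPK$ has the property that each formula occurring in its premises is a $\wfreq$-subformula of some formula occurring in its conclusion. I would verify this rule by rule: for the propositional rules $[{\w}{\Ra}]$, $[{\Ra}{\w}]$, $[{\vee}{\Ra}]$, $[{\Ra}{\vee}]$, and the axiom/weakening/constant rules, the premise formulas are literally proper subformulas of the principal formula, so clause~$(i)$ in the definition of~$\wfr$ applies. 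For the modal rules $[{\ineg}{\Ra}]$, $[{\Ra}{\uneg}]$ and their companions, each premise formula~$\chi$ either occurs already in the conclusion or is the immediate subformula of a negated principal formula (e.g.\ $\varphi$ under $\ineg\varphi$ or $\uneg\varphi$), again covered by~$(i)$. The genuinely new cases are the adjustment rules: in $[{\wsmile}{\Ra}]$ and $[{\Ra}{\wsmile}]$ the premises mention $\ineg\varphi$ while the principal formula is $\wsmile\varphi$, and in $[{\wfrown}{\Ra}]$, $[{\Ra}{\wfrown}]$ the premises mention $\uneg\varphi$ while the principal formula is $\wfrown\varphi$. These are precisely the pairs covered by clauses~$(ii)$ and~$(iii)$ of the definition of~$\wfr$, namely $\ineg\gamma\wfr\wsmile\gamma$ and $\uneg\gamma\wfr\wfrown\gamma$; this is exactly why~$\wfr$ was defined to extend the ordinary subformula relation by these two extra stipulations.

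Having checked that every non-cut rule is \emph{premise-to-conclusion $\wfreq$-bounded}, I would then run an induction on the structure of the cut-free derivation, propagating the bound downward: if the conclusion~$s$ of the whole derivation has all its formulas in $S\cup\{s\}$ trivially (they \emph{are} in $S\cup\{s\}$), and each rule only introduces premise formulas that are $\wfreq$-below the formulas of its conclusion, then by transitivity of~$\wfr$ every formula anywhere in the derivation is $\wfreq$-below some formula of the endsequent, hence below some formula of $S\cup\{s\}$. This yields exactly the definition of a $\wfr$-analytic derivation, and since such a derivation exists whenever $s$ is derivable from~$S$, the system $\GPK$ is $\wfr$-analytic.

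I expect the main obstacle to be nothing deep but rather the care needed in the modal and adjustment rules, where the context sequent of a premise differs from that of the conclusion (the $\pi_1$-based rules replace the context $\g\Ra\d$ by $\uneg\d,\ldots\Ra\ineg\g$). One must confirm that the \emph{context} formulas so introduced, namely formulas of the form $\ineg\g$ and $\uneg\d$, are still $\wfreq$-bounded by formulas of the conclusion; here they are subformulas, in the extended sense, of the very negations appearing below, so clause~$(i)$ still suffices. The only place requiring the enriched relation~$\wfr$ rather than plain subformulahood is the adjustment connectives, and this has been anticipated in the definition of~$\wfr$; thus the inner structure of the rules, together with Cor.~\ref{gpkcut}, delivers the claim.
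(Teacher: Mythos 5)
Your proposal is correct and follows essentially the same route as the paper's own (much terser) proof: invoke the strong cut-admissibility of Corollary~\ref{gpkcut} and then observe, by induction on the length of the derivation, that in every rule other than $[cut]$ the premise formulas either already occur in the conclusion or are proper $\wfr$-subformulas of conclusion formulas, with clauses $(ii)$ and $(iii)$ of the definition of $\wfr$ handling exactly the adjustment rules as you note. Your rule-by-rule verification, including the check that in the $\pi_1$-based rules the premise context formulas $\g,\d$ are plain subformulas of the conclusion formulas $\ineg\g,\uneg\d$, is just a spelled-out version of what the paper leaves implicit.
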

\begin{proof}
By induction on the length of the derivation of $s$ from $S$ in $\GPK$:
In all rules except for $(cut)$, the premises 
include only formulas that already appear in the conclusion or are proper $\wfr$-subformulas of formulas that appear in the conclusion.
\end{proof}


Note that the $\wfr$-analyticity of PK immediately implies its \textit{decidability}: Given a finite set~$S$ of sequents and a sequent~$s$, we do not need to search for an arbitrary derivation of~$s$ from~$S$, but it suffices to search for a $\wfr$-analytic derivation. Clearly, the set of $\wfr$-analytic derivations whose set of premises is~$S$ is finite, and can be easily computed.


\section{Some special classes of frames}
\label{sec:extensions}
In this section we investigate several natural deductive extensions of $\GPK$.
Given a property~$X$ of binary relations, we call a frame $\tup{W,R}$ an \textsl{$X$~frame} if~$R$ enjoys~$X$. A (quasi) model $\tup{\F,V}$ is called an \textsl{$X$~$($quasi$)$ model} if $\F$ is an~$X$~frame. In addition, and similarly to what we did in the case of $\GPK$, for every proof system $Y$ we write $S\vdash_{Y}s$ if there is a derivation of~$s$ from~$S$ in~$Y$.

\subsection{Seriality}
\label{serialitySection}
Let $\GPKD$ be the system obtained
by augmenting $\GPK$ with the following rule:
{\small
$$[\D] \quad \dfrac{\Gamma\Ra\Delta}{\uneg\Delta\Ra\ineg\Gamma}$$
}%
This rule may be formulated as the basic rule:
$\tup{\Ra\;;\;\pi_{1}}\rs\Ra$.
Since its premise  is the empty sequent, the semantic condition it imposes (following~\cite{lah:avr:Unified2013}) is seriality: indeed, respecting $[\D]$ in a world $w$ of a model $\M$ based on a frame $\tup{W,R}$ means that if $\M,v\Vdash\; \Ra$ for every world $v$ such that $w R v$, then also $\M,w\Vdash\; \Ra$. Since the empty sequent is not satisfied at any world, this condition would hold iff for every world $w$ there exists a world $v$ such that $w R v$.  In addition, it is easy to see that every serial model satisfies this semantic condition.

As in Corollary \ref{soundnessAndCompletenessForSmiles}, we obtain a completeness theorem for $\GPKD$ with respect to serial models:
\begin{corollary}\label{soundnessAndCompletenessForSmilesatKD}
$\g\models_{\E_{\D}}\varphi$ iff $\g\vdash_{\GPKD}\varphi$ for every $\g\cup\set{\varphi}\suq\setS$, where $\E_{\D}$ is the class of serial frames.
\end{corollary}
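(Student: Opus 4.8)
The plan is to mirror exactly the strategy used for $\GPK$ in \Cref{soundnessAndCompletenessForSmiles}, invoking the machinery of \Cref{OrisTheorem} but now applied to the augmented system $\GPKD$. First I would observe that $\GPKD$ is a basic system employing the context relation $\pi_0$ together with the context relation $\pi_1$ (satisfying $\pi_1\suq\pi_1$ trivially), so the hypotheses of \Cref{OrisTheorem} are met. The theorem then yields soundness and completeness of $\GPKD$ with respect to \emph{any} class of $\GPKD$-legal models that contains all differentiated $\GPKD$-strengthened models. Thus the entire task reduces to identifying the class of $\GPKD$-legal models and checking that the class $\E_{\D}$ of serial frames gives rise to exactly such an admissible class.

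The key step is the semantic analysis of the new rule $[\D]$, which has already been carried out in the paragraph immediately preceding the statement: respecting the basic rule $\tup{\Ra\;;\;\pi_1}\rs\;\Ra$ in a world $w$ amounts to requiring that, since the empty sequent is never satisfied, there must exist some $v$ with $wRv$ --- that is, $R$ is serial. Consequently a model is $\GPKD$-legal precisely when it is a $\GPK$-legal model (i.e.\ an ordinary model, by the discussion in \Cref{sec:proofsystem}) whose frame is serial. I would therefore note that the class of $\GPKD$-legal models coincides with the class of all serial models, which is the semantic class determining $\models_{\E_{\D}}$.

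It then remains to confirm the containment hypothesis of \Cref{OrisTheorem}: every differentiated $\GPKD$-strengthened model must itself be a serial model. This is where I would be most careful, though I expect it to be routine rather than genuinely hard. A $\GPKD$-strengthened model is by definition $\GPKD$-legal, hence serial, and it satisfies the converse of $[RC_{\pi_1}]$; differentiatedness adds the usual world-separation property. Since seriality is already built into $\GPKD$-legality, the differentiated $\GPKD$-strengthened models form a subclass of the serial models, so the serial models constitute an admissible class in the sense of \Cref{OrisTheorem}.

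Assembling these observations, \Cref{OrisTheorem} gives soundness and completeness of $\GPKD$ against the serial models, and translating from sequents to formulas exactly as in the passage from \Cref{OrisTheorem} to \Cref{soundnessAndCompletenessForSmiles} yields $\g\models_{\E_{\D}}\varphi$ iff $\g\vdash_{\GPKD}\varphi$. The only point demanding genuine attention is verifying that the semantic reading of $[\D]$ really is \emph{equivalent} to seriality (both directions), but the surrounding text has already supplied both implications, so the proof is essentially a citation of \Cref{OrisTheorem} together with that equivalence.
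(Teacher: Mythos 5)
Your proposal is correct and follows essentially the same route as the paper: the paper likewise reads off seriality as the semantic condition imposed by the basic rule $\tup{\Ra\;;\;\pi_{1}}\rs\;\Ra$ (both directions of the equivalence), identifies the $\GPKD$-legal models with the serial models, and then cites \Cref{OrisTheorem} exactly as in the passage to \Cref{soundnessAndCompletenessForSmiles}. Your explicit remark that the containment hypothesis is trivial because strengthened models are legal by definition is a correct (if tacit in the paper) filling-in of the same argument.
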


It would now be straightforward to use rule $[\D]$ together with the latter result to see that (DT$\uneg$) and (DF$\ineg$) hold good in the logic $PKD$ (recall from \Cref{addingNEG} that (DT$\ineg$) and (DF$\uneg$) hold good for all extensions of $PK$).

Additionally, we may prove cut-admissibility also for the system $\GPKD$, going through serial quasi models.

\begin{lemma}
\label{Dinstance}
	Every  serial quasi model has a serial instance.
\end{lemma}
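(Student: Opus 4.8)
The plan is to reuse the construction from the proof of Lemma~\ref{quasinstance} and to observe that it never alters the accessibility relation, so that seriality is inherited by the instance for free.

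Concretely, let $\Q\M=\tup{\F,QV}$ be a serial quasi model, based on a serial frame $\F=\tup{W,R}$. First I would run the recursive definition of the valuation $V:W\times\setS\rightarrow\set{\fff,\ttt}$ from the proof of Lemma~\ref{quasinstance} verbatim, obtaining an instance $\tup{\F,V}$ of $\Q\M$. The key point to record is that this instance is based on the \emph{same} frame $\F=\tup{W,R}$: the valuation $V$ is only a two-valued sharpening of the three-valued $QV$, and the two clauses that consult the accessibility relation, namely (M4) and (M5), refer to the original relation~$R$ (``$wRv$'') rather than to any newly chosen one. Hence the accessibility relation of $\tup{\F,V}$ is literally~$R$.

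Since $R$ is serial by hypothesis and $\tup{\F,V}$ carries exactly this~$R$, the instance $\tup{\F,V}$ is a serial model, as required. Moreover, as $\tup{\F,V}$ was already shown in Lemma~\ref{quasinstance} to satisfy the $[\TT\#]$ and $[\FF\#]$ conditions, it is an ordinary model, and together with seriality this makes it a $\GPKD$-legal instance --- precisely the ingredient that the semantic cut-admissibility criterion will consume.

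There is no real obstacle here beyond confirming the invariant just highlighted, namely that the earlier construction leaves the frame untouched. Once this is granted, seriality --- like any condition that constrains only the accessibility relation --- transfers from the quasi model to its instance with no step specific to seriality, which is why essentially the same one-line argument will serve for the analogous instance lemmas in the reflexive, functional, symmetric, and transitive cases.
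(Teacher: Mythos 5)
Your proof is correct and is essentially the paper's own argument: the construction of \Cref{quasinstance} never assumes or alters the accessibility relation, so the instance lives on the same serial frame and seriality transfers for free. One caution about your closing remark, though: this one-line transfer does \emph{not} extend to the functional or symmetric cases --- $\GPKF$-legal quasi models need not have functional accessibility relations (so \Cref{func-instance} must pass to a total function $R'\suq R$ and modify clauses (M4) and (M7) accordingly), and quasi models for $\GPKB$ need not be symmetric, which is precisely why cut-admissibility for $\GPKB$ is left open in the paper and only $\wfr$-analyticity is established there, via partial models rather than quasi models.
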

\begin{proof}
The proof is the same as the proof of Lemma \ref{quasinstance}.
Note that no property of the accessibility relation was assumed, and the constructed instance
has the same accessibility relation as the original quasi model.
\end{proof}


\begin{corollary}
	$\GPKD$ enjoys cut-admissibility and is $\wfr$-analytic.
\end{corollary}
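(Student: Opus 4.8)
The plan is to follow exactly the same two-pronged strategy that established the corresponding facts for $\GPK$ in Section~\ref{sec:analyticity}, now instantiated for the serial case. Cut-admissibility for $\GPKD$ reduces, by the same semantic criterion (Corollary~5.48 of \cite{lah:avr:Unified2013}), to showing that every $\GPKD$-strengthened differentiated serial quasi model has a $\GPKD$-legal instance. But this is precisely the content of \Cref{Dinstance}: the instance constructed there is serial (it shares its accessibility relation with the original quasi model, which is serial) and, since $\GPKD$-legality over a serial frame amounts to the $[\TT\#]/[\FF\#]$ conditions together with seriality, the constructed model is $\GPKD$-legal. Hence $\GPKD$ enjoys strong cut-admissibility.

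For analyticity I would argue exactly as in Cor.~\ref{PKanalyticity}, by induction on the length of a cut-free derivation. The key observation is that the only rule $\GPKD$ adds to $\GPK$ is $[\D]$, whose conclusion $\uneg\Delta\Ra\ineg\Gamma$ is built from formulas $\uneg\delta$ and $\ineg\gamma$ whose immediate subformulas $\delta,\gamma$ are precisely the formulas of the premise $\Gamma\Ra\Delta$. Thus every formula in the premise is a proper $\wfr$-subformula of a formula in the conclusion, so $[\D]$ preserves the subformula bookkeeping just as every other non-$(cut)$ rule does. Combined with the cut-admissibility just established (so that $(cut)$ need not be invoked), the induction goes through verbatim, and $\GPKD$ is $\wfr$-analytic.

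The only point deserving genuine care, and therefore the main obstacle, is the verification that seriality is correctly transmitted through the quasi-model construction and that $\GPKD$-legality is preserved by the instance. This is exactly what \Cref{Dinstance} isolates: because no property of $R$ was used in the proof of \Cref{quasinstance}, and because that construction leaves the accessibility relation untouched, a serial quasi model yields a serial instance, and the semantic constraint imposed by $[\D]$ (namely seriality) holds for the instance whenever it holds for the quasi model. Once this is granted, both results are immediate corollaries of the general machinery, and no further syntactic work is needed.
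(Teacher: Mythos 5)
Your proposal is correct and follows essentially the same route as the paper: the paper's own justification consists precisely of \Cref{Dinstance} (whose proof notes that the construction of \Cref{quasinstance} uses no property of the accessibility relation and preserves it, so serial quasi models get serial instances) fed into the semantic criterion of Corollary~5.48 of \cite{lah:avr:Unified2013}, with $\wfr$-analyticity then following from the inner structure of the rules exactly as in \Cref{PKanalyticity}, including your observation that the premise formulas of $[\D]$ are proper $\wfr$-subformulas of the $\uneg\delta$ and $\ineg\gamma$ in its conclusion.
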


\subsection{Reflexivity}

Let $\GPKT$ be the system obtained
by augmenting $\GPK$ with the following rules:
{\small
$$[\T_{1}] \quad \dfrac{\Gamma,\varphi\Ra\Delta}{\Gamma\Ra\ineg\varphi,\Delta}
\qquad\qquad\qquad [\T_{2}] \quad \dfrac{\Gamma\Ra\varphi,\Delta}{\Gamma,\uneg\varphi\Ra\Delta}$$}%
These rules may be formulated as the basic rules:
$\tup{p_{1}\Ra\;;\;\pi_{0}}\rs\Ra\ineg p_{1}$ and $\tup{\Ra p_{1}\;;\;\pi_{0}}\rs\uneg p_{1}\Ra\;$.  It should be clear that $\GPKT$ allows thus for the derivation of the consecutions representing $\llbracket\ineg$-implosion$\rrbracket$ and $\llbracket\uneg$-explosion$\rrbracket$.

Semantically, they impose reflexivity not on all models, but only on $\GPKT$-strengthened models. Indeed, since the underlying context relation is $\pi_{0}$, for every  model $\M=\tup{\F,V}$ based on a frame $\F=\tup{W,R}$ that respects $[\T_{1}]$ and $[\T_{2}]$,  and every world $w$, if $\M,w\vDash\varphi\Ra$  then  $\M,w\vDash\;\Ra\ineg\varphi$ and if $\M,w\vDash\;\Ra\varphi$ then  $\M,w\vDash\uneg\varphi\Ra$.
To put it otherwise,
if $\fal{w}{\varphi}$ then $\tru{w}{\ineg\varphi}$, and
if $\tru{w}{\varphi}$ then $\fal{w}{\uneg\varphi}$.
Clearly, every reflexive model satisfies the latter conditions. To show that every $\GPKT$-strengthened model that satisfies them is reflexive, consider an arbitrary  such model
 $\M=\tup{\tup{W,R},V}$. Then for every world $w\in W$ we have that for every formula $\varphi$, ($\tru{w}{\varphi}$ implies $\fal{w}{\uneg\varphi}$)
and ($\fal{w}{\varphi}$ implies $\tru{w}{\ineg\varphi}$), which in  $\GPKT$-strengthened models means precisely that $w R w$.
We therefore have that every reflexive model is  $\GPKT$-legal, and every $\GPKT$-strengthened model is reflexive.
We obtain thus a completeness theorem for $\GPKT$ with respect to reflexive models, relying on \Cref{OrisTheorem}:
\begin{corollary}\label{soundnessAndCompletenessForSmilesatKT}
$\g\models_{\E_{\T}}\varphi$ iff $\g\vdash_{\GPKT}\varphi$ for every $\g\cup\set{\varphi}\suq\setS$, where $\E_{\T}$ is the class of reflexive frames.
\end{corollary}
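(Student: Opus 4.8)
The plan is to obtain this corollary as a direct instantiation of \Cref{OrisTheorem}, precisely as \Cref{soundnessAndCompletenessForSmiles} was obtained for $\GPK$. First I would observe that $\GPKT$ meets the hypotheses of \Cref{OrisTheorem}: besides $\pi_{0}$, its only context relation is $\pi_{1}$ itself, since the two new rules $[\T_{1}]$ and $[\T_{2}]$ are based on $\pi_{0}$; and trivially $\pi_{1}\suq\pi_{1}$. Hence it suffices to produce a class of $\GPKT$-legal models that contains all differentiated $\GPKT$-strengthened models, and to identify this class with the models determined by $\E_{\T}$.

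For this I would invoke the two structural facts already established in the discussion preceding the statement, namely that every reflexive model is $\GPKT$-legal and that every $\GPKT$-strengthened model is reflexive. The first fact shows that the class of all reflexive models is a class of $\GPKT$-legal models. The second fact shows that every differentiated $\GPKT$-strengthened model is in particular reflexive, hence already belongs to that class. Thus the class of reflexive models satisfies the hypothesis of \Cref{OrisTheorem}.

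Applying \Cref{OrisTheorem} with $\G=\GPKT$ and the class of all reflexive models then yields, at the level of sequents, soundness and completeness of $\GPKT$ with respect to $\E_{\T}$. Passing to the formula-level entailment of the statement is then routine: unfolding $\g\vdash_{\GPKT}\varphi$ as $\vdash_{\GPKT}\g'\Ra\varphi$ for some finite $\g'\suq\g$ and reading off validity of the corresponding sequents over $\E_{\T}$, exactly as in the derivation of \Cref{soundnessAndCompletenessForSmiles}.

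The whole content of the argument is thereby concentrated in the second structural fact, that a $\GPKT$-strengthened model respecting the local conditions induced by $[\T_{1}]$ and $[\T_{2}]$ must be reflexive; this is where one exploits that, in a $\GPKT$-strengthened model, $wRv$ holds as soon as both ($\fal{v}{\varphi}$ implies $\tru{w}{\ineg\varphi}$) and ($\tru{v}{\varphi}$ implies $\fal{w}{\uneg\varphi}$), so that the local clauses `$\fal{w}{\varphi}$ implies $\tru{w}{\ineg\varphi}$' and `$\tru{w}{\varphi}$ implies $\fal{w}{\uneg\varphi}$' force $wRw$ at every world. Since that verification has already been carried out above, no further work is needed and the corollary follows.
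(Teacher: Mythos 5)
Your proposal is correct and follows essentially the same route as the paper: the paper likewise notes that $[\T_{1}]$ and $[\T_{2}]$ are basic rules over $\pi_{0}$, verifies that every reflexive model is $\GPKT$-legal while every $\GPKT$-strengthened model is reflexive (via exactly the local conditions you cite forcing $wRw$), and then applies \Cref{OrisTheorem} as in \Cref{soundnessAndCompletenessForSmiles}. No gaps to report.
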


Such semantics for $\GPKT$ allows one to easily confirm that the full type diamond-minus connective~$\ineg$ fails (DM1.2\#), and that the full type box-minus connective~$\uneg$ fails (DM2.2\#).   Such failures transfer to the weaker logics $\GPKD$ and $\GPK$, of course.

Cut-admissibility for $\GPKT$ may be obtained using arguments similar to those used in proving Lemma~\ref{quasinstance}.
It follows thus that:

\begin{lemma}
	Every  reflexive $\GPKT$-strengthened quasi model has a reflexive instance.
\end{lemma}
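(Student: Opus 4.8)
The plan is to reuse, essentially verbatim, the instance construction from the proof of \Cref{quasinstance}, exploiting the fact that it never touches the accessibility relation. Let $\Q\M = \tup{\F, QV}$ be a reflexive $\GPKT$-strengthened quasi model, with $\F = \tup{W, R}$. First I would observe that a $\GPKT$-strengthened quasi model is in particular $\GPKT$-legal, and hence $\GPK$-legal; this is precisely the hypothesis under which the recursive definition (R1)--(R3) together with (M1)--(M7) from the proof of \Cref{quasinstance}, carried out over the unchanged frame $\F$, yields a valuation $V$ such that $\tup{\F, V}$ is a genuine $\GPK$-legal model and an instance of $\Q\M$ (that is, $\xval{w}{\varphi}$ implies $\xvalQ{w}{\varphi}$ for all $w \in W$ and $\varphi \in \setS$). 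The verification of the semantic clauses is identical to the one already spelled out there and makes no use of any special property of~$R$.

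Next, since the construction leaves $R$ untouched and $R$ is reflexive by assumption, the instance $\tup{\F, V}$ is again based on a reflexive frame. It then remains only to remark that every reflexive $\GPK$-legal model is automatically $\GPKT$-legal: instantiating the accessible world~$v$ with~$w$ itself in clauses [$\TT\ineg$] and [$\FF\uneg$] gives exactly the reflexive conditions ``$\fal{w}{\varphi}$ implies $\tru{w}{\ineg\varphi}$'' and ``$\tru{w}{\varphi}$ implies $\fal{w}{\uneg\varphi}$'', which is what respecting $[\T_{1}]$ and $[\T_{2}]$ demands. Hence $\tup{\F, V}$ is a reflexive $\GPKT$-legal model, i.e.\ a reflexive instance of $\Q\M$, as desired.

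The only point deserving care --- and the exact analogue of what made \Cref{Dinstance} immediate --- is the observation that the construction is ``local in~$R$'': it fixes the frame and merely resolves the overdetermined value $\set{\fff, \ttt}$ of $QV$ into a sharp value at each pair $\tup{w, \varphi}$, so that no frame property can be lost in passing from the quasi model to its instance. For this reason I expect no genuine obstacle for reflexivity (just as for seriality), in contrast with frame conditions such as symmetry or transitivity, whose preservation might in principle force one to modify the accessibility relation. For completeness I would also record that the reflexivity of $\Q\M$ is not really an extra assumption, being already entailed by $\GPKT$-strengthenedness in the same way that $\GPKT$-strengthened models were shown to be reflexive in \Cref{sec:extensions}.
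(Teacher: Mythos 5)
Your proposal is correct and matches the paper's own (implicit) argument: the paper likewise disposes of this lemma by noting that the construction of \Cref{quasinstance} never alters the accessibility relation, so the instance inherits reflexivity, and reflexivity in turn yields the $\GPKT$-conditions. Your added observations --- that a reflexive $\GPK$-legal model respects $[\T_1]$ and $[\T_2]$ by instantiating $v=w$, and that reflexivity already follows from $\GPKT$-strengthenedness --- merely make explicit what the paper leaves to the reader.
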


\begin{corollary}
	$\GPKT$ enjoys cut-admissibility and is $\wfr$-analytic.
\end{corollary}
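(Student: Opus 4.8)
The final statement to prove is the Corollary asserting that $\GPKT$ enjoys cut-admissibility and is $\wfr$-analytic, which follows immediately once the preceding Lemma (every reflexive $\GPKT$-strengthened quasi model has a reflexive instance) is established. So the real work is the Lemma, and my plan is to mirror exactly the structure used for $\GPK$ in \Cref{quasinstance}, adapting it to track the reflexivity condition.

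The plan is to take an arbitrary reflexive $\GPKT$-strengthened quasi model $\Q\M=\tup{\F,QV}$ with $\F=\tup{W,R}$ reflexive, and construct a valuation $V$ by the very same recursion over $\wfr$ used in the proof of \Cref{quasinstance}, namely rules (R1)--(R3) together with the matching-conditions (M1)--(M7). Since the recipe for building $V$ makes no reference to special properties of $R$, the constructed instance automatically reuses the \emph{same} accessibility relation $R$, so reflexivity is inherited for free; this is precisely the observation already exploited in the proof of \Cref{Dinstance} for the serial case. First I would note that $\xval{w}{\varphi}$ implies $\xvalQ{w}{\varphi}$ by construction, so $\M=\tup{\F,V}$ is indeed an instance. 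Then I would verify that $\tup{\F,V}$ is a $\GPKT$-legal model, i.e.\ that $V$ satisfies all the semantic conditions $[\TT\#]$ and $[\FF\#]$ from \Cref{sec:proofsystem} \emph{plus} the two reflexivity-induced conditions imposed by $[\T_1]$ and $[\T_2]$.

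The bulk of the verification for the connectives $\land,\lor,\top,\bot,\ineg,\uneg,\wsmile,\wfrown$ is literally identical to the casework in \Cref{quasinstance} and need not be repeated; the only genuinely new obligation is to check the two conditions arising from the reflexivity rules, namely that $\fal{w}{\varphi}$ implies $\tru{w}{\ineg\varphi}$ and that $\tru{w}{\varphi}$ implies $\fal{w}{\uneg\varphi}$, for $V$. I would argue these using the $\GPKT$-legality of $\Q\M$ together with the recursion exactly as in the $\ineg$-case above: if $\fal{w}{\varphi}$ holds for $V$ then $\falQ{w}{\varphi}$ holds for $QV$, and since $w R w$ by reflexivity of $\F$, the $\GPKT$-legality of $\Q\M$ forces $\truQ{w}{\ineg\varphi}$; a case split on whether $\falQ{w}{\ineg\varphi}$ fails (use (R2)) or we are in case (R3) (use (M4) with the witness $v=w$) then yields $\tru{w}{\ineg\varphi}$. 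The argument for $\uneg$ is dual, using that under reflexivity ``for every accessible $v$'' instantiates to $v=w$, so $\tru{w}{\varphi}$ gives $\falQ{w}{\uneg\varphi}$ via legality and hence $\fal{w}{\uneg\varphi}$ via (R1)/(M5).

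I expect the main obstacle to be purely bookkeeping rather than conceptual: one must be careful that the recursion over $\wfr$ is well-founded and that when invoking (M4)/(M5) the relevant values $\tru{v}{\psi}$ or $\fal{v}{\psi}$ have already been fixed at the accessible worlds, which holds because subformulas precede their compounds under $\wfr$. The subtlety specific to the reflexive case is ensuring that the new reflexivity conditions on $V$ are compatible with, and not overridden by, the generic clauses (R1)--(R3); but since the quasi model is already reflexive and $\GPKT$-strengthened, its own values cohere with $w R w$, so no conflict can arise. Once the Lemma is in hand, the Corollary follows: by the quoted Corollary~5.48 of \cite{lah:avr:Unified2013} (every $\G$-strengthened differentiated quasi model having a $\G$-legal instance yields strong cut-admissibility), $\GPKT$ enjoys cut-admissibility, and $\wfr$-analyticity then follows exactly as in \Cref{PKanalyticity} by induction on derivation length, since the inner structure of the added rules $[\T_1]$ and $[\T_2]$ introduces in each premise only proper $\wfr$-subformulas of formulas in the conclusion.
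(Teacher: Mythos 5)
Your proposal is correct and follows essentially the same route as the paper, which itself proves this corollary by noting that the construction of Lemma~\ref{quasinstance} carries over unchanged (keeping the same, hence reflexive, accessibility relation) and then invoking Corollary~5.48 of \cite{lah:avr:Unified2013}; your explicit verification of the conditions induced by $[\T_1]$ and $[\T_2]$ is in fact subsumed by the $\GPK$ casework once reflexivity of the instance is in hand, since a reflexive model automatically satisfies them via $[\TT\ineg]$ and $[\FF\uneg]$ with $v=w$. Your closing argument for $\wfr$-analyticity via the local subformula structure of $[\T_1]$ and $[\T_2]$ likewise matches the paper's treatment in Corollary~\ref{PKanalyticity}.
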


\subsection{Functionality}
In this section we address functional frames, that is, frames whose accessibility relations are {total functions}.
In every model $\tup{\tup{W,R},V}$ of a functional frame and world $w\in W$, we have
$\tru{w}{\ineg\varphi}$ iff $\tru{w}{\uneg\varphi}$. Hence $\ineg$ and $\uneg$ are indistinguishable.
Accordingly, here we consider a restricted language, without~$\uneg$.

Let $\GPKF$ be the system obtained from $\GPK$ by substituting~$\ineg$ for~$\uneg$ in rules $[{\Ra}\wfrown]$ and $[\wfrown{\Ra}]$, and replacing both rules $[{\Ra}\uneg]$ and $[\ineg{\Ra}]$ with the single rule:
{\small
$$[\FUNC]\quad\dfrac{\Gamma\Ra\Delta}{\ineg\Delta\Ra\ineg\Gamma}$$
}%
It is straightforward to see that rule $[\FUNC]$ may be formulated as the following basic rule:
$\tup{\Ra\;;\;\pi_{2}}\rs\Ra$, for $\pi_{2}=\set{\tup{q_{1}\Ra\;;\;\Ra\ineg q_{1}},\tup{\Ra q_{1}\;;\;\ineg q_{1}\Ra}}$.
Note that $\pi_{2}$ is obtained from $\pi_{1}$ by identifying $\ineg$ and $\uneg$.
The latter rule and context relation impose functionality on  \textit{differentiated} models. Indeed, respecting the basic rule $[\FUNC]$ corresponds to seriality, similarly to the case of the rule $[\D]$.
Additionally, the context relation~$\pi_{2}$ forces the accessibility relation to be a partial function:
Respecting~$\pi_{2}$ in a world $w$ of a model $\M=\tup{\tup{W,R},V}$ means that for every $v_{1},v_{2}\in W$ such that $w R v_{1}$ and $w R v_{2}$, and every formula $\varphi$,
$\tru{v_{1}}{\varphi}$ iff $\fal{w}{\ineg\varphi}$ iff
$\tru{v_{2}}{\varphi}$.  When $\M$ is differentiated, this implies that $v_{1}=v_{2}$.
Now, every functional model satisfies $[RR_{\FUNC}]$ and $[RC_{\pi_{2}}]$ and every differentiated model that satisfies  them is  functional.
We thus obtain a completeness result for $\GPKF$ with respect to functional models, relying on \Cref{OrisTheorem}:
\begin{corollary}
\label{soundAndCompleteFunctional}\label{soundnessAndCompletenessForSmilesatKF}
$\g\models_{\E_{\FUNC}}\varphi$ iff $\g\vdash_{\GPKF}\varphi$ for every $\g\cup\set{\varphi}\suq\setS$, where $\E_{\FUNC}$ is the class of functional frames.\footnote{We note  here that we do not have actual set-inclusion of~$\pi_{2}$ in~$\pi_{1}$. However, the language that we consider here identifies $\ineg$ and $\uneg$, and this suffices for our version of the general results from \cite{lah:avr:Unified2013}.}
\end{corollary}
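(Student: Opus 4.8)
The plan is to obtain this corollary from \Cref{OrisTheorem} in exactly the way \Cref{soundnessAndCompletenessForSmiles} was obtained for $\GPK$, viewing $\GPKF$ as a basic system whose two context relations are $\pi_{0}$ and $\pi_{2}$. Most of the semantic groundwork is already in place in the preceding paragraphs: respecting the basic rule $[\FUNC]$ amounts to seriality, while respecting $\pi_{2}$ forces the accessibility relation to be a partial function on differentiated models, and the two together single out exactly the functional frames. What remains is to package these observations into the two hypotheses that \Cref{OrisTheorem} demands, and then to handle the one mismatch between $\pi_{2}$ and the $\pi$ of that theorem.

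Concretely, I would first record that the class $\E_{\FUNC}$ of functional models is a class of $\GPKF$-legal models: each functional model satisfies $[RR_{\FUNC}]$ and $[RC_{\pi_{2}}]$, the remaining legality conditions holding for models over the reduced language just as every model is $\GPK$-legal in the case of $\GPK$. Next I would argue that $\E_{\FUNC}$ contains every differentiated $\GPKF$-strengthened model: such a model is in particular $\GPKF$-legal, so it respects $[\FUNC]$ (giving seriality) and $\pi_{2}$ (which, on a differentiated model, forces $R$ to be a partial function); combining the two yields that $R$ is a total function, i.e.\ the model is functional. With both facts in hand, $\E_{\FUNC}$ meets the hypothesis of \Cref{OrisTheorem}, delivering soundness and completeness at the level of sequents, and the passage to the formula-level relation $\g\vdash_{\GPKF}\varphi$ proceeds exactly as for $\GPK$.

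The one genuinely new difficulty --- flagged in the footnote --- is that \Cref{OrisTheorem} is stated for systems whose extra context relation $\pi$ satisfies $\pi_{1}\suq\pi$, whereas here the extra relation is $\pi_{2}$, which is not literally a superset of $\pi_{1}$ (the two disagree on the pair $\tup{\Ra q_{1}\;;\;\uneg q_{1}\Ra}$ versus $\tup{\Ra q_{1}\;;\;\ineg q_{1}\Ra}$). I would dispose of this by observing that the language of $\GPKF$ carries no independent $\uneg$: the two negative modalities are identified, and under that identification $\pi_{2}$ coincides with $\pi_{1}$, so that $\pi_{1}\suq\pi_{2}$ holds in the restricted language. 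Hence the proof of \Cref{OrisTheorem} given in \cite{lah:avr:Unified2013} applies verbatim to $\GPKF$, with $\ineg$ discharging the combined role of $\ineg$ and $\uneg$; the only point meriting attention is that every construction invoked there --- notably the canonical differentiated strengthened model witnessing completeness --- is carried out over this reduced language.
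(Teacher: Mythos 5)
Your proposal is correct and follows essentially the same route as the paper: it verifies that functional models respect $[RR_{\FUNC}]$ and $[RC_{\pi_{2}}]$, that differentiatedness plus respecting $\pi_{2}$ forces $R$ to be a partial function (hence, with the seriality coming from $[\FUNC]$, a total one), and then applies \Cref{OrisTheorem} exactly as in the case of $\GPK$. Your handling of the mismatch between $\pi_{2}$ and $\pi_{1}$ --- that the $\uneg$-free language identifies the two negative modalities, under which $\pi_{2}$ plays the role of $\pi_{1}$ --- is precisely the paper's own resolution, recorded in its footnote.
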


It should be clear that $\GPKF$ extends $\GPKD$, but does not extend $\GPKT$.
Moreover, in contrast with what was the case for $\GPKT$, within the semantics for $\GPKF$  there are no longer countermodels  for (DM1.2$\ineg$)  or for (DM2.2$\uneg$).  

Going through quasi models we may prove cut-admissibility also for $\GPKF$.
However, unlike in previous cases, considering functional quasi models
will not suffice. Indeed, there exist $\GPKF$-strengthened differentiated quasi models  whose accessibility relation is not a total function.
Nonetheless, it can be verified that every $\GPKF$-legal quasi model that is  based on a frame $\F=\tup{W,R}$ is  serial, and for every $w,v\in W$ such that $w R v$ we have, for every $\varphi\in\setS$, both
($\falQ{v}{\varphi}$ implies $\truQ{w}{\ineg\varphi}$) and ($\truQ{v}{\varphi}$ implies $\falQ{w}{\ineg\varphi}$).
Thus although the accessibility relation in $\GPKF$-legal~quasi models may not be a total function, we are still able to extract a functional model from it:

\begin{lemma}\label{func-instance}
	Every  $\GPKF$-legal~quasi model has a functional instance.
\end{lemma}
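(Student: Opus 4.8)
The plan is to construct, from a given $\GPKF$-legal quasi model $\Q\M=\tup{\tup{W,R},QV}$, an instance whose accessibility relation is a genuine total function. The key observation, already flagged in the excerpt, is that $\GPKF$-legal quasi models are serial but their accessibility relation need not be functional; so unlike in \Cref{quasinstance} and \Cref{Dinstance}, we cannot simply reuse the original relation~$R$. Instead I would build an instance $\M=\tup{\tup{W,R'},V}$ in which $R'$ selects, for each world~$w$, exactly one of the $R$-successors of~$w$, while defining~$V$ by the same recursion over~$\wfr$ used in \Cref{quasinstance}.

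First I would fix, for each $w\in W$, a single witness $f(w)\in W$ with $w\mathrel{R}f(w)$; this is possible by seriality of $\Q\M$, and I set $R'=\set{\tup{w,f(w)}\st w\in W}$, which is by construction a total function. Then I would define~$V$ exactly as in the proof of \Cref{quasinstance} --- by the clauses (R1), (R2), (R3) together with (M1)--(M7) --- but reading the modal clauses (M4)/(M5) with respect to the new relation~$R'$ rather than~$R$. Because $R'$ is functional, the ``for some successor'' clause (M4) and the ``for every successor'' clause (M5) now refer to the same unique successor $f(w)$, which is exactly what makes $\ineg$ and $\uneg$ coincide in the functional language. I would then verify that $\xval{w}{\varphi}$ implies $\xvalQ{w}{\varphi}$ for every $w$, $\varphi$ and $\XX\in\set{\TT,\FF}$, so that $\M$ is indeed an instance of $\Q\M$, and that $\M$ respects all the semantic conditions $[\TT\#]$ and $[\FF\#]$ of the restricted language, i.e.\ that $\tup{\tup{W,R'},V}$ is a $\GPKF$-legal (functional) model.

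The main obstacle --- and the only place where this proof differs substantively from \Cref{quasinstance} --- is checking the clauses for $\ineg$ (now doubling as $\uneg$) against the single successor $f(w)$. The crucial input is the auxiliary fact stated just before the lemma: for every $w\mathrel{R}v$ and every $\varphi$, both ($\falQ{v}{\varphi}$ implies $\truQ{w}{\ineg\varphi}$) and ($\truQ{v}{\varphi}$ implies $\falQ{w}{\ineg\varphi}$). Applying this to $v=f(w)$, I would argue that if $\fal{f(w)}{\varphi}$ holds for~$V$, then $\falQ{f(w)}{\varphi}$ (since $V$ refines $QV$), hence $\truQ{w}{\ineg\varphi}$, and then the (R2)/(R3)+(M4) bookkeeping forces $\tru{w}{\ineg\varphi}$; symmetrically, if $\tru{f(w)}{\varphi}$ then $\truQ{f(w)}{\varphi}$, hence $\falQ{w}{\ineg\varphi}$, and the (R1)/(R3) analysis forces $\fal{w}{\ineg\varphi}$. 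Since $f(w)$ is the unique $R'$-successor of~$w$, these two cases establish both $[\TT\ineg]$/$[\FF\ineg]$ and, via the identification, $[\TT\uneg]$/$[\FF\uneg]$.

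The remaining connectives are handled exactly as in \Cref{quasinstance}: the clauses for $\top$, $\bot$, $\land$, $\lor$ and for the adjustment connective $\wsmile$ (and $\wfrown$, now built over $\ineg$) are local to the world~$w$ and do not mention the accessibility relation, so the argument there is verbatim the same and I would simply refer back to it rather than reproduce it. Thus the only genuine work is confined to the modal clauses under the functional relation $R'$, and the biconditional ``some successor'' $=$ ``every successor'' that functionality supplies is precisely what closes the argument.
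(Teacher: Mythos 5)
Your proof is correct and follows essentially the same route as the paper's: both use seriality of the $\GPKF$-legal quasi model to select a functional subrelation $R'\suq R$, then rerun the recursive construction of \Cref{quasinstance} with the modal clauses read over the unique $R'$-successor (the paper packages this as replacing (M4) and (M7) by starred variants and dropping (M5), which is what your ``(M4) and (M5) collapse under functionality'' observation amounts to), and both verify the $\ineg$ clauses via the auxiliary legality fact stated just before the lemma. You merely spell out the bookkeeping that the paper compresses into ``the proof then carries on in a similar fashion.''
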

\begin{proof}
Let $\Q\M=\tup{\F,QV}$ be an $\GPKF$-legal~quasi model based on a frame $\tup{W,R}$. Since $\Q\M$ is $\GPKF$-legal, we have in particular that~$R$ is serial. Therefore, there exists some $R':W\rightarrow W$ such that $R'\suq R$. Let $\F'=\tup{W,R'}$. We define an appropriate valuation  $V:W\times\setS\rightarrow\set{\fff,\ttt}$ as in Lemma \ref{quasinstance}, while disregarding {(M5)}, and using the following two instructions in place of {(M4)} and {(M7)}:
\smallskip

\noindent
\begin{tabular}{rl}
  {(M4$^\star$)} & $\varphi=\ineg\psi$, and $\fal{R'(w)}{\psi}$\\
  {(M7$^\star$)} & $\varphi=\wfrown\psi$, and $\fal{w}{\varphi}$ and $\fal{w}{\ineg\varphi}$\\
\end{tabular}\smallskip

\noindent
The proof then carries on in a similar fashion to the proof of Lemma \ref{quasinstance}.
\end{proof}

\begin{corollary}
	$\GPKF$ enjoys cut-admissibility and is $\wfr'$-analytic, where $\wfr'$ is the restriction of $\wfr$ to the $\uneg$-free fragment of $\setS$, with an additional clause according to which $\ineg\varphi\wfr \wfrown\varphi$.
\end{corollary}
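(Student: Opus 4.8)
The plan is to derive this corollary from Lemma~\ref{func-instance} together with the general cut-admissibility result (Corollary 5.48 of \cite{lah:avr:Unified2013}) and the inner structure of the rules of $\GPKF$, exactly mirroring how Corollaries~\ref{gpkcut} and \ref{PKanalyticity} were obtained for $\GPK$. First I would settle cut-admissibility: Lemma~\ref{func-instance} establishes that \emph{every} $\GPKF$-legal quasi model (in particular every $\GPKF$-strengthened differentiated one) has a functional, hence $\GPKF$-legal, instance. Feeding this into the criterion ``if every $\G$-strengthened differentiated quasi model has a $\G$-legal instance, then $\G$ enjoys strong cut-admissibility'' immediately yields that $\GPKF$ enjoys strong cut-admissibility. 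This is the smooth half of the argument and requires essentially no new work beyond invoking the lemma.

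The genuinely new content is the analyticity claim, and the subtlety is that the relevant subformula relation is not $\wfr$ but a \emph{modified} relation $\wfr'$. I would first explain why $\wfr$ itself is inappropriate here: the language of $\GPKF$ drops $\uneg$, so clause $(iii)$ of the definition of $\wfr$ (namely $\uneg\gamma\wfr\wfrown\gamma$) is vacuous and, more importantly, the system's rules for $\wfrown$ now mention $\ineg\varphi$ rather than $\uneg\varphi$. Concretely, in the reformulated rules $[{\Ra}\wfrown]$ and $[\wfrown{\Ra}]$ the premises contain $\ineg\varphi$ as a side-ingredient of the principal formula $\wfrown\varphi$, so for analyticity to hold we must declare $\ineg\varphi$ a proper $\wfr'$-subformula of $\wfrown\varphi$. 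Thus $\wfr'$ is defined as the restriction of $\wfr$ to the $\uneg$-free fragment of $\setS$, augmented by the clause $\ineg\varphi\wfr'\wfrown\varphi$, precisely as stated. I would verify that $\wfr'$ remains a well-founded (smallest transitive) relation on the $\uneg$-free formulas, so that it supports the induction.

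With $\wfr'$ in place, the analyticity proof runs by induction on the length of a cut-free derivation of $s$ from $S$ in $\GPKF$, just as in Corollary~\ref{PKanalyticity}. For each rule one checks that every formula occurring in the premises is either already present in the conclusion or is a proper $\wfr'$-subformula of a formula in the conclusion. The boolean rules and the $\wsmile$-rules are inherited unchanged and are handled exactly as before (using $\ineg\varphi\wfr'\wsmile\varphi$, which survives as clause $(ii)$ restricted to the $\uneg$-free fragment). The two rules requiring attention are the rule $[\FUNC]$ and the reformulated $\wfrown$-rules: in $[\FUNC]$ every premise formula $\ineg\chi$ has its immediate subformula $\chi$ appearing in the conclusion, and the $\ineg$-labelling is generated by the conclusion's own $\ineg$-formulas, so no formula strictly more complex than those in the conclusion is introduced; in the $\wfrown$-rules the only new ingredient is $\ineg\varphi$, which is a proper $\wfr'$-subformula of the principal $\wfrown\varphi$ by the added clause. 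Since cut itself has been eliminated, no cut-formula can escape the subformula bound. The main obstacle, such as it is, is purely bookkeeping: one must confirm that dropping $\uneg$ and rerouting $\wfrown$ through $\ineg$ leaves the $\wfr'$-subformula property intact for \emph{every} rule, and in particular that the identification of $\ineg$ and $\uneg$ used in Lemma~\ref{func-instance} does not reintroduce $\uneg$-formulas into derivations. This follows because the system $\GPKF$ is literally formulated in the $\uneg$-free language, so no derivation can contain $\uneg$ at all.
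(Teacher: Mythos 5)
Your proposal is correct and follows essentially the same route as the paper, which leaves this corollary unproved precisely because it is the routine combination of \Cref{func-instance} with the general semantic cut-admissibility criterion (Corollary 5.48 of \cite{lah:avr:Unified2013}) and the rule-by-rule $\wfr'$-subformula check modelled on \Cref{PKanalyticity}, including the key observation that the reformulated $\wfrown$-rules force the added clause $\ineg\varphi\wfr'\wfrown\varphi$. One cosmetic remark: since the paper works with \emph{strong} cut-admissibility, derivations from a nonempty $S$ may still contain cuts on formulas of $S$, but these satisfy the $\wfr'$-analyticity bound because such cut formulas belong to $S\cup\{s\}$, so your parenthetical ``cut itself has been eliminated'' should be read in that slightly weaker sense.
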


We include a brief comment concerning a decision procedure for this logic.
It is easy to see that~$\ineg$ and~$\uneg$ may be defined using the customary presentation of the modal logic~$\K$ by $\ineg\varphi:={\sim}\Box\varphi$ and $\uneg\varphi:=\Box{\sim}\varphi$. When considering only functional frames (like in $\GPKF$), we get a translation to $\GKF$ --- the ordinary modal logic of functional Kripke models. For the $\wsmile\wfrown$-free fragment of this logic, we may apply the general reduction to SAT proposed in \cite{lahavZoharSatBased}, which in particular means that the derivability problem for it is in co-NP.

\subsection{Symmetry}
\label{sec-symmetry}
Let $\GPKB$ be the system obtained from $\GPK$ by replacing ${[{\ineg}{\Ra}]}$ and ${[{\Ra}{\ineg}]}$  with the following rules:
{\small
$$[\B_{1}] \quad \dfrac{\Gamma,\ineg\Gamma',\varphi\Ra\Delta,\uneg\Delta'}{\uneg\Delta,\Delta'\Ra\uneg\varphi,\ineg\Gamma,\Gamma'}
\qquad\qquad\qquad [\B_{2}] \quad \dfrac{\Gamma,\ineg\Gamma'\Ra\varphi,\Delta,\uneg\Delta'}{\uneg\Delta,\Delta',\ineg\varphi\Ra\ineg\Gamma,\Gamma'}$$}%

\noindent
These correspond to the following basic rules:
$\tup{ p_{1}\Ra\;;\;\pi_{3}}\rs\Ra\uneg p_{1}$ and  $\tup{\Ra p_{1}\;;\;\pi_{3}}\rs\ineg p_{1}\Ra$,
for the context relation $$\pi_{3}=\pi_{1}\cup\set{\tup{\ineg q_{1}\Ra\;;\;\Ra q_{1}},\tup{\Ra\uneg q_{1}\;;\; q_{1}\Ra}}.$$
This relation satisfies the following property: $s_1\,\pi_{3}\,s_2$ iff $\overline{s_2}\,\pi_{3}\,\overline{s_1}$, where $(\overline{\Ra\varphi})$ denotes $(\varphi\Ra)$ and $(\overline{\varphi\Ra})$ denotes $(\Ra\varphi)$.
By Proposition 4.28 of \cite{lah:avr:Unified2013}, 
$\GPKB$-strengthened models are symmetric.
In addition, every symmetric model respects these rules, as well as the context relation~$\pi_{3}$: $[RR_{\B_{1}}]$ and $[RR_{\B_{2}}]$ are shown similarly to the case of $\GPK$. As for $[RC_{\pi_{3}}]$, suppose $w R u$ in some symmetric model $\M$ based on a frame $\F=\tup{W,R}$. If $V,u\Vdash \ineg\varphi\Ra$ then $V(u,\ineg\varphi)=0$. Since $\M$ is  symmetric, we have $u R w$ as well, and since it is a model, it follows that $V(w,\varphi)=1$, which means that $V,w\Vdash\;\Ra\varphi$. Similarly, if $V,u\Vdash\;\Ra\uneg\varphi$ then $V,w\Vdash\varphi\Ra\;$.
Based on \Cref{OrisTheorem}, we see that:
\begin{corollary}\label{soundnessAndCompletenessForSmilesatKB}
$\g\models_{\E_{\B}}\varphi$ iff $\g\vdash_{\GPKB}\varphi$ for every $\g\cup\set{\varphi}\suq\setS$,
where $\E_{\B}$ is the class of symmetric frames.
\end{corollary}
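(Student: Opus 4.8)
The plan is to derive this as yet another instance of the general soundness-and-completeness mechanism provided by \Cref{OrisTheorem}, exactly as was done for $\GPKT$ and the other extensions. First I would observe that $\GPKB$ has just been presented as a basic system employing the context relation~$\pi_{0}$ together with the single additional context relation~$\pi_{3}$, and that $\pi_{1}\suq\pi_{3}$ holds by construction, since~$\pi_{3}$ was defined as~$\pi_{1}$ augmented with two further pairs. Hence $\GPKB$ meets the hypothesis of \Cref{OrisTheorem}, and it remains only to exhibit a class of $\GPKB$-legal models that contains all differentiated $\GPKB$-strengthened models and that, for the purpose of entailment, coincides with the class~$\E_{\B}$ of symmetric frames.

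The class I would take is the class of all symmetric models, and two things must then be checked, both of which are already assembled in the discussion preceding the statement. On the one hand, every symmetric model is $\GPKB$-legal: the conditions ${[RR_{\B_{1}}]}$ and ${[RR_{\B_{2}}]}$ are verified just as for the corresponding rules of $\GPK$, while ${[RC_{\pi_{3}}]}$ is checked by using symmetry of~$R$ to pass from an accessible world back to the original one, so that the two new pairs in~$\pi_{3}$ are respected. On the other hand, every differentiated $\GPKB$-strengthened model is symmetric; this is the direction that carries the real content, and it rests on the defining feature of~$\pi_{3}$, namely that $s_{1}\,\pi_{3}\,s_{2}$ holds iff $\overline{s_{2}}\,\pi_{3}\,\overline{s_{1}}$, together with Proposition~4.28 of \cite{lah:avr:Unified2013}, which converts precisely this self-duality of the context relation into symmetry of the accessibility relation on strengthened models.

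With these two facts in hand, the class of symmetric models is a class of $\GPKB$-legal models containing every differentiated $\GPKB$-strengthened model, so \Cref{OrisTheorem} applies and yields soundness and completeness of $\GPKB$ with respect to~$\E_{\B}$ at the level of sequents. Finally I would transfer this from the sequent level to the formula level exactly as in \Cref{soundnessAndCompletenessForSmiles}: unfolding the definitions of $\models_{\E_{\B}}$ and $\vdash_{\GPKB}$ on formulas gives the stated equivalence $\g\models_{\E_{\B}}\varphi$ iff $\g\vdash_{\GPKB}\varphi$ for every $\g\cup\set{\varphi}\suq\setS$. The only step that is not a routine verification of the bracket conditions is the claim that $\GPKB$-strengthened models are symmetric; the care required there is to confirm that the abstract symmetry criterion of \cite{lah:avr:Unified2013} genuinely applies to~$\pi_{3}$, i.e.\ that the self-dual pairing of singleton-sequents is exactly what forces symmetry, rather than relying only on the local rule conditions.
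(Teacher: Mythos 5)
Your proposal is correct and takes essentially the same route as the paper: it too shows that symmetric models respect $[\B_{1}]$, $[\B_{2}]$ and $[RC_{\pi_{3}}]$ (using symmetry of $R$ to pass from an accessible world back to the original), obtains symmetry of $\GPKB$-strengthened models from the self-duality $s_1\,\pi_{3}\,s_2$ iff $\overline{s_2}\,\pi_{3}\,\overline{s_1}$ via Proposition~4.28 of \cite{lah:avr:Unified2013}, and then applies \Cref{OrisTheorem}. There are no gaps; your explicit attention to the fact that $\pi_{1}\suq\pi_{3}$ and to the transfer from sequents to formulas merely spells out steps the paper leaves implicit.
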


Symmetric frames are also relevant from the viewpoint of sub-classical properties of negation.  They validate, for instance, the consecutions $\ineg\ineg p\models p$ and {$p\models \uneg\uneg p$}.
\Cref{frameProperties} (\Cref{sec:coda}), collects these and also many other consecutions representing forms of De Morgan rules that are validated by symmetric frames.
It is also worth noting that \Cref{framePropertiesC} (\Cref{sec:coda}), contains some global inference rules that are made valid by the demand of symmetry.  To check rule $\frac{\ineg\varphi\;\models\;\psi}{\ineg\psi\;\models\;\varphi}$, for instance, assume $\ineg\varphi\;\models\;\psi$ and suppose that ${\cal M}, w\Vdash \ineg\psi$ at a world~$w$ of a model~$\cal M$ of some symmetric frame.  By [S$\ineg$] we know that there must be some world~$v$ such that $wRv$ and ${\cal M}, v\not\Vdash \psi$. From the assumption it follows that ${\cal M}, v\not\Vdash \ineg\varphi$. Given that~$R$ is symmetric, we know that $vRw$, thus we can conclude that ${\cal M}, w\Vdash \varphi$, as we intended to.
Rule $\frac{\varphi\;\models\;\uneg\psi}{\psi\;\models\;\uneg\varphi}$ may be checked in an analogous way.
Paraconsistent logics based on symmetric (and reflexive) frames are also studied in~\cite{avr:zam:AiML16}, a paper that investigates in detail a conservative extension of the corresponding logic, obtained by the addition of a classical implication (but without primitive~$\uneg$ and~$\wfrown$), and offers for this logic a sequent system for which cut is \textit{not} eliminable.

Quasi models for $\GPKB$ are not necessarily symmetric, making it harder to convert them into instances in the form of symmetric models.  Hence, the particular question of cut-admissibility for our system $\GPKB$ goes beyond the reach of our approach, and is left open as a matter for further research. 
However, it can still be shown that $\GPKB$ is $\wfr$-analytic.  Unlike we did for the above systems, the latter result is not to be obtained as a corollary of cut-admissibility, but will be shown directly, using a similar technique.

As we did before for cut-admissibility,  $\wfr$-analyticity may also be shown in two steps: \textit{First}, we present an adequate semantics for analytic derivations in $\GPKB$;
\textit{second}, we show that a countermodel in the new semantics entails the existence of a countermodel in the form of a Kripke model, as defined in \Cref{sec:proofsystem}.
Rather than using quasi valuations, for this purpose we use `partial valuations'. Models based on partial valuations are very similar to the usual Kripke models. The only difference is that the underlying interpretation is partial --- that is, not defined over all formulas of the language. 
The exact same semantic conditions read off the derivation rules is imposed on partial valuations. Concretely, given a frame $\F=\tup{W,R}$, a \textsl{partial valuation} over it is a partial function $PV$ from $W\times\setS$ to $\set{\fff,\ttt}$ satisfying precisely the same semantic conditions laid down in \Cref{sec:proofsystem}, where each condition is restated so as to apply only to formulas that are assigned a  truth value. We denote the set of formulas that are assigned a value by a partial valuation $PV$ in a world $w$ of $\F$ by $\dom{PV,w}$. For example, {[$\TT\ineg$]} now reads as:  if $\fal{v}{\varphi}$ for some $v\in W$ such that $wRv$ and $\ineg\varphi\in\dom{v,w}$, then $\tru{w}{\ineg\varphi}$.
 
A \textsl{partial model} is a structure $PM=\tup{\F,PV}$, where $PV$ is a partial valuation over $\F$. The notions of a differentiated partial model and of a $\GPKB$-strengthened partial model are defined as before.
When $\dom{PV,w}=X$  for every world $w$ of $\F$, we call $PM$ a \textsl{partial $X$-model}.
	
	Now, Proposition 4.28 of \cite{lah:avr:Unified2013} is generalized there (Proposition 5.21) to cover partial $X$-models, for any set $X$ of formulas, and thus symmetry is imposed also on partial models that respect $[RC_{\pi_{3}}]$.

\begin{theorem}[Corollary 5.48 of \cite{lah:avr:Unified2013}]
Let $\G$ be a basic system. If, for every finite set $X$ that is closed under $\wfr$, it holds that every differentiated $\G$-strengthened partial $X$-model can be extended to a $\G$-legal model, then $\G$ is $\wfr$-analytic.
\end{theorem}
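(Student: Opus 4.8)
The plan is to reproduce, at the general level, the very same two-step pattern just used for cut-admissibility, with quasi models replaced by partial $X$-models and ``instances'' replaced by ``extensions.'' I would argue by contraposition. Fix a finite set $S$ of sequents and a sequent $s$ with $S\vdash_{\G}s$, and suppose toward a contradiction that there is \emph{no} $\wfr$-analytic derivation of $s$ from $S$. Put $X=\set{\varphi\st\varphi\wfreq\psi\text{ for some }\psi\text{ occurring in }S\cup\set{s}}$. As $S\cup\set{s}$ is finite and the $\wfr$-downset of any single formula is finite, $X$ is finite; and $X$ is closed under $\wfr$ by construction. These are precisely the standing hypotheses under which the extension property is assumed.

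The first step is to supply an adequate semantics for analytic derivations: a completeness statement to the effect that the absence of a $\wfr$-analytic derivation of $s$ from $S$ is witnessed by a differentiated $\G$-strengthened \emph{partial $X$-model} in which every sequent of $S$ is valid while $s$ is refuted at some world. I would build this by a canonical-model construction relativized to $X$, in the spirit of the completeness half of \Cref{OrisTheorem}. Worlds are maximal $X$-sequents $\g\Ra\d$ (with $\g,\d\suq X$) admitting no analytic derivation from $S$; such a world induces a partial valuation defined exactly on $X$, sending each formula of $\g$ to $\ttt$ and each formula of $\d$ to $\fff$; and the accessibility relation is laid down canonically through the context relation~$\pi$. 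Maximality guarantees that the rules and $\pi$ are respected on $X$, that the model is differentiated, and---by taking $R$ to be exactly the relation dictated by the converse of $[RC_{\pi}]$---that it is $\G$-strengthened, while restricting each semantic clause to $\dom{\cdot,w}=X$ makes it hold vacuously for formulas outside $X$.

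The second step is short. By the hypothesis of the theorem this differentiated $\G$-strengthened partial $X$-model extends to a $\G$-legal model $\M$ agreeing with it on every formula of $X$ at every world. Since all formulas occurring in $S\cup\set{s}$ lie in $X$, the extension still makes every sequent of $S$ valid and still refutes $s$ at the same world; hence $\M$ is a $\G$-legal countermodel to the claim that $s$ follows from $S$. This contradicts the soundness of $\G$ with respect to $\G$-legal models, applied to the assumed derivation $S\vdash_{\G}s$. Therefore an analytic derivation of $s$ from $S$ must exist after all, which is exactly $\wfr$-analyticity.

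I expect the first step---the relativized canonical construction---to be the main obstacle, just as completeness is the delicate direction of \Cref{OrisTheorem}. The crux is to check that the canonically defined accessibility relation can be made to witness $\G$-strengthening \emph{and} keep every $X$-restricted semantic clause satisfied at once, so that the refutation of $s$ genuinely survives into a model on which the extension hypothesis can act. The finiteness and $\wfr$-closure of $X$ are what legitimize the whole scheme: they ensure that every formula which could occur in a prospective analytic derivation already belongs to $X$, so that ``analytic unprovability over $X$'' is the correct notion to saturate. In the paper itself none of this is re-proved: the statement is imported verbatim as Corollary~5.48 of \cite{lah:avr:Unified2013}, so the only task is to confirm that our basic systems satisfy that corollary's standing assumptions.
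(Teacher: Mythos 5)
Your proposal is correct, and it is essentially a faithful reconstruction of the argument behind the cited result: the paper itself supplies no internal proof of this statement, importing it verbatim as Corollary~5.48 of \cite{lah:avr:Unified2013} (as you note in your closing paragraph), and the two-step decomposition you describe --- a canonical-model completeness theorem for $\wfr$-analytic derivability with respect to differentiated $\G$-strengthened partial $X$-models, followed by the extension hypothesis and soundness over $\G$-legal models --- is precisely the route taken in that work. Two small points are worth recording. First, the maximality step in your canonical construction works because analytic derivations permit cuts on formulas of $X$, so each maximal unprovable $X$-sequent actually \emph{decides} every formula of $X$; this is why the resulting partial valuation has domain exactly $X$ at every world (matching the paper's notion of a partial $X$-model), and it is also what makes defining $R$ as the maximal relation dictated by the converse of $[RC_{\pi}]$ compatible with the rule conditions $[RR_{r'}]$. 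Second, your choice of $X$ as the $\wfreq$-downclosure of $S\cup\set{s}$ is finite only because the particular relation $\wfr$ of this paper has finite downsets (proper subformulas together with the clauses $\ineg\gamma\wfr\wsmile\gamma$ and $\uneg\gamma\wfr\wfrown\gamma$); for an arbitrary relation this finiteness is an additional hypothesis, tacit in the theorem's restriction to finite $X$, which your argument should state explicitly rather than rely on silently.
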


And in particular:

\begin{corollary}
If, for every finite  set $X$ that is closed under $\wfr$, it holds that every differentiated $\GPKB$-strengthened partial $X$-model satisfying the semantic conditions $[\TT\#]$ and $[\FF\#]$ from Section~\ref{sec:proofsystem} can be extended to a symmetric model, then $\GPKB$ is $\wfr$-analytic.
\end{corollary}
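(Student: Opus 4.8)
The plan is to obtain this statement as a direct instantiation of the general $\wfr$-analyticity theorem stated immediately above (taken from \cite{lah:avr:Unified2013}), by setting its generic basic system $\G$ equal to $\GPKB$. The only real content is to match the abstract vocabulary of that theorem --- ``$\G$-legal model'', ``$\G$-strengthened'', ``partial $X$-model'' --- to the concrete symmetric-frame setting, and to check that the hypothesis offered here is at least as strong as the hypothesis that theorem demands.

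First I would recall, exactly as was done in \Cref{sec:proofsystem} when reading semantic conditions off the derivation rules, that a (partial) model is $\GPKB$-legal precisely when it respects $[\B_{1}]$, $[\B_{2}]$ together with the remaining rules of $\GPK$ and the context relation $\pi_{3}$; unfolding $[RR_{\B_{1}}]$, $[RR_{\B_{2}}]$ and $[RC_{\pi_{3}}]$ yields nothing but the local clauses $[\TT\#]$ and $[\FF\#]$ (restricted, in the partial case, to the formulas lying in the relevant domain). This justifies replacing the opaque phrase ``$\GPKB$-legal'' in the theorem's hypothesis by the explicit conditions $[\TT\#]$ and $[\FF\#]$ that figure in the present statement.

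Next I would invoke the verification carried out earlier in \Cref{sec-symmetry}, where it was shown that every symmetric model respects $[RR_{\B_{1}}]$, $[RR_{\B_{2}}]$ and $[RC_{\pi_{3}}]$; that is, every symmetric model is already $\GPKB$-legal. Consequently the hypothesis assumed here --- that, for every finite $\wfr$-closed $X$, each differentiated $\GPKB$-strengthened partial $X$-model extends to a \emph{symmetric} model --- entails the weaker hypothesis required by the general theorem, namely that each such partial $X$-model extends to a $\GPKB$-legal model. With the hypothesis of the general theorem thereby discharged, its conclusion is precisely that $\GPKB$ is $\wfr$-analytic.

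There is essentially no mathematical obstacle internal to this corollary: it is pure bookkeeping, and the genuine difficulty has been deliberately isolated inside its hypothesis. The one point to keep straight is the direction of the implication ``symmetric $\Rightarrow$ $\GPKB$-legal'' (and not its converse, which fails --- only $\GPKB$-strengthened models are forced to be symmetric), so that strengthening the target of the extension from ``$\GPKB$-legal'' to ``symmetric'' is legitimate. The substantive labour --- actually extending a differentiated $\GPKB$-strengthened partial $X$-model to a symmetric model --- is the still-missing analogue of \Cref{quasinstance}, and it is exactly there, rather than here, that the failure of quasi models for $\GPKB$ to be symmetric (noted in \Cref{sec-symmetry}) makes the analysis delicate.
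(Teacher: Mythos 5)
Your proposal is correct and coincides with the paper's own (implicit) proof: the corollary appears there as a bare ``in particular'' instantiation of the preceding theorem with $\G=\GPKB$, resting on exactly your key observation, already verified in \Cref{sec-symmetry}, that every symmetric model respects $[\B_{1}]$, $[\B_{2}]$ and $\pi_{3}$ and is hence $\GPKB$-legal, so that extending a partial $X$-model to a \emph{symmetric} model discharges the theorem's demand of extension to a $\GPKB$-legal one, while the genuinely hard extension step is deliberately deferred to the hypothesis (and carried out in \Cref{symmExtend}). One harmless inaccuracy in your gloss: unfolding $[RC_{\pi_{3}}]$ yields strictly more than the clauses $[\TT\#]$ and $[\FF\#]$ --- the extra pairs in $\pi_{3}$ impose the symmetry-flavoured conditions which, via Proposition 5.21 of \cite{lah:avr:Unified2013}, are what force strengthened partial models to be symmetric --- but this does not affect your argument, since the $[\TT\#]$/$[\FF\#]$ conditions are built into the paper's definition of a partial valuation, so your hypothesis quantifies over at least the class the general theorem requires and the implication runs in the right direction.
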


To establish $\wfr$-analyticity, it suffices to show that:

\begin{lemma}
\label{symmExtend}
Let $X\suq\setS$ be closed under~$\wfr$.
Every symmetric $\GPKB$-strength\-ened partial $X$-model can be extended to a symmetric  model.
\end{lemma}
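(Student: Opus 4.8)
The plan is to keep the (already symmetric) frame of the given partial model fixed and merely \emph{extend} its partial valuation to a total one, mimicking the recursive construction used in the proof of Lemma~\ref{quasinstance}. Write the given model as $PM=\tup{\tup{W,R},PV}$ with $R$ symmetric and $\dom{PV,w}=X$ for every $w\in W$. I would define a total valuation $V:W\times\setS\to\set{\fff,\ttt}$ over the \emph{same} frame $\tup{W,R}$ by recursion on $\wfr$: for $\varphi\in X$ I copy the partial data, setting $V(w,\varphi)=PV(w,\varphi)$; for $\varphi\notin X$ I read off the value dictated by the appropriate semantic clause, i.e.\ I apply clauses (M1)--(M7) of Lemma~\ref{quasinstance} to the already-defined $\wfr$-smaller values (equivalently, I set $\tru{w}{\varphi}$ exactly when the antecedent of $[\TT\#]$ holds). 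Since a partial valuation is two-valued on its domain, there is no three-valued collapse to perform, so the clauses (R1)--(R3) of Lemma~\ref{quasinstance} degenerate and the only genuine content is the clause-based definition on $\setS\setminus X$.

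Next I would verify that $\tup{\tup{W,R},V}$ is a model by checking $[\TT\#]$ and $[\FF\#]$ at every $\varphi$. For $\varphi\notin X$ this is immediate, since $V$ was defined precisely so that $[\TT\#]$ and its converse $[\FF\#]$ hold. For $\varphi\in X$ I would use that $X$ is closed under $\wfr$: all the ingredient formulas occurring in the clauses for $\varphi$ (its proper subformulas, and also $\ineg\gamma$ for $\varphi=\wsmile\gamma$, resp.\ $\uneg\gamma$ for $\varphi=\wfrown\gamma$) are $\wfr$-smaller than $\varphi$, hence lie in $X$, and so retain their $PV$-values under $V$. Because $PM$ is an $X$-model its domain is the \emph{same} set $X$ at every world, so for a formula such as $\ineg\psi\in X$ the relevant $\psi$ is assigned a value at every accessible world; the restated clauses $[\TT\#]$, $[\FF\#]$ that $PV$ respects then pin down $PV(w,\varphi)$ to exactly the value the clause-based definition would assign, and the full ``iff'' conditions for $V$ follow. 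Note also that extending $V$ on $\setS\setminus X$ cannot disturb any condition at a formula of $X$, since those conditions mention only $\wfr$-smaller formulas, all of which lie in $X$.

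Finally, since the frame is the unchanged symmetric $R$ and $V$ respects all the $[\TT\#]$/$[\FF\#]$ conditions, $\tup{\tup{W,R},V}$ is a symmetric model, and as $V$ agrees with $PV$ on $X$ it extends $PM$, as required. For good measure I would recall that such a symmetric model is automatically $\GPKB$-legal, by the earlier observation that every symmetric model respects $[\B_1]$, $[\B_2]$ and the context relation $\pi_3$; concretely, the two extra $\pi_3$-conditions (``$wRv$ and $\fal{v}{\ineg\varphi}$ imply $\tru{w}{\varphi}$'' and ``$wRv$ and $\tru{v}{\uneg\varphi}$ imply $\fal{w}{\varphi}$'') hold because $\ineg$ and $\uneg$ are interpreted uniformly by their clauses and $R$ is symmetric.

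I expect the main obstacle to be the bookkeeping of the second paragraph: confirming that the data inherited from $PV$ on $X$ is \emph{jointly consistent} with the clause-based values assigned on $\setS\setminus X$ and with the symmetry-induced conditions $[RC_{\pi_3}]$. The two hypotheses that make this go through—that $X$ is closed under $\wfr$, and that $PM$ has the constant domain $X$ at every world—are exactly what guarantee that whenever a compound formula lies in $X$ all of its ingredients (possibly at accessible worlds) are also evaluated by $PV$, so that the partial clauses already determine its value in the required way. It is precisely here that the present route improves on the quasi-model route to cut-admissibility: a two-valued partial valuation forces its $\GPKB$-strengthened frame to be symmetric (the generalized Proposition~5.21), whereas a three-valued quasi valuation need not, which is why the frame can be kept fixed here but could not be there.
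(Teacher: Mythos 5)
Your proof is correct and takes essentially the same route as the paper's: the paper likewise keeps the symmetric frame fixed and reuses the recursive construction of Lemma~\ref{quasinstance}, treating unassigned formulas as the undetermined case (R3)---which, since a partial valuation is two-valued on its domain, is exactly your ``copy $PV$ on $X$, apply (M1)--(M7) on $\setS\setminus X$'' scheme. Your explicit check that $\wfr$-closure of $X$ together with the constant domain at every world makes the restated partial clauses pin down the inherited values (and that the extension cannot disturb conditions at formulas of $X$) is precisely the bookkeeping the paper compresses into ``the proof carries on similarly, following the same procedure induced by (M1)--(M7).''
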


\begin{proof}
The proof is very similar to the proof of \Cref{quasinstance}. There, we provided a recursive procedure to eliminate the value $\set{\ttt,\fff}$ from a model. Here, instead of having formulas that are assigned $\set{\ttt,\fff}$, we have formulas that do not receive any assignment (as the model is \textit{partial}). We then treat such unassigned formulas as formulas that have been assigned the value $\set{\ttt,\fff}$. More concretely, the condition (R3) from the proof of \Cref{quasinstance} is replaced from ``both $\truQ{w}{\varphi}$ and $\falQ{w}{\varphi}$ hold good" to ``neither $\truQ{w}{\varphi}$ nor $\falQ{w}{\varphi}$ hold good". Then the proof carries on similarly, following the same procedure induced by (M1)--(M7).
\end{proof}

Thus, using Corollary 5.44 of \cite{lah:avr:Unified2013}, we conclude that:
\begin{corollary}
$\GPKB$ is $\wfr$-analytic.
\end{corollary}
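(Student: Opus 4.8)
The plan is to discharge the hypothesis of the Corollary that immediately precedes \Cref{symmExtend}: that corollary already reduces $\wfr$-analyticity of $\GPKB$ to a purely model-theoretic extension property, namely that for every finite $X\suq\setS$ closed under~$\wfr$, every differentiated $\GPKB$-strengthened partial $X$-model satisfying $[\TT\#]$ and $[\FF\#]$ can be extended to a symmetric model. Since \Cref{symmExtend} furnishes exactly such an extension for \emph{symmetric} partial $X$-models, the entire argument reduces to the observation that, in the present setting, the hypotheses ``differentiated'' and ``$\GPKB$-strengthened'' already entail symmetry, so that the class quantified over by the corollary is contained in the class treated by \Cref{symmExtend}.

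First I would recall the frame-theoretic effect of the context relation~$\pi_{3}$. As noted just before the statement of \Cref{symmExtend}, the generalization of Proposition~4.28 of \cite{lah:avr:Unified2013} to partial $X$-models (Proposition~5.21 there) guarantees that any partial $X$-model respecting $[RC_{\pi_{3}}]$ is symmetric; and every $\GPKB$-strengthened model respects $[RC_{\pi_{3}}]$, since being $\GPKB$-strengthened means precisely that the converse of $[RC_{\pi_{3}}]$ holds, on top of legality. Consequently, every differentiated $\GPKB$-strengthened partial $X$-model is in fact a \emph{symmetric} $\GPKB$-strengthened partial $X$-model, which is exactly the kind of object to which \Cref{symmExtend} applies.

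With this identification in hand, the conclusion follows by chaining: given a finite $X$ closed under~$\wfr$ and a differentiated $\GPKB$-strengthened partial $X$-model satisfying $[\TT\#]$ and $[\FF\#]$, the previous paragraph shows it is symmetric, and then \Cref{symmExtend} extends it to a symmetric model. This verifies the antecedent of the specialized corollary for every such~$X$, whence $\GPKB$ is $\wfr$-analytic. Note that the specialized corollary only asks for an extension to a \emph{symmetric} model, and recall that symmetric models were already shown (in the paragraph preceding \Cref{soundnessAndCompletenessForSmilesatKB}) to be $\GPKB$-legal, so the output of \Cref{symmExtend} meets the requirement of the underlying machinery without any further check.

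I expect the only genuine subtlety to lie in the bridging step of the second paragraph: one must make sure that the ``differentiated'' hypothesis carried by the general framework of \cite{lah:avr:Unified2013} is correctly converted into the ``symmetric'' hypothesis under which \Cref{symmExtend} is phrased, and that the model delivered by \Cref{symmExtend} is total and still respects $[RC_{\pi_{3}}]$ (so that it is symmetric, and hence $\GPKB$-legal, rather than merely partial). The harder combinatorial work—showing that the recursive value-assignment procedure of \Cref{quasinstance}, adapted by treating unassigned formulas as formulas bearing the value $\set{\ttt,\fff}$, can be carried out while preserving symmetry of the accessibility relation—has already been absorbed into \Cref{symmExtend}, so no fresh induction on~$\wfr$ is required at this final stage.
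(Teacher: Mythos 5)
Your proposal is correct and takes essentially the same route as the paper: the paper likewise obtains the corollary by combining the specialized instance of the quoted result from \cite{lah:avr:Unified2013} with \Cref{symmExtend}, invoking the partial-model generalization (Proposition 5.21) of Proposition 4.28 to guarantee that differentiated $\GPKB$-strengthened partial $X$-models are symmetric, so that \Cref{symmExtend} applies and its symmetric output is $\GPKB$-legal. Your bridging paragraph merely makes explicit the step the paper leaves implicit when it concludes with a bare citation, and your phrasing is accurate once you add (as you do) that it is strengthenedness---the converse of $[RC_{\pi_{3}}]$ together with legality and the inverse-closure property of $\pi_{3}$---rather than mere respect for $[RC_{\pi_{3}}]$, that forces symmetry.
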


Note that $\wfr$-analyticity for $\GPKB$ was established directly, not through cut-admissibility. For this reason, the inner structure of the rules (that does not enjoy a local $\wfr$-subformula property) did not matter here, but only the fact that partial models can be appropriately extended. 


\subsection{Transitivity}
\label{transitivitySection}
Let $\GPKFour$ be the system obtained from $\GPK$ 
by replacing ${[{\ineg}{\Ra}]}$ and ${[{\Ra}{\ineg}]}$  with the following rules:
$$[\KFour_{1}]
\quad
\dfrac{\uneg\Gamma,\Gamma',\varphi\Ra\ineg\Delta,\Delta'}{\uneg\Gamma,\uneg\Delta'\Ra\uneg\varphi,\ineg\Delta,\ineg\Gamma'}
\qquad\qquad\qquad 
[\KFour_{2}]
\quad
\dfrac{\uneg\Gamma,\Gamma'\Ra\varphi,\ineg\Delta,\Delta'}{\uneg\Gamma,\uneg\Delta',\ineg\varphi\Ra\ineg\Delta,\ineg\Gamma'}
$$
These correspond to the following basic rules: 
$\tup{ p_{1}\Ra\;;\;\pi_{4}}\rs\Ra\uneg p_{1}$ and  $\tup{\Ra p_{1}\;;\;\pi_{4}}\rs\ineg p_{1}\Ra$,
for the context relation $$\pi_{4}=\pi_{1}\cup\set{\tup{\uneg q_{1}\Ra\;;\;\uneg q_{1}\Ra},\tup{\Ra\ineg q_{1}\;;\;\Ra\ineg q_{1}}}.$$

For this relation, we have $\pi_{4}=\pi_{4}\circ\pi_{4}$.
By Proposition 4.28 of \cite{lah:avr:Unified2013}, the semantic condition imposed on $\GPKFour$-strengthened models is transitivity of the accessibility relation. In addition, every transitive model respects rules $[\KFour_{1}]$ and $[\KFour_{2}]$, and also respects the context relation $\pi_{4}$. 
For example, if $w R u$ and $V,u\Vdash\uneg\varphi\Ra$, then $V(u,\uneg\varphi)=0$. This means that there is some $v$ such that $u R v$ and $V(v,\varphi)=1$. By transitivity, we have also $w R v$ and therefore $V(w,\uneg\varphi)=0$, which means that $V,w\Vdash\uneg\varphi\Ra$.
Then, \Cref{OrisTheorem} gives us:

\begin{corollary}\label{soundnessAndCompletenessForSmilesatK4}
$\g\models_{\E_{\KFour}}\varphi$ iff $\g\vdash_{\GPKFour}\varphi$ for every $\g\cup\set{\varphi}\suq\setS$,
where $\E_{\KFour}$ is the class of transitive frames.
\end{corollary}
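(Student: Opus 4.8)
The plan is to follow verbatim the template already employed for Corollaries~\ref{soundnessAndCompletenessForSmiles}, \ref{soundnessAndCompletenessForSmilesatKD}, \ref{soundnessAndCompletenessForSmilesatKT} and~\ref{soundnessAndCompletenessForSmilesatKB}, reducing the claim to a single invocation of \Cref{OrisTheorem}. First I would note that $\GPKFour$ is a basic system employing the context relation~$\pi_{0}$ together with~$\pi_{4}$, and that $\pi_{1}\suq\pi_{4}$ holds by the very definition of~$\pi_{4}$ as $\pi_{1}$ augmented with two further pairs; this is exactly the hypothesis that \Cref{OrisTheorem} requires. It then remains to exhibit a class of $\GPKFour$-legal models that contains every differentiated $\GPKFour$-strengthened model, and the obvious candidate is the class of all transitive models, which is precisely the class over which $\models_{\E_{\KFour}}$ quantifies.

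Two inclusions must be checked, both already furnished by the discussion preceding the statement. On the one hand, every transitive model is $\GPKFour$-legal: the conditions $[RR_{\KFour_{1}}]$ and $[RR_{\KFour_{2}}]$ are verified as in the case of $\GPK$, while $[RC_{\pi_{4}}]$ follows from transitivity (the sample computation with $\uneg\varphi$ displayed above, together with its dual for $\ineg\varphi$). On the other hand, every $\GPKFour$-strengthened model is transitive: this is where Proposition~4.28 of~\cite{lah:avr:Unified2013} enters, its applicability being secured by the equation $\pi_{4}=\pi_{4}\circ\pi_{4}$. In particular every differentiated $\GPKFour$-strengthened model is transitive, and hence belongs to the chosen class, so that the class of transitive models is indeed a class of $\GPKFour$-legal models containing all differentiated $\GPKFour$-strengthened models.

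With both inclusions in hand, \Cref{OrisTheorem} delivers soundness and completeness of $\GPKFour$ with respect to the class of transitive models. Since a model is transitive precisely when it is based on a frame in~$\E_{\KFour}$, this model-level adequacy transfers immediately to the frame-level statement $\g\models_{\E_{\KFour}}\varphi$ iff $\g\vdash_{\GPKFour}\varphi$, exactly as Corollary~\ref{soundnessAndCompletenessForSmiles} was read off from \Cref{OrisTheorem}. I anticipate no genuine obstacle here: the only step demanding any care is the simultaneous verification of $[RC_{\pi_{4}}]$ and of its converse, that is, the identification of the $\GPKFour$-strengthened models with the transitive ones, and this is precisely what the composition identity $\pi_{4}=\pi_{4}\circ\pi_{4}$ is engineered to guarantee through Proposition~4.28.
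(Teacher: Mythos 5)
Your proposal is correct and follows essentially the same route as the paper: the paper likewise checks that every transitive model respects $[\KFour_{1}]$, $[\KFour_{2}]$ and the context relation $\pi_{4}$, uses the identity $\pi_{4}=\pi_{4}\circ\pi_{4}$ together with Proposition~4.28 of \cite{lah:avr:Unified2013} to conclude that $\GPKFour$-strengthened models are transitive, and then obtains the corollary by a single application of \Cref{OrisTheorem} (whose hypothesis $\pi_{1}\suq\pi_{4}$ holds by construction, just as you note). The only cosmetic difference is that you spell out the reduction from model-level adequacy to the frame-level statement, which the paper leaves implicit.
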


From the viewpoint of sub-classical properties of negation, some important consecutions validated by transitive frames are collected in \Cref{framePropertiesD} (\Cref{sec:coda}).  And transitivity also has a role to play at \Cref{framePropertiesB,morenegations} (\Cref{sec:coda}), in the presence of appropriate adjustment connectives.

As for cut-admissibility, Prop.~5.21 of \cite{lah:avr:Unified2013} guarantees that strengthened quasi-models that respect $[\KFour_{1}]$ and $[\KFour_{2}]$ and~$\pi_{4}$ are transitive as well. In addition, it is easy to verify that transitive quasi models respect them. Therefore, arguing about cut-admissibility reduces again to finding an instance to every transitive quasi model. Now, since the properties of the accessibility relation bore no effect on this procedure, as was demonstrated in the proof of \Cref{quasinstance}, we obtain:

\begin{corollary}
$\GPKFour$ enjoys cut-admissibility and is $\wfr$-analytic.
\end{corollary}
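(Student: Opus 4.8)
The plan is to carry $\GPKFour$ through exactly the same two-step template already used for $\GPK$ in \Cref{quasinstance} and \Cref{PKanalyticity}, and re-deployed for $\GPKD$ and $\GPKT$. For cut-admissibility I would invoke the semantic criterion imported from \cite{lah:avr:Unified2013} (the instance-based ``Corollary~5.48'' specialized earlier in this section): it reduces strong cut-admissibility of $\GPKFour$ to the purely semantic claim that every $\GPKFour$-strengthened differentiated quasi model admits a $\GPKFour$-legal instance. The first thing to observe is that, because of the identity $\pi_4=\pi_4\circ\pi_4$, the semantic content attached to the new rules $[\KFour_1]$, $[\KFour_2]$ and the context relation $\pi_4$ is precisely transitivity, so that on the strengthened quasi models at issue the predicates ``$\GPKFour$-legal'' and ``transitive'' coincide.

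Concretely, I would first pin down legality: appealing to Proposition~5.21 of \cite{lah:avr:Unified2013}, every $\GPKFour$-strengthened quasi model respecting $[\KFour_1]$, $[\KFour_2]$ and $\pi_4$ is transitive, while a direct verification — of the same flavour as the soundness check preceding \Cref{soundnessAndCompletenessForSmilesatK4}, e.g. that $w\,R\,u$ and $\falQ{u}{\uneg\varphi}$ force $\falQ{w}{\uneg\varphi}$ by transitivity — shows conversely that every transitive quasi model respects them. The problem thus collapses to producing a transitive instance of an arbitrary transitive quasi model. For this — and this is the crux — I would reuse the recursive construction of $V$ along $\wfr$ from the proof of \Cref{quasinstance} \emph{without any modification whatsoever}. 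The decisive point is that that construction never touches the accessibility relation: the instance $\tup{\tup{W,R},V}$ is erected over the very same frame $\tup{W,R}$ as the quasi model. Hence if $R$ was transitive at the outset, the instance is automatically a transitive, and therefore $\GPKFour$-legal, model, and strong cut-admissibility follows from the general criterion.

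For $\wfr$-analyticity I would argue as in \Cref{PKanalyticity}: once cut-admissibility is in hand, a routine induction on the length of a derivation of $s$ from $S$ does the job, the only new obligation being the local $\wfr$-subformula property for $[\KFour_1]$ and $[\KFour_2]$. Inspecting the premises confirms it: each formula of a premise either already occurs in the conclusion (the $\uneg\Gamma$ and $\ineg\Delta$ blocks) or is a proper $\wfr$-subformula of a conclusion formula — a $\gamma'$ sits under the conclusion's $\ineg\gamma'$, a $\delta'$ under its $\uneg\delta'$, and $\varphi$ under $\uneg\varphi$ (resp.\ $\ineg\varphi$) — so no formula outside the prescribed $\wfr$-cone is ever introduced.

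I do not expect any genuine obstacle here; the whole argument is a transcription of the $\GPK$ case. The single point demanding care is the bookkeeping that transitivity is \emph{literally inherited} by the unaltered frame. This is exactly the feature that breaks down for $\GPKF$ — where the strengthened quasi models need not be functional, forcing the instance to be rebuilt over a thinned relation $R'\subseteq R$ and the construction to be patched via (M4$^\star$)/(M7$^\star$) in \Cref{func-instance}. For $\GPKFour$ no such patching is needed, since transitivity is preserved by the identity-on-$R$ construction for free, and this is precisely why $\GPKFour$ falls squarely within the scope of the unmodified \Cref{quasinstance} argument.
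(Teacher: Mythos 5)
Your proposal is correct and follows essentially the same route as the paper: it too uses Proposition~5.21 of \cite{lah:avr:Unified2013} to get transitivity of strengthened quasi models, checks that transitive quasi models respect $[\KFour_1]$, $[\KFour_2]$ and $\pi_4$, and then observes that the construction of \Cref{quasinstance} leaves the accessibility relation untouched, so transitivity carries over to the instance for free. Your extra bookkeeping --- the explicit local $\wfr$-subformula check for $[\KFour_1]$/$[\KFour_2]$ and the contrast with the $\GPKF$ case where the relation had to be thinned --- merely spells out details the paper leaves implicit.
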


\subsection{Combining frame properties}
\label{serialityTransitivitySection}

In some cases it is also possible to apply our machinery to provide useful sequent systems for logics whose semantics combine more than one of the properties studied in isolation in the previous sections.
We present two illustrations of that in the present section.  First we consider the system $\GPKFourD$ obtained from $\GPKFour$ by augmenting the latter with the following rule:
$$[\D_{4}]
\quad
\dfrac{\uneg\Gamma',\Gamma\Ra\Delta,\ineg\Delta'}{\uneg\Gamma',\uneg\Delta\Ra\ineg\Gamma,\ineg\Delta'}
\qquad\qquad\qquad 
$$
It corresponds to the following basic rule: 
$\tup{ \Ra\;;\;\pi_{4}}\rs\Ra$,
for the same context relation~$\pi_{4}$ defined in \Cref{transitivitySection}.
Similarly to what was shown in \Cref{serialitySection}, one may now show that every model (and in particular every transitive model) that respects this rule is serial, and every serial model respects it. The same holds for quasi models.
Together with what we have just seen regarding transitive (quasi) models, we note that every model that is both transitive and serial respects $[\D_{4}]$ and $\pi_{4}$, and every $\GPKFourD$-strengthened model that respects them is transitive and serial. Proceeding exactly as before, we now obtain that:

\begin{corollary}
$\g\models_{\E_{\GPKFourD}}\varphi$ iff $\g\vdash_{\GPKFourD}\varphi$ for every $\g\cup\set{\varphi}\suq\setS$,
where $\E_{\GPKFourD}$ is the class of serial transitive frames.
In addition,
$\GPKFourD$ enjoys cut-admissibility and is $\wfr$-analytic.
\end{corollary}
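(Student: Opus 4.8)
The plan is to reuse, one last time, the three-part template established for the previous systems, exploiting the fact that $\GPKFourD$ is again a basic system employing only $\pi_{0}$ and the context relation $\pi_{4}$, with $\pi_{1}\suq\pi_{4}$. The rule $[\D_{4}]$ contributes the basic rule $\tup{\Ra\;;\;\pi_{4}}\rs\Ra$ with empty premise, so the whole analysis splits cleanly into a seriality part (coming from the empty-premise rule, as for $[\D]$) and a transitivity part (coming from $\pi_{4}$, as for $\GPKFour$).

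First, for soundness and completeness I would invoke \Cref{OrisTheorem}, checking that the class of serial transitive models is a class of $\GPKFourD$-legal models containing all differentiated $\GPKFourD$-strengthened models. Both halves are already assembled in \Cref{transitivitySection} and in the discussion preceding the statement: every serial transitive model respects $[\KFour_{1}]$, $[\KFour_{2}]$, $[\D_{4}]$ and the context relation $\pi_{4}$, so it is $\GPKFourD$-legal; conversely every $\GPKFourD$-strengthened model is transitive (because $\pi_{4}=\pi_{4}\circ\pi_{4}$, via Proposition~4.28 of~\cite{lah:avr:Unified2013}) and serial (because $[\D_{4}]$ has the empty premise, exactly as $[\D]$ forced seriality in \Cref{serialitySection}). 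Since $\pi_{1}\suq\pi_{4}$, \Cref{OrisTheorem} then delivers the claimed equivalence.

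Second, for cut-admissibility I would appeal to the semantic criterion of~\cite{lah:avr:Unified2013} (its Corollary~5.48) and show that every $\GPKFourD$-strengthened differentiated quasi model has a $\GPKFourD$-legal instance. Proposition~5.21 of~\cite{lah:avr:Unified2013} guarantees that such a quasi model, respecting $[\D_{4}]$ and $\pi_{4}$, is itself serial and transitive, and conversely every serial transitive quasi model respects them. The decisive observation---already used for $\GPKD$ in \Cref{Dinstance} and for $\GPKFour$ in \Cref{transitivitySection}---is that the instance built in \Cref{quasinstance} carries over the accessibility relation unchanged and makes no use of any property of $R$. Hence the instance of a serial transitive quasi model is automatically serial and transitive, so it is $\GPKFourD$-legal, and cut-admissibility follows. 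Finally, $\wfr$-analyticity is obtained from cut-admissibility together with the local $\wfr$-subformula property of the rules, exactly as in \Cref{PKanalyticity}: apart from $(cut)$, each rule introduces only formulas that already occur in, or are proper $\wfr$-subformulas of formulas occurring in, its conclusion.

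The only point that warrants a second look, and where I expect the sole (mild) obstacle, is the \emph{simultaneous} transfer of the two frame properties through the instance construction. This is painless here precisely because seriality and transitivity are both conditions on $R$ alone and the construction of \Cref{quasinstance} keeps $R$ fixed, so each property survives independently. Had the instance been forced to modify the accessibility relation---as happens for functionality in \Cref{func-instance}, where one passes to a subrelation $R'\suq R$---one would need a fresh argument to confirm that the added property is preserved. Since no such modification occurs for $\GPKFourD$, combining the two properties reduces to the conjunction of the two separate analyses, with no genuinely new work.
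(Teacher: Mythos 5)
Your proposal is correct and follows essentially the same route as the paper: seriality read off the empty-premise rule $[\D_{4}]$ exactly as for $[\D]$, transitivity from $\pi_{4}$ via Proposition~4.28/5.21 of \cite{lah:avr:Unified2013}, completeness from \Cref{OrisTheorem} since $\pi_{1}\suq\pi_{4}$, and cut-admissibility plus $\wfr$-analyticity from the instance construction of \Cref{quasinstance}, which---as you correctly stress, and as the paper does---preserves both frame properties because it leaves the accessibility relation untouched. Your closing remark contrasting this with the functional case (\Cref{func-instance}), where the relation is modified, matches the paper's implicit reasoning and adds a useful clarification, but introduces no deviation in substance.
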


Frames which are at once serial and transitive have a role to play validating some consecutions concerning qualified forms of explosion and implosion, found at \Cref{framePropertiesD} (\Cref{sec:coda}).

Next, let $\GPKDB$ be the system that augments the system $\GPKB$ with the following rule:
$$[\D_{B}] \quad \dfrac{\ineg\Gamma',\Gamma\Ra\Delta,\uneg\Delta'}{\Delta',\uneg\Delta\Ra\ineg\Gamma,\Gamma'}$$ 
%
This rule corresponds to the following basic rule:
$\tup{ \Ra\;;\;\pi_{3}}\rs\Ra$, for the relation $\pi_{3}$ defined in \Cref{sec-symmetry}.

Similarly to what was shown in \Cref{serialitySection}, one may now show that every model (and in particular every symmetric model) that respects this rule is serial, and every serial model respects it. The same holds for partial models.
Together with what we have seen in \Cref{sec-symmetry} regarding symmetric (quasi) models, we have that every model that is both symmetric and serial respects $[\D_{B}]$ and $\pi_{3}$, and every $\GPKDB$-strengthened model that respects them is symmetric and serial. Proceeding exactly as before, we obtain that:

\begin{corollary}
$\g\models_{\E_{\GPKDB}}\varphi$ iff $\g\vdash_{\GPKDB}\varphi$ for every $\g\cup\set{\varphi}\suq\setS$, 
where $\E_{\GPKDB}$ is the class of serial symmetric frames.
In addition,
$\GPKDB$  is $\wfr$-analytic.
\end{corollary}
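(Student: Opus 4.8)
The plan is to establish both assertions by combining, exactly in the manner already rehearsed for the individual frame properties, the seriality analysis of \Cref{serialitySection} with the symmetry analysis of \Cref{sec-symmetry}. Since $\GPKDB$ is obtained from $\GPKB$ by adding only the rule $[\D_{B}]$, whose basic form $\tup{\Ra\;;\;\pi_{3}}\rs\Ra$ has empty premise and conclusion main sequents, the role of $[\D_{B}]$ is exactly analogous to that of $[\D]$: it contributes seriality and nothing else, while the context relation $\pi_{3}$ continues to contribute symmetry precisely as it does in $\GPKB$. Thus the two effects are meant to be read off separately and then merged.

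For soundness and completeness, first I would check that every serial symmetric model is $\GPKDB$-legal. The rules inherited from $\GPKB$, together with $[RC_{\pi_{3}}]$, are respected by every symmetric model (verified in \Cref{sec-symmetry}); and since the empty sequent is satisfied at no world, respecting $[\D_{B}]$ amounts to seriality, so every serial model respects it. Conversely, I would show that every differentiated $\GPKDB$-strengthened model is serial and symmetric: symmetry follows from $\pi_{3}$ via Proposition 4.28 of \cite{lah:avr:Unified2013} exactly as in \Cref{sec-symmetry}, and seriality follows from $[\D_{B}]$ exactly as $[\D]$ forces seriality in \Cref{serialitySection}. Because $\pi_{1}\suq\pi_{3}$, the class of serial symmetric models then sits between the $\GPKDB$-legal models and the differentiated $\GPKDB$-strengthened ones, so \Cref{OrisTheorem} delivers soundness and completeness with respect to $\E_{\GPKDB}$.

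For $\wfr$-analyticity I would follow the partial-model route used for $\GPKB$ rather than the quasi-model route, since (as with $\GPKB$) quasi models for a symmetric system need not be symmetric. Fixing a finite $X\suq\setS$ closed under $\wfr$ and a differentiated $\GPKDB$-strengthened partial $X$-model $PM$, the generalization to partial $X$-models (Proposition 5.21 of \cite{lah:avr:Unified2013}) of the symmetry argument makes the frame of $PM$ symmetric, and the partial-model version of the $[\D_{B}]$ analysis makes it serial. I would then extend the partial valuation to a total one using the recursive procedure of \Cref{symmExtend} (itself a reworking of \Cref{quasinstance}), treating unassigned pairs as if carrying the value $\set{\ttt,\fff}$ and resolving them through (M1)--(M7). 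The decisive point is that this procedure never alters the accessibility relation, so the resulting total model has the same serial symmetric frame as $PM$ and is therefore $\GPKDB$-legal; invoking the partial-model analyticity criterion already used for $\GPKB$ (Corollaries 5.48 and 5.44 of \cite{lah:avr:Unified2013}) then yields $\wfr$-analyticity.

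The step I expect to require the most care is confirming that the seriality- and symmetry-forcing arguments genuinely compose without interference at the level of strengthened and of partial $X$-models --- in particular, that adding the serial-forcing rule $[\D_{B}]$ over the symmetric context relation $\pi_{3}$ does not disturb the symmetry extraction, and that the valuation-extension procedure of \Cref{symmExtend}, which leaves $R$ untouched, preserves \emph{both} properties simultaneously. Once this is secured, everything else reduces to the bookkeeping already carried out in \Cref{serialitySection,sec-symmetry}.
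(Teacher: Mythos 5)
Your proposal is correct and takes essentially the same approach as the paper: the paper likewise reads seriality off the basic rule $\tup{\Ra\;;\;\pi_{3}}\rs\Ra$ as in \Cref{serialitySection}, reads symmetry off $\pi_{3}$ as in \Cref{sec-symmetry} (via Proposition 4.28 for strengthened models and Proposition 5.21 for partial $X$-models), invokes \Cref{OrisTheorem} for soundness and completeness, and obtains $\wfr$-analyticity through the partial-model extension of \Cref{symmExtend} rather than through cut-admissibility --- exactly the route you chose, and for the same reason that quasi models for symmetric systems need not be symmetric. Your write-up simply spells out the details the paper compresses into ``proceeding exactly as before,'' including the key observation that the valuation-extension procedure leaves the accessibility relation untouched and hence preserves both frame properties at once.
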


Frames which are at once serial and symmetric play a role validating some mixed double negation consecutions found at \Cref{morenegations} (\Cref{sec:coda}).  We leave it to the reader to check the special form of global contraposition that is validated by the latter frames, found at the last row of \Cref{framePropertiesC} (\Cref{sec:coda}).


\section{Definability of classical negation}
\label{sec:definability}
In this section we investigate definability of classical negation in the modal logics studied in this paper.
Given a set~$C$ of connectives and a logic~${\bf L}$, we denote by $\Lano{{\bf L}}{C}$ the $C$-free fragment of~${\bf L}$, that is, the restriction of~${\bf L}$ to the language without the connectives in~$C$.
\bigskip

\begin{theorem}~
\begin{enumerate}
	\item Classical negation is definable in the logics: \\
	$\Lano{PKT}{\set{{\ineg},\wsmile}}$, $\Lano{PKT}{\set{{\uneg},\wfrown}}$, 
	$PKD$,  $PKF$, $PKD4$ and $PKDB$.
	\item Classical negation is {\em not\!} definable in the logics: \\
	 $PK$, $PKB$, $PK4$, $\Lano{PKT}{\set{\wsmile,\wfrown}}$, $\Lano{PKDB}{\set{\wsmile,\wfrown}}$, \\
	 $\Lano{PKD}{\set{\wsmile}}$, $\Lano{PKD}{\set{\wfrown}}$, $\Lano{PKF}{\set{\wsmile}}$,  $\Lano{PKF}{\set{\wfrown}}$, \\
	  $\Lano{PKD4}{\set{\wfrown}}$ and $\Lano{PKD4}{\set{\wsmile}}$.
\end{enumerate}
\end{theorem}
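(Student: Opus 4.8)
For the positive claims in item~(1) the plan is simply to exhibit explicit definienda and verify them semantically, invoking the soundness-and-completeness results (\Cref{soundnessAndCompletenessForSmilesatKT,soundnessAndCompletenessForSmilesatKD}, and the combined-frame corollaries) so that a single check against the $[S\#]$ clauses suffices. For the two reflexive fragments I would reuse the definitions already recorded in \Cref{aroundCN}: in $\Lano{PKT}{\set{\uneg,\wfrown}}$ set ${\sim}\varphi:=\ineg\varphi\land\wsmile\varphi$, and in $\Lano{PKT}{\set{\ineg,\wsmile}}$ set ${\sim}\varphi:=\uneg\varphi\lor\wfrown\varphi$; reflexivity makes the equivalence ``${\cal M},w\Vdash{\sim}\varphi$ iff ${\cal M},w\not\Vdash\varphi$'' immediate from $[S\ineg]$/$[S\wsmile]$ and their duals. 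For all four serial logics $PKD$, $PKF$, $PKD4$ and $PKDB$ I would use the single serial definition ${\sim}\varphi:=(\uneg\varphi\land\wsmile\varphi)\lor\wfrown\varphi$ and check it once over the class of all serial frames; since $\E_{\FUNC}$, $\E_{\GPKFourD}$ and $\E_{\GPKDB}$ are all sub-classes of $\E_{\D}$, the validated equivalence is inherited with no further work. Throughout I would exploit the truth/falsity duality swapping $\land\leftrightarrow\lor$, $\top\leftrightarrow\bot$, $\ineg\leftrightarrow\uneg$ and $\wsmile\leftrightarrow\wfrown$ (under which classical negation is self-dual) to halve the casework.

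The negative claims in item~(2) are the substantial part. The guiding observation is that, because every propositional variable already belongs to each fragment, one can never separate two worlds that disagree on the variable~$q$ being negated; hence a putative definiens $N(q)$ cannot be defeated by a pair of equivalent-but-distinct worlds, but must be pinned, at a single well-chosen configuration, to a value incompatible with classical negation. For the full-language non-serial logics $PK$, $PKB$ and $PK4$ I would use a frame with a dead end: the one-point frame with empty accessibility relation is at once symmetric and transitive, hence lies in $\E$, $\E_{\B}$ and $\E_{\KFour}$. At a dead end every modal and adjustment formula takes a value independent of its argument ($\ineg\psi,\wfrown\psi$ unsatisfied; $\uneg\psi,\wsmile\psi$ satisfied), so $N(q)$ reduces there to a formula built from $q,\top,\bot,\land,\lor$ alone, i.e.\ to a \emph{monotone} Boolean function of the local value of $q$; no such function is complementation, and choosing the value of $q$ appropriately falsifies the required equivalence.

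For the serial and reflexive fragments I would instead build small countermodels in which the ``negation pattern'' is shown unrealizable by induction on formula complexity. For the adjustment-free fragments $\Lano{PKT}{\set{\wsmile,\wfrown}}$ and $\Lano{PKDB}{\set{\wsmile,\wfrown}}$ I would take the complete frame on $\{w_1,w_2\}$ (reflexive, symmetric, serial and transitive) with $q$ satisfied exactly at $w_1$: since both worlds see the same successors, every $\ineg$- and $\uneg$-subformula is world-constant, whence every fragment formula is monotone in its one free occurrence of~$q$ and so its value at $w_1$ must dominate its value at $w_2$ --- the reverse of what ${\sim}q$ demands. For the one-adjustment fragments I would use the three-world model $W=\set{w_1,w_2,u}$ with $w_1 R u$, $w_2 R u$, $u R u$, which is simultaneously serial, functional and transitive, under the valuation satisfying $q$ at $w_1$ and at~$u$ but not at~$w_2$. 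The crucial trick is that placing~$q$ true at the shared successor~$u$ neutralizes $\wsmile$ (no glut $\varphi\land\ineg\varphi$ survives, so $\wsmile\varphi$ is forced satisfied), so that a direct computation of the closure of $\set{q,\top,\bot}$ under $\land,\lor,\ineg,\uneg,\wsmile$ never yields the pattern ``unsatisfied at $w_1$, satisfied at $w_2$'' carried by ${\sim}q$. This one model refutes $\Lano{PKD}{\set{\wfrown}}$, $\Lano{PKF}{\set{\wfrown}}$ and $\Lano{PKD4}{\set{\wfrown}}$ at once, and the duality above transfers each conclusion to the corresponding $\wsmile$-free fragment.

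I expect the main obstacle to be precisely the design of the valuations in this last group: the inductive invariant survives only because the single remaining adjustment connective has been rendered inert by the placement of~$q$ at the common successor, and striking this balance --- a serial/functional/transitive frame together with $q$ true at $w_1$ and $u$ but false at $w_2$ --- is the delicate move. Verifying that the closure computation genuinely avoids the negation pattern is routine but is where all the bookkeeping resides.
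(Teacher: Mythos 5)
Your proposal is correct and follows essentially the same strategy as the paper: the same definientia in item (1) (with the single serial definition $(\uneg\varphi\land\wsmile\varphi)\lor\wfrown\varphi$ covering $PKD$, $PKF$, $PKD4$ and $PKDB$), and in item (2) the same four-way case split with small countermodels in which a monotonicity/domination invariant, proved by induction on the putative definiens, contradicts the value pattern that ${\sim}p$ must realize. The differences are modest but worth recording. In item (1) the paper verifies the definitions \emph{proof-theoretically}: it derives $\Ra\varphi,{\sim}\varphi$ and $\varphi,{\sim}\varphi\Ra$ in $\GPKT$ and $\GPKD$ (\Cref{PKTuDef,PKDDef}) and inherits these to $\GPKF$, $\GPKFourD$ and $\GPKDB$ because $[\D]$ is an instance of $[\D_4]$ and of $[\D_B]$; your semantic check over all serial frames, inherited to the subclasses via the completeness corollaries, buys the same conclusion (for $PKF$ one must read $\ineg$ for $\uneg$, since that language omits $\uneg$). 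In item (2) the paper works uniformly with \emph{two} worlds $w,v$: empty relation for $PK$, $PKB$, $PK4$ (where you use a one-point dead end --- equally legitimate, since the empty relation is symmetric and transitive), $R=W\times W$ for the two $\set{\wsmile,\wfrown}$-free fragments exactly as you do, and $R=\set{\tup{w,w},\tup{v,w}}$ (resp.\ $\set{\tup{w,v},\tup{v,v}}$) for the one-adjustment fragments; your three-world model $w_1,w_2,u$ is just this configuration with the common fixed successor separated from the comparison pair, and your closure computation (the achievable patterns $\set{(1,0,1),(1,1,1),(0,0,0)}$ with invariant ``values at $w_1,w_2$ are dominated by the value at $u$'') is the same induction as the paper's claim that $\fal{w}{\varphi}$ implies $\fal{v}{\varphi}$. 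One small patch is needed: a putative definiens ${\sim}(p)$ may contain several occurrences of $p$ and possibly \emph{other} propositional variables, so the valuation must pin down all atoms uniformly (as the paper does with $V(w,q)=1$ and $V(v,q)=0$ for every atomic $q$); your phrase ``monotone in its one free occurrence of $q$'' glosses over this, though the fix is immediate. Likewise, your claim that $\wsmile$ is ``forced satisfied'' holds only on the achievable value patterns --- it is part of the inductive invariant rather than an unconditional feature of the model --- which you do acknowledge, and which your verification would have to make explicit.
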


\begin{proof}~\\
(1)
For $\Lano{PKT}{\set{{\ineg},\wsmile}}$ we set ${\sim}\varphi:=\uneg\varphi\vee\wfrown\varphi$, for $\Lano{PKT}{\set{{\uneg},\wfrown}}$ we set ${\sim}\varphi:=\ineg\varphi\wedge\wsmile\varphi$, and for $PKD$ and $PKF$ we may set ${\sim}\varphi:=(\uneg\varphi\wedge\wsmile\varphi)\vee\wfrown\varphi$ 
or, dually, set ${\sim}\varphi:=(\ineg\varphi\vee\wfrown\varphi)\wedge\wsmile\varphi$.
It is easy to see that $\Ra\varphi,{\sim}\varphi$ and $\varphi,{\sim}\varphi\Ra$ are derivable in each system
for the defined connective~${\sim}$.
Using cut, one obtains the usual sequent rules for classical negation.
\Cref{PKTuDef} exhibits the derivations for $\Lano{PKT}{\set{\wsmile,{\ineg}}}$ (the derivations for  $\Lano{PKT}{\set{\wfrown,{\uneg}}}$ are analogous), and
\Cref{PKDDef} provides the derivations for $PKD$ for the first definition above (we leave the second as an exercise for the reader).
Given that $\GPKF$ is a deductive extension of $\GPKD$, the derivations in \Cref{PKDDef} are also good for $PKF$. The same holds for $\GPKD4$ and for $\GPKDB$, as the rule $[\D]$ is a particular instance of the rule $[\D_4]$ 
and also a particular instance of the rule $[\D_B]$.

\begin{figure}[hb]
\centering
  \begin{tabular}{cc}
    $ \dfrac{\dfrac{\varphi\Ra\varphi}{\varphi,\uneg\varphi\Ra}[\T_{2}]\;\;\;\;\dfrac{\varphi\Ra\varphi,\uneg\varphi}{\varphi,\wfrown\varphi\Ra}[{\wfrown}{\Ra}]}{\varphi,\uneg\varphi\vee\wfrown\varphi\Ra}[{\vee}{\Ra}]$
&
        $ \dfrac{\dfrac{{\varphi\Ra\varphi}\;\;\;\;{\uneg\varphi\Ra\uneg\varphi}}{\Ra\varphi,\uneg\varphi,\wfrown\varphi}[{\Ra}{\wfrown}]}{\Ra\varphi,\uneg\varphi\vee\wfrown\varphi}[{\Ra}{\lor}] $
  \end{tabular}
  \caption{Definability of negation in $\Lano{PKT}{\set{\protect\wsmile,{\ineg}}}$}
 \label{PKTuDef}
\end{figure}

\begin{figure}
\centering
$\dfrac{
	\dfrac{
		\dfrac{
			\ddfrac{}{\varphi\Ra\varphi} \;\;\;\;
			\dfrac{
				\varphi\Ra\varphi
			}{
				\uneg\varphi\Ra\ineg\varphi
			}[\D]
			}{
				\varphi,\uneg\varphi,\wsmile\varphi\Ra
			}[{\wsmile}{\Ra}]
		}{
			\varphi,\uneg\varphi\wedge\wsmile\varphi\Ra
		}[{\land}{\Ra}] \;\;\;\;
		\dfrac{\varphi\Ra\varphi,\uneg\varphi}{
			\varphi,\wfrown\varphi\Ra
		}[{\wfrown}{\Ra}]
	}{
		\varphi,(\uneg\varphi\wedge\wsmile\varphi)\vee\wfrown\varphi\Ra
	}[{\lor}{\Ra}]$

\bigskip		
    $\dfrac{\dfrac{\dfrac{\varphi\Ra\varphi \;\;\;\; \uneg\varphi\Ra\uneg\varphi}{\Ra\varphi,\uneg\varphi,\wfrown\varphi}[{\Ra}{\wfrown}] \;\;\;\; \dfrac{\varphi,\ineg\varphi\Ra\varphi,\wfrown\varphi}{\Ra\varphi,\wsmile\varphi,\wfrown\varphi}[{\Ra}{\wsmile}]}{\Ra\varphi,(\uneg\varphi\wedge\wsmile\varphi),\wfrown\varphi}[{\Ra}{\land}]}{\Ra\varphi,(\uneg\varphi\wedge\wsmile\varphi)\vee\wfrown\varphi}[{\Ra}{\lor}]$
    \caption{Definability of negation in ${PKD}$}
  \label{PKDDef}
\end{figure}

\medskip

\noindent
(2) 
		Let $X$ be one of the logics listed in the statement, and suppose for the sake of contradiction that classical negation~$\sim$ is definable in~$X$. Let $p\in\setP$ and~let $\varphi$ be ${\sim}(p)$. Then both $\Ra\varphi,p$ and $p,\varphi\Ra$ are valid in~$X$.
		Consider a set $W$ that consists of two worlds, $w$ and $v$, and a valuation $V$ such that $V(w,q)=1$ and $V(v,q)=0$ for every atomic formula~$q$ (includ\-ing~$p$).
		Now, for each relation $R_X$ on $W$,  consider the model $\M_X=\tup{\tup{W,R_X},V}$.
		If $\M_X$ belongs to the class of models that semantically characterize $X$, then we must have  that $\M_X,w\Vdash\varphi,p\Ra$ and $\M_X,v\Vdash\;\Ra p,\varphi$. Since in $\M_X$ we have $\tru{w}{p}$ and $\fal{v}{p}$, we must then have $\fal{w}{\varphi}$ and $\tru{v}{\varphi}$.
		We show that this is impossible, by structural induction on $\varphi$. More precisely, we claim that if $\fal{w}{\varphi}$ then $\fal{v}{\varphi}$.
		To show this, we consider the given values for~$X$, and define the accessibility relation $R_X$ in each case.
		We divide the possible values for $X$ into four cases:

\begin{enumerate}[(A)]
			\item For $X{\in}\set{PK,PKB,PK4}$ define $R_{X}=\varnothing$. Note that for these three logics,
			$M_X$ belongs to the appropriate class of models.
			 If $\varphi\in\setP$, then $\tru{w}{\varphi}$ by the definition of the valuation, hence the claim trivially holds. If $\varphi=\varphi_{1}\wedge\varphi_{2}$ for some $\varphi_{1},\varphi_{2}$ and $\fal{w}{\varphi}$, then by [$\TT\land$] we must have that either $\fal{w}{\varphi_{1}}$ or $\fal{w}{\varphi_{2}}$. By the induction hypothesis, we have that either $\fal{v}{\varphi_{1}}$ or $\fal{v}{\varphi_{2}}$, hence $\fal{v}{\varphi}$ by [$\FF\land$]. If $\varphi=\varphi_{1}\vee\varphi_{2}$ for some $\varphi_{1},\varphi_{2}$ then this is shown similarly.
			Now, since $R_{X}=\varnothing$, for every formula~$\psi$ we have (a) $\fal{w}{\ineg\psi}$ and (b) $\fal{v}{\ineg\psi}$. In view of (b), the claim is true if $\varphi=\ineg\psi$ for some~$\psi$.  By [$\TT\wsmile$], from (a) we conclude that $\tru{w}{\wsmile\psi}$. Thus, the claim also holds if $\varphi=\wsmile\psi$, for some~$\psi$.
			Similarly, for every formula~$\psi$ we have (c) $\tru{w}{\uneg\psi}$ and (d) $\tru{v}{\uneg\psi}$. In view of (c), the claim is true if $\varphi=\uneg\psi$ for some~$\psi$.  By [$\FF\wfrown$], from (d) we conclude $\fal{v}{\wfrown\psi}$. Thus, the claim also holds if $\varphi=\wfrown\psi$, for some~$\psi$.

			\item\label{reflexiveCase} For $X=\Lano{PKT}{\set{\wsmile,\wfrown}}$ or $X=\Lano{PKDB}{\set{\wsmile,\wfrown}}$ define $R_{X}=W\times W$. First note that since $R_{X}$ is reflexive and also symmetric, $M_X$ belongs to the appropriate class of models. If $\varphi$ is a propositional variable, a conjunction, or a disjunction, then the proof is analogous to the previous case. If $\varphi=\ineg\psi$ for some~$\psi$ and $\fal{w}{\varphi}$, then $\tru{w}{\psi}$ and $\tru{v}{\psi}$ by [$\TT\ineg$] and the definition of~$R_{X}$, which implies by [$\FF\ineg$] and the definition of~$R_{X}$ that $\fal{v}{\varphi}$. If $\varphi=\uneg\psi$ for some~$\psi$ and $\fal{w}{\varphi}$, then by [$\TT\uneg$] and the definition of~$R_{X}$ it follows that either $\tru{w}{\psi}$ or $\tru{v}{\psi}$. Either way we conclude by [$\FF\uneg$] and the definition of~$R_{X}$ that $\fal{v}{\varphi}$.

			\item For $X{\in}\set{\Lano{PKD}{\set{\wsmile}},\Lano{PKF}{\set{\wsmile}},\Lano{PKD4}{\set{\wsmile}}}$ set $R_{X}{=}\set{\tup{w,v}\!,\tup{v,v}}$.
			Note that since $R_{X}$ is a total function, 
 $\ineg$ and $\uneg$ are indistinguishable, hence we may choose to consider~$\uneg$ instead of~$\ineg$. Moreover, since $R_{X}$ is also transitive, $\M_X$ belongs to the appropriate class of models.  The cases where $\varphi$ is atomic, a conjunction, or a disjunction are immediate. If $\varphi=\uneg\psi$  for some~$\psi$ and $\fal{w}{\varphi}$, then we must have $\tru{v}{\psi}$ by [$\TT\uneg$], which implies by [$\FF\uneg$] that $\fal{v}{\varphi}$. If $\varphi=\wfrown\psi$ for some~$\psi$, then $\fal{v}{\varphi}$ must hold good: indeed, if on the one hand $\tru{w}{\uneg\psi}$ then $\fal{v}{\psi}$ by $[\FF\uneg]$, and hence $\tru{v}{\uneg\psi}$ by $[\TT\uneg]$, which implies by [$\FF\wfrown$] that $\fal{v}{\varphi}$; if on the other hand $\fal{w}{\uneg\psi}$ then $\tru{v}{\psi}$ by [$\TT\uneg$], and hence again $\fal{v}{\varphi}$ follows by [$\FF\wfrown$].

   			\item For $X{\in}\!\set{\Lano{PKD}{\set{\wfrown}},\Lano{PKF}{\set{\wfrown}},\Lano{PKD4}{\set{\wfrown}}}$ set $R_{X}{=}\set{\tup{w,w}\!,\tup{v,w}}$.
			For this case also,
			note that since $R_{X}$ is both a total function and a transitive relation, 
			$\M_X$ belongs to the appropriate class of models. 
			The proof then proceeds as in item (C), \textit{mutatis mutandis}.
\qed
\end{enumerate}
\renewcommand{\qed}{}
\qed
\end{proof}

Concerning the second part of the preceding proof, it is worth remarking that in the cases in which classical negation turned out to be definable some adjustment operator was always used for that purpose. 
Moreover, the proof of case (\ref{reflexiveCase}) actually shows that whenever all derivations in a given logic are sound with respect to any class of models that includes this model, classical negation is not definable without the help of $\wsmile$ or $\wfrown$, as the same argument applies. In particular, this includes all normal (negative) modal logics up to (negative) $S5$, and indeed all the logics in which $\wsmile\varphi$ is not equivalent to~$\top$ (logics which are $\smile$-paraconsistent) and $\wfrown\varphi$ is not equivalent to~$\bot$ (logics which are $\frown$-paracomplete).

Our next and final section will revisit the introductory comments of the paper in the light of what we have learned so far.



\section{This is possibly not the end}
\label{sec:coda}

\paragraph{Denying instead of affirming}
In contrast to the usual `positive modalities' of normal modal logics, which are monotone with respect to the underlying notion of consequence, we have devoted this paper to antitone connectives known as `negative modalities' --- specifically, to full type box-minus and full-type diamond-minus connectives.


Be they monotone or antitone on each of their arguments, the connectives of normal modal logics are always congruential: they treat equivalent formulas as synonymous.  The phenomenon seems to be an exception rather than the rule if many-valued logics with non-classical negations are involved.
For instance, Kleene's 3-valued logic fails to be congruential, as $p\land\neg p$ is equivalent to $q\land\neg q$, but their respective negations, $\neg(p\land\neg p)$ and $\neg(q\land\neg q)$, are not equivalent (where $\neg$ is Kleene's negation).
Also, the earliest paraconsistent logic in the literature (cf.~\cite{jas:48}) fails to be congruential, in spite of having been defined in terms of a double translation into a fragment of the modal logic $S5$, and such failure remained undetected for decades (cf.~\cite{jmar:04g}).  The same holds for the other early paraconsistent logics developed later on, containing additional `strong negations' that live in the vicinity of classical negation (cf.~\cite{Nel:NaSoCiCS:59,daC:comptes:63}).
Of course, there are important `non-exceptions': intuitionistic logic and other intermediate logics constitute paracomplete logics with the replacement property.  
For another example of the latter kind, perhaps more to the point, consider the four-valued logic of FDE, whose semantics may be formulated having as truth-values $\{\mathbf{t},\mathbf{b},\mathbf{n},\mathbf{f}\}$, where $\{\mathbf{t},\mathbf{b}\}$ are designated, the reflexive transitive closure of the order~$\leq$ such that $\mathbf{f}\leq\mathbf{n}$, $\mathbf{f}\leq\mathbf{b}$, $\mathbf{n}\leq\mathbf{t}$ and $\mathbf{b}\leq\mathbf{t}$ may be used to define $\land$ and $\lor$, respectively, as its meet and its join, while negation is defined by setting $\neg\langle\mathbf{t},\mathbf{b},\mathbf{n},\mathbf{f}\rangle:=\langle\mathbf{f},\mathbf{b},\mathbf{n},\mathbf{t}\rangle$.
It is not hard to see that this logic is congruential and by defining the operators
$\wsmile\langle\mathbf{t},\mathbf{b},\mathbf{n},\mathbf{f}\rangle:=\langle\mathbf{t},\mathbf{n},\mathbf{b},\mathbf{t}\rangle$ and
$\wfrown\langle\mathbf{t},\mathbf{b},\mathbf{n},\mathbf{f}\rangle:=\langle\mathbf{f},\mathbf{n},\mathbf{b},\mathbf{f}\rangle$
it gets conservatively extended into another congruential logic that deductively extends our logic $PKF$ (but does not deductively extend $PKT$), if we read~$\ineg$ as~$\neg$.  It is worth noting that the latter logic is equivalent (through a definitional translation) to the expansion of FDE by the addition of the operator ${\sim}\langle\mathbf{t},\mathbf{b},\mathbf{n},\mathbf{f}\rangle:=\langle\mathbf{f},\mathbf{n},\mathbf{b},\mathbf{t}\rangle$ that plays the role of classical negation.

Still on what regards congruentiality, \Cref{framePropertiesC} illustrates how a few variants of global contraposition happen to be validated by some of our logics.  We use there `ser' to refer to the class of serial frames, `sym' to refer to symmetric frames, and `any' to refer to arbitrary frames.

\begin{table}[hbt]
\begin{tabular}{l | c | l}
~ 
Rule & Negation
& Frame property \\\hline\hline
  $\frac{\varphi\;\models\;\psi}{\narb\psi\;\models\;\narb\varphi}$ & $\narb$ as either $\ineg$ or $\uneg$ & [any] \\[2mm]
  $\frac{\narb\varphi\;\models\;\psi}{\narb\psi\;\models\;\varphi}$ & $\narb$ as $\ineg$ & [sym] \\[2mm]
  $\frac{\varphi\;\models\;\narb\psi}{\psi\;\models\;\narb\varphi}$ & $\narb$ as $\uneg$ & [sym] \\[2mm]
  $\frac{\ineg\varphi\;\models\;\psi}{\uneg\psi\;\models\;\varphi}$ &  & [ref] \\[2mm]
  $\frac{\varphi\;\models\;\uneg\psi}{\psi\;\models\;\ineg\varphi}$ &  & [ref] \\[2mm]
  $\frac{\ineg\varphi\;\models\;\uneg\psi}{\psi\;\models\;\varphi}$ &  & [ref] or [ser+sym] \\[2mm]
\hline
\end{tabular}
\caption{}
\label{framePropertiesC}
\end{table}

\paragraph{Some of our ancestors}
Some terminological conventions and some concepts used in the present paper were borrowed or adapted from other fonts, sometimes without the due pause for inserting an explicit reference.  For instance, in Section~\ref{sec:intro}, dadaistic and nihilistic models come from~\cite{jmarcos:neNMLiP}, and that paper also introduces the connectives~$\wsmile$ and~$\wfrown$ of the so-called Logics of Formal Inconsistency (cf.~\cite{car:jmar:Taxonomy}) and the dual Logics of Formal Undeterminedness (cf.~\cite{jmarcos:neNMLiP}, where the adjustment connectives are called connectives `of perfection').  The minimal conditions on negation, called $\llbracket$\textit{falsificatio}$\rrbracket$ and $\llbracket$\textit{verificatio}$\rrbracket$, come from~\cite{Mar:OnPLR}.
The `strengthened models' from Section~\ref{sec:proofsystem} correspond to models with strongly-legal valuations in the terminology of~\cite{lah:avr:Unified2013}.
%
%
Most rules in Sections~\ref{sec:proofsystem} and \ref{sec:extensions} may be seen as negative counterparts of the corresponding rules found at \cite[Ch.~3]{fit:proofmethods}, \cite{Wansing}, \cite[Ch.~2]{poggiolesi2010gentzen} and \cite[Ch.~6]{ind:natded-etc}.
The rule for \GPKD, for instance, may be thought of as a variation on the following well-known sequent rule for the modal logic $KD$: 
$\frac{\g\Ra}{\Box\g\Ra}$.  
Also, the rule for \GPKF\ is a variation on the sequent rule from~\cite{kawai1987sequential} for the `Next' operator in the temporal logic $LTL$, namely: 
$\frac{\g\Ra\d}{\Box\g\Ra\Box\d}$.
Furthermore, in Section~\ref{sec:analyticity}, the trick behind using three-valued models for addressing the admissibility of the cut rule goes at least as far back as \cite{Schutte60}.

\paragraph{Paraconsistency and paracompleteness}
Let~$\mostbasic$ denote the basic system obtained from~$\GPK$ by deleting rules $[\ineg{\Ra}]$ and $[{\Ra}\uneg]$. \Cref{allSystems} summarizes all systems investigated in the present paper, while \Cref{allBasicRules,allContextRelations} summarize the rules for $\ineg$ and $\uneg$, as well as their context relations.
 
\begin{table}[hbt]
\begin{tabular}{|c|c|}
\hline
$\pi_0$ 
& $\set{\tup{q_1\Ra\;;\;q_1\Ra},\tup{\Ra q_1\;;\;\Ra q_1}}$
\\\hline
$\pi_1$ 
& $\set{ \tup{q_1\Ra\;;\;\Ra\ineg q_1}, \tup{\Ra q_1\;;\;\uneg q_1\Ra}}$
\\\hline
$\pi_2$ 
& $\set{\tup{q_{1}\Ra\;;\;\Ra\ineg q_{1}},\tup{\Ra q_{1}\;;\;\ineg q_{1}\Ra}}$
\\\hline
$\pi_3$ 
& $\pi_{1}\cup\set{\tup{\ineg q_{1}\Ra\;;\;\Ra q_{1}},\tup{\Ra\uneg q_{1}\;;\; q_{1}\Ra}}$
\\\hline
$\pi_4$ 
& $\pi_{1}\cup\set{\tup{\uneg q_{1}\Ra\;;\;\uneg q_{1}\Ra},\tup{\Ra\ineg q_{1}\;;\;\Ra\ineg q_{1}}}$
\\\hline
\end{tabular}
\caption{Context relations}
\label{allContextRelations}
\end{table}

\begin{table}[hbt]
\begingroup
\begin{tabular}{|c|c|c|}
\hline
~
& Sequent Rule
& Basic Rule
\\\hline
${[{\ineg}{\Ra}]}$ 
& $\ssrul{\g\Ra\varphi, \d}{\uneg\d,\ineg\varphi\Ra \ineg\g} $
& $\tup{\Ra p_1;\pi_1}\rs\ineg p_1\Ra$
\\\hline
${[{\Ra}{\uneg}]} $
& $\ssrul{\g,\varphi\Ra \d}{\uneg\d\Ra \uneg\varphi,\ineg\g}$
& $\tup{p_1\Ra ;\pi_1}\rs\Ra\uneg p_1$
 \\\hline
$[\D]$ 
& $\ssrul{\Gamma\Ra\Delta}{\uneg\Delta\Ra\ineg\Gamma}$
& $\tup{\Ra\;;\;\pi_{1}}\rs\Ra$
 \\\hline
$[\T_{1}]$
& $\ssrul{\Gamma,\varphi\Ra\Delta}{\Gamma\Ra\ineg\varphi,\Delta}$
& $\tup{p_{1}\Ra\;;\;\pi_{0}}\rs\Ra\ineg p_{1}$
\\\hline
 $[\T_{2}]$
 & $\ssrul{\Gamma\Ra\varphi,\Delta}{\Gamma,\uneg\varphi\Ra\Delta}$
& $\tup{\Ra p_{1}\;;\;\pi_{0}}\rs\uneg p_{1}\Ra\;$
\\\hline
$[\FUNC]$ 
&$\ssrul{\Gamma\Ra\Delta}{\ineg\Delta\Ra\ineg\Gamma}$
& $\tup{\Ra\;;\;\pi_{2}}\rs\Ra$
\\\hline
$[\B_{1}]$ 
& $\ssrul{\Gamma,\ineg\Gamma',\varphi\Ra\Delta,\uneg\Delta'}{\uneg\Delta,\Delta'\Ra\uneg\varphi,\ineg\Gamma,\Gamma'}$
& $\tup{ p_{1}\Ra\;;\;\pi_{3}}\rs\Ra\uneg p_{1}$
\\\hline
 $[\B_{2}]$ 
 & $\ssrul{\Gamma,\ineg\Gamma'\Ra\varphi,\Delta,\uneg\Delta'}
 {\uneg\Delta,\Delta',\ineg\varphi\Ra\ineg\Gamma,\Gamma'}$
 & $\tup{\Ra p_{1}\;;\;\pi_{3}}\rs\ineg p_{1}\Ra$
 \\\hline
$[\KFour_{1}]$
& $\ssrul{\uneg\Gamma,\Gamma',\varphi\Ra\ineg\Delta,\Delta'}{\uneg\Gamma,\uneg\Delta'\Ra\uneg\varphi,\ineg\Delta,\ineg\Gamma'}$
& $\tup{ p_{1}\Ra\;;\;\pi_{4}}\rs\Ra\uneg p_{1}$
\\\hline
$[\KFour_{2}]$
& $\ssrul{\uneg\Gamma,\Gamma'\Ra\varphi,\ineg\Delta,\Delta'}{\uneg\Gamma,\uneg\Delta',\ineg\varphi\Ra\ineg\Delta,\ineg\Gamma'}$
& $\tup{\Ra p_{1}\;;\;\pi_{4}}\rs\ineg p_{1}\Ra$
\\\hline
$[\D_{B}]$
& $\ssrul{\ineg\Gamma',\Gamma\Ra\Delta,\uneg\Delta'}{\Delta',\uneg\Delta\Ra\ineg\Gamma,\Gamma'}$
& $\tup{ \Ra\;;\;\pi_{3}}\rs\Ra$
\\\hline
$[\D_{4}]$
& $\ssrul{\uneg\Gamma',\Gamma\Ra\Delta,\ineg\Delta'}{\uneg\Gamma',\uneg\Delta\Ra\ineg\Gamma,\ineg\Delta'}$
& $\tup{ \Ra\;;\;\pi_{4}}\rs\Ra$ 
\\\hline
\end{tabular}
\caption{Sequent rules}
\label{allBasicRules}
\endgroup
\end{table}

\begin{table}[hbt]
\begin{tabular}{|c|c|}
\hline
$\GPK$
& $\mostbasic+[\uneg\Ra]+[\ineg\Ra]$
\\\hline
$\GPKD$
& $\GPK+[\D]$
\\\hline
$\GPKT$
& $\GPK+[\T_{1}]+[\T_{2}]$
\\\hline
$\GPKF$
& $\mostbasic+[\FUNC]+[\ineg=\uneg]$
\\\hline
$\GPKB$
& $\mostbasic+[\B_{1}]+[\B_{2}]$
\\\hline
$\GPKFour$
& $\mostbasic+[\KFour_{1}]+[\KFour_{2}]$
\\\hline
$\GPKDB$
& $\GPKB+[\D_{\B}]$
\\\hline
$\GPKFourD$
& $\GPKFour+[\D_{\KFour}]$
\\\hline
\end{tabular}
\caption{Sequent systems}
\label{allSystems}
\end{table}

The first column of \Cref{frameProperties} recalls some of the basic consecutions characteristic of classical negation and its interaction with $\land$, $\lor$, $\top$ and $\bot$, identifying in the second and third columns the conditions on frames that suffice to validate them.  
In this table, [ser], [ref], [sym], [trn] and [fun] refer, respectively, to serial, reflexive, symmetric, transitive and functional frames, and [any] refers to arbitrary frames.

\begin{table}[hbt]
\scriptsize
\begin{tabular}{l | c | c | l}
~ 
Consecution & $\narb$ as $\ineg$ 
& $\narb$ as $\uneg$
& Derivability adjustments \\\hline\hline
  $\narb\top\models$ &  [any] &  [ser]  &  \\
  $\models\narb\bot$ &  [ser] &  [any] & \\[1mm]
\hline
  $\varphi,\narb\varphi\models$ 
    & 
    & [ref]
    & $\wsmile\varphi,\varphi,\ineg\varphi\models$ [any] \\
  $\models\narb\varphi,\varphi$ 
    & [ref]
    &  
    & $\models\uneg\varphi,\varphi,\wfrown\varphi$ [any]\\[1mm]
\hline
  $\narb\narb\varphi\models\varphi$ 
    & [sym]
    &  
    & $\uneg\uneg\varphi\models\varphi,\wfrown\varphi$ [ser+trn] \\
  $\varphi\models\narb\narb\varphi$ 
    &  
    &  [sym] 
    & $\wsmile\varphi,\varphi\models\ineg\ineg\varphi$ [ser+trn]\\[1mm]
\hline
  $\narb\varphi\lor\narb\psi \models \narb(\varphi\land\psi)$ 
    &  [any]
    &  [any] & \\
  $\narb\varphi\lor\psi \models \narb(\varphi\land\narb\psi)$ 
    &  
    & [sym] 
    & $\wsmile\psi,\ineg\varphi\lor\psi \models \ineg(\varphi\land\ineg\psi)$ [ser+trn]\\
  $\varphi\lor\narb\psi \models \narb(\narb\varphi\land\psi)$ 
    &  
    &  [sym] 
    & $\wsmile\varphi,\varphi\lor\ineg\psi \models \ineg(\ineg\varphi\land\psi)$ [ser+trn]\\
  $\varphi\lor\psi \models \narb(\narb\varphi\land\narb\psi)$ 
    &  
    &  [sym] 
    & $\wsmile\varphi,\wsmile\psi,\varphi\lor\psi \models \ineg(\ineg\varphi\land\ineg\psi)$ [ser+trn]\\[1mm]
  $\narb(\varphi\land\psi) \models \narb\varphi\lor\narb\psi$ 
  & [any]
  &  [fun] & \\
  $\narb(\varphi\land\narb\psi) \models \narb\varphi\lor\psi$ 
    & [sym]
    &  
    & $\uneg(\varphi\land\uneg\psi) \models \uneg\varphi\lor\psi,\wfrown\psi$ [trn] \\
  $\narb(\narb\varphi\land\psi) \models \varphi\lor\narb\psi$ 
    &  [sym]
    &  
    & $\uneg(\uneg\varphi\land\psi) \models \varphi\lor\uneg\psi,\wfrown\varphi$ [trn] \\
  $\narb(\narb\varphi\land\narb\psi) \models \varphi\lor\psi$ 
    &  [sym]
    & 
    & $\uneg(\uneg\varphi\land\uneg\psi) \models \varphi\lor\psi,\wfrown\varphi,\wfrown\psi$ [ser+trn] \\[1mm]
  $\narb\varphi\land\narb\psi \models \narb(\varphi\lor\psi)$ 
    &  [fun]
    &  [any] & \\
  $\narb\varphi\land\psi \models \narb(\varphi\lor\narb\psi)$ 
    &   
    &  [sym] 
    & $\wsmile\psi,\ineg\varphi\land\psi \models \ineg(\varphi\lor\ineg\psi)$ [trn]\\
  $\varphi\land\narb\psi \models \narb(\narb\varphi\lor\psi)$ 
    &   
    &  [sym] 
    & $\wsmile\varphi,\ineg\varphi\land\psi \models \ineg(\varphi\lor\ineg\psi)$ [trn]\\
  $\varphi\land\psi \models \narb(\narb\varphi\lor\narb\psi)$ 
    &   
    &  [sym] 
    & $\wsmile\varphi,\wsmile\psi,\ineg\varphi\land\psi \models \ineg(\varphi\lor\ineg\psi)$ [ser+trn]\\[1mm]
  $\narb(\varphi\lor\psi) \models \narb\varphi\land\narb\psi$ 
    &  [any]
    &  [any] & \\
  $\narb(\varphi\lor\narb\psi) \models \narb\varphi\land\psi$ 
    &  [sym]
    & 
    & $\uneg(\varphi\lor\uneg\psi) \models \uneg\varphi\land\psi,\wfrown\psi$ [ser+trn]\\
  $\narb(\narb\varphi\lor\psi) \models \varphi\land\narb\psi$ 
    &  [sym]
    & 
    & $\uneg(\uneg\varphi\lor\psi) \models \varphi\land\uneg\psi,\wfrown\varphi$ [ser+trn]\\
  $\narb(\narb\varphi\lor\narb\psi) \models \varphi\land\psi$ 
    &  [sym]
    & 
    & $\uneg(\varphi\lor\uneg\psi) \models \uneg\varphi\land\psi,\wfrown\varphi,\wfrown\psi$ [ser+trn]\\[1mm]
\hline
\end{tabular}
\caption{}
\label{frameProperties}
\end{table}

\normalsize

\paragraph{A richer language in which to study negative modalities}
Here is a meaningful illustration of the way in which \textbf{LFI}s and \textbf{LFU}s are said to `recover classical reasoning', by the addition of appropriate assumptions to the classical inferences whose validity has been lost by the move to a non-classical environment.
Let a negation~$\neg$ be added to positive classical logic (with~$\supset$ as the symbol for implication), and consider the standard form of \textit{reductio} according to which (conc)~$p$ follows from (prem) $(\neg p\supset q)\land(\neg p\supset\neg q)$.  Such a rule fails both when [CA] and when [DA] (namely, the `consistency assumption' and the `determinedness assumption' that are characteristic of classical negation, see \Cref{sec:goals}) are challenged.
As a matter of fact, when [CA] is not to be presumed, one might produce a counter-example to \textit{reductio} by finding a state of affairs satisfying both~$q$ and $\neg q$ while not satisfying~$p$, and when [DA] is not to be counted on, a state of affairs satisfying neither~$p$ nor $\neg p$ would provide a counter-example to \textit{reductio}.  This could be fixed if one replaced
(prem) for (prem$^\star$)  $\mathsf{C}q\land(\neg p\supset q)\land(\neg p\supset\neg q)$
and replaced (conc) for (conc$^\star$)  $p\lor\mathsf{D}p$,
adding thus a consistency assumption to the premise and a determinedness assumption to the conclusion.  It should be clear that (conc$^\star$) follows from (prem$^\star$).

The particular languages focused upon in the present paper, of course, do not include a primitive implication. The fourth column of \Cref{frameProperties} illustrates how some other important consecutions from classical logic may be recovered by some of the logics studied in this paper.  \Cref{framePropertiesB} recalls a few of the most important consecutions validated by \text{LFI}s introduced in the survey~\cite{car:jmar:Taxonomy} (namely, related to axioms dubbed (ci) and (ca)), and identify the conditions on frames that suffice to validate them.

\begin{table}[hbt]
\begin{tabular}{l | c | l}
~ 
Consecution & Property 
& Derivability adjustment \\\hline
\hline
  $\ineg\wsmile\varphi\models\varphi$ 
    & ---
    & $\ineg\wsmile\varphi\models\varphi,\wfrown\varphi$ [any]  \\
  $\ineg\wsmile\varphi\models\ineg\varphi$ 
    & [trn] 
    &  
    \\[1mm]
\hline
  $\wsmile\varphi,\wsmile\psi\models\wsmile(\varphi\land\psi)$ 
    & [any] 
    \\
  $\wsmile\varphi,\wsmile\psi\models\wsmile(\varphi\lor\psi)$ 
    & [any] 
    & 
    \\[1mm]
\hline
\end{tabular}
\caption{}
\label{framePropertiesB}
\end{table}

On what concerns the second pair of consecutions in the latter table (related to axiom (ca)), that deal with the `propagation of consistency', it is worth noticing, in the presence of a classical implication, that $\wsmile\varphi,\wsmile\psi\models\wsmile(\varphi\supset\psi)$ is \textit{not} a valid consecution in normal modal logics, but its variant $\wsmile\psi\models\wsmile(\varphi\supset\psi),\wfrown\varphi$ is validated by arbitrary frames.

\paragraph{On the availability of classical negation}
We have already pointed out in \Cref{sec:goals} how a classical negation might be defined within the basic normal modal logic of arbitrary frames with the help of the modal paraconsistent negation~$\ineg$ and an additional classical implication.  Now, within the modal logic of reflexive frames one may also define a classical negation with the help of implication and the modal paracomplete negation~$\uneg$, by simply setting ${\sim}\alpha:=\alpha\supset\uneg\alpha$.  Indeed, it is easy to see that the more general consecution $\models p, p\supset q$ would then be valid because of the meaning of \textit{classical} implication, and to check that $p, p\supset\uneg p\models$ is also valid one may use modus ponens and the reflexivity of the underlying frames.

Do the above observations still hold good if the classical implication~$\supset$ is replaced by the \textit{intuitionistic} implication?  To comment on that, we let $\to$ be some implication connective, and define ${\neg}_1\alpha:=\alpha\to\uneg\alpha$, ${\neg}_2\alpha:=\alpha\to\ineg(\alpha\to\alpha)$ and ${\neg}_3\alpha:=\alpha\to\uneg(\alpha\to\alpha)$, so as to briefly discuss in what follows the relations that involve classical negation and its non-classical modal cousins.  To be sure, some such relations have already been mentioned in previous sections and the corresponding consecutions are collected in \Cref{frameProperties,morenegations}.  The latter table also contains some double negation rules in which negations of different types interact. 

\begin{table}[hbt]
\begin{tabular}{l | c | c}
~ 
Consecution & $\narb$ as $\uneg$ 
& $\narb$ as $\ineg$
\\\hline\hline
  $\narb\varphi\models{\sim}\varphi$ & [ref] & --- \\
  ${\sim}\varphi\models\narb\varphi$ & --- & [ref] \\
\hline
\end{tabular}\bigskip\\ 

\begin{tabular}{l | c | c | l}
~ 
Consecution & \shortstack{$\narb_{1}$ as $\uneg$, \\$\narb_{2}$ as $\ineg$} & \shortstack{$\narb_{1}$ as $\ineg$, \\ $\narb_{2}$ as $\uneg$} & Derivability adjustment \\\hline\hline
$\narb_1\varphi\models\narb_2\varphi$ & [ser] & [fun] \\[1mm]
$\narb_{1}\narb_{2}\varphi\models\varphi$
& \begin{tabular}{c} [ref] or \\ {[ser+sym]}\end{tabular}
& 
&  $\ineg\uneg\varphi\models\varphi,\wfrown\varphi$ [trn] \\
$\varphi\models\narb_{1}\narb_{2}\varphi$
& 
& \begin{tabular}{c} [ref] or \\ {[ser+sym]}\end{tabular}
& $\wsmile\varphi,\varphi\models\uneg\ineg\varphi$ [trn] \\\hline
\end{tabular}
\caption{}
\label{morenegations}
\end{table}

To facilitate the discussion, we might hereupon say that a 1-ary connective~$\#$ is \textsl{contrary-forming} iff it is not $\#$-paraconsistent (that is, if it respects {$\llbracket$\#-explosion$\rrbracket$}), and $\#$ is \textsl{subcontrary-forming} iff it is not $\#$-paracomplete (that is, if it respects {$\llbracket$\#-implosion$\rrbracket$}).  Additionally, a \textsl{contradictory-forming} connective is said to be a 1-ary operator that is both contrary-forming and subcontrary-forming --- corresponding to what we here have called a `classical negation'.  
Generalizing the above, two formulas~$\varphi$ and~$\psi$ are called \textsl{contrary} according to a given logic if the consecution $\varphi,\psi\models$ holds good, and are called \textsl{subcontrary} if the consecution $\models\psi,\varphi$ holds good.

Note on the one hand that if~$\to$ is classical implication all the three negations defined above, $\neg_1$, $\neg_2$ and $\neg_3$, are subcontrary-forming connectives. 
On the other hand, even if we assume that~$\to$ is intuitionistic, we see that while $\neg_2$ is contrary-forming over the logic of arbitrary frames, to obtain the same effect with $\neg_3$ we need to consider the logic of serial frames, while for~$\neg_1$ reflexive frames are on demand.  

The nullary connective~$\bot$ (or any formula equivalent to it, such as $\ineg\top$ over the logic of arbitrary frames, or $\uneg\top$ over the logic of serial frames) may be said to be the \textit{strongest contrary} of any given formula (in the sense that it entails any other contrary of that very formula), or said to be a `global contrary' (as it works uniformly as the contrary to \textit{any} formula); something analogous may be said about~$\top$ as the strongest/global \textit{subcontrary} of any given formula.
Note, at any rate, that if $\#\varphi$ is equivalent to~$\bot$, for arbitrary~$\varphi$, then~$\#$ fails to be a negation (it does not respect the $\llbracket$\#-\textit{verificatio}$\rrbracket$ condition).
It is worth pointing out that if~$\to$ is intuitionistic implication then $\neg_2\alpha$ is the \textit{weakest contrary} to the formula~$\alpha$, over the logic of arbitrary frames, the intuitionistic-like $\neg_3\alpha$ plays the same role over the logic of serial frames, and $\neg_1\alpha$ plays that role over the logic of reflexive frames. Both~$\neg_2$ and~$\neg_3$ may be said then to produce a `local contrary' out of a global contrary. 
These are all issues discussed in~\cite{humb:contrsub:theoria:2005}, where the `basic logic of contrariety and subcontrariety' is expected to validate the following consecutions that witness the interaction between the two non-classical types of modal negations: (i) $\uneg\varphi\models\ineg\varphi$; (ii) $\uneg\ineg\varphi\models\varphi$; (iii) $\varphi\models\ineg\uneg\varphi$.  As we have seen in the present paper, irrespective of the presence of any kind of implication in the language, while item (i) is validated by the class of serial frames, to validate items (ii) and (iii) one should also impose the symmetry of the latter frames.  
All these facts can now be easily checked also using the corresponding sequent systems presented above.

Finally, as another warning concerning the extension of the logics studied in the present paper by the addition of implication, it is worth noting that in~\cite{humb:contrsub:theoria:2005} it is also shown that the identification between the contrary-forming and the subcontrary-forming negations (that happens, for instance, within the logic of total functional frames) causes a collapse of intuitionistic implication into classical implication.  As we have seen by the study of $PKF$ in the present paper, however, imposing the identification between~$\ineg$ and~$\uneg$ does \textit{not} mean that our non-classical negations end up collapsing into classical negation, if an implication is not available.
All that having been said, it is worth noting that the framework of basic systems~\cite{lah:avr:Unified2013} can be used to provide semantics for all systems studied in this paper augmented with a classical or an intuitionistic implication. Cut-admissibility, analyticity, and undefinability of classical negation in the resulting extensions are left as matter for future work.

\paragraph{What is to follow}
The main feature of our approach in the present paper has been to rely on theoretical technology built elsewhere and show how it may be adapted to the present study.  Our hope is that this should prove a beneficial methodology, and that the idea of obtaining completeness and cut-admissibility as particular applications of more general results will become more common, rather than proceeding always through \textit{ad hoc} completeness and cut-elimination theorems.


While we have directed our attention, in the present paper, to classes of frames that turned out to be particular significative from the viewpoint of the relation between negative modalities of different types, we envisage several very natural ways of extending this study.
%
A \textit{first} natural extension would be to look at other classes of frames that prove to be relevant from the viewpoint of sub-classical properties of negation.
For instance, 
it is easy to see that the class of frames with the Church-Rosser property validates $\ineg\ineg p\models \uneg\uneg p$, pinpointing an interesting consecution that involves the interaction between negations of different types.
Some other classes of frames deserving study do not seem to show the same amount of promise, from the viewpoint of paraconsistency or paracompleteness.  For instance, euclidean frames validate {$\llbracket\ineg$-explosion$\rrbracket$}
if in the set of formulas $\{\ineg p,p\}$ one replaces~$p$ by $\ineg r$ (this is explained by the fact that formulas of the form~$\ineg\varphi$ are consistent --- in other words, formulas of the form~$\wsmile\ineg\varphi$ turn out to be validated), and also validate {$\llbracket\uneg$-implosion$\rrbracket$} if in $\{\uneg p,p\}$ one replaces~$p$ by $\uneg r$; in contrast, the transitive frames studied in \Cref{transitivitySection} cause a similar behavior, but swapping the roles of~$\ineg r$ and~$\uneg r$ in replacing~$p$.  The latter phenomena (see the notion of `controllable explosion' in \cite{car:jmar:Taxonomy}) is summarized in \Cref{framePropertiesD}.

\begin{table}[hbt]
\begin{tabular}{l | c | c}
~ 
Consecution & $\narb$ as $\ineg$ 
& $\narb$ as $\uneg$
\\\hline\hline
  $\narb\varphi,\narb\narb\varphi\models$
    & [euc] & [ser+trn] 
    \\
  $\models\narb\narb\varphi,\narb\varphi$
    & [ser+trn] & [euc] 
    \\[2mm]
  $\narb\varphi,\ineg\narb\varphi\models$
    & --- & [trn] 
    \\
  $\models\uneg\narb\varphi,\narb\varphi$
    & [trn] & --- 
    \\[1mm]
\hline
    $\models\wnarb\narb\varphi$ &  [euc] &  --- 
    \\
    $\wnarb\narb\varphi\models$ &  --- &  [euc] 
    \\[2mm]
  %
    $\models\wsmile\narb\varphi$ &  --- &  [trn] 
    \\
    $\wfrown\narb\varphi\models$ &  [trn] &  --- 
    \\[1mm]
\hline
\end{tabular}
\caption{}
\label{framePropertiesD}
\end{table}

\noindent
A \textit{second} avenue worth exploring would lead us into logics containing more than one negative modality of the same type (as it has been done for logics with multiple paracomplete negations in~\cite{res:combpossneg:SL97}).
One could for instance consider not only the `forward-looking' negative modalities defined by the semantic clauses {[S$\ineg$]} and {[S$\uneg$]}, but also `backward-looking' negative modalities~$\ineg^{\!-1}$ and~$\uneg^{\!-1}$ defined by the clauses obtained from the latter ones by replacing $wRv$ by $vRw$ (such `converse modalities' have been studied in the context of temporal logic \cite{Prior67}, as well as in the context of the so-called Heyting-Brouwer logic \cite{rau:BH:DM1980}, and more recently have been given a treatment in terms of display logic calculi and multi-relational frames \cite{Onishi-sub15}).
The interaction between the various negations would then be witnessed, in such extended language, by the validity over arbitrary frames of `pure' consecutions such as $\ineg^{\!-1}\ineg p\models p$ and $\ineg\ineg^{\!-1} p\models p$ (as well as $p\models \uneg^{\!-1}\uneg p$ and $p\models \uneg\uneg^{\!-1} p$) and forms of global contraposition such as $\frac{\varphi\;\models\;\uneg\psi}{\psi\;\models\;\uneg^{\!-1} \varphi}$ and $\frac{\ineg\varphi\;\models\;\psi}{\ineg^{\!-1} \psi\;\models\;\varphi}$ (as well as $\frac{\varphi\;\models\;\uneg^{\!-1} \psi}{\psi\;\models\;\uneg\varphi}$ and $\frac{\ineg^{\!-1}\varphi\;\models\;\psi}{\ineg \psi\;\models\;\varphi}$), 
and by the validity over symmetric frames of `mixed' consecutions such as $\uneg^{\!-1}\ineg p\models p$ and $\uneg\ineg^{\!-1} p\models p$ (as well as $p\models \ineg^{\!-1}\uneg p$ and $p\models \ineg\uneg^{\!-1} p$).
In our view, it seems worth the effort applying the machinery employed in the present paper to the above mentioned systems, and still others, in order to investigate results analogous to the ones we have here looked at.
In particular, Proposition 4.28 of \cite{lah:avr:Unified2013}, that was used here for symmetric and transitive frames, provides a tool for constructing two accessibility relations that are inverse to one another, and thus one may serve as the `backward-looking' version of the other. %

Yet another direction for future research is the {\em combination} of certain frame properties. We have initiated such study in \Cref{serialityTransitivitySection}, where models that are both transitive and serial, or both symmetric and serial were addressed. 
The main challenge one would face pursuing this direction is that most basic rules discussed in the present paper employ different context relations, and hence, according to the semantics of basic systems, their combination characterizes models that are equipped with several distinct accessibility relations. In the case of \Cref{serialityTransitivitySection}, this was remedied by finding a different rule for seriality, that utilizes the same context-relation as the rule for transitivity. A similar solution was implemented for the combination of symmetry and seriality (captured in isolation by systems $\GPKB$ with $\GPKD$). To deal with such a combination, we had to adapt the method for extending partial models. Combining $\GPKD$ with either $\GPKT$ or $\GPKF$ is of course redundant, as reflexive relations and total functions are already serial.  Also,  the rules for $\GPKT$ employ $\pi_{0}$, which is already used for the classical connectives, and thus do not require an additional accessibility relation. Combining reflexivity with other frame properties seems therefore more promising in the context of cut-free sequent systems, however not if this is to involve other logics that are free of classical negation (given that already in $\GPKT$ classical negation is definable with the help of either one of the adjustment connectives).

One aspect that laid beyond the scope for the current paper was the prospect of using non-normal modal logics to define negative modalities. The idea is roughly the same: on the semantic side, interpreting the negative modal operators as 
impossibility and unnecessity,
and on the proof-theoretic side, converting derivation rules for non-normal modal logics into `negative' ones. 
From that perspective, it is worth noting that the `regular' and `co-regular' negations introduced in~\cite{vaka:cons89:full}, yet very natural, fail to be `full type' modalities.
Many proof systems for positive non-normal modal logics (e.g.\ all ordinary sequent systems from \cite{10.2307/20016216}) are actually basic systems, thus are amenable to an analysis similar to the one proposed here. We also note that positive and negative modalities within non-distributive logics are investigated in 
\cite{hartonas2016nondistributive} from an algebraic point of view, and in 
\cite{Greco07082016} the algebraic study of such logics is also complemented by proof-theoretical presentations in terms of display calculi.

At last, it is also worth stressing some specific problems that were left \textit{open} in the present paper.
For instance, an analytic proof system for negative modalities over euclidean frames (corresponding to the logic K5) was not presented here.  The sequent rule for K5 from \cite{takano2001modified} may indeed be adapted to the study of negative modalities; however, unlike what happens with the other cases studied here, the framework of basic systems falls short to handle this logic, as some of its strengthened models are not euclidean.
As another example, it was established in \Cref{sec:definability} that classical negation is definable in full $\GPKDB$, but not without~$\wsmile$ and~$\wfrown$. Whether only one of the latter adjustment connectives suffices for such a definition is left as matter for further research, as the methods developed here for attacking this question seem to leave it unanswered. Indeed, the definitions of classical negation for $\GPKT$ fail in $\GPKDB$, and the countermodels presented for $\wsmile$-free and $\wfrown$-free logics are not suitable for it. Finally, cut-admissibility of our system $\GPKB$ is currently left open. Note that for the situation in which the $\wsmile\wfrown$-fragment of $\GPKB$ is augmented with an implication connective, \cite{avr:zam:AiML16} includes an example of a derivable sequent that has no cut-free proof. However, the languages that we considered here, and in particular, the language of $\GPKB$, do not include a native implication connective.%
\footnote[7]{The authors acknowledge partial support by CNPq, by The Israel Science Foundation (grant no.\ 817-15), by the Marie Curie project GeTFun (PIRSES-GA-2012-318986) funded by EU-FP7, and by the Humboldt Foundation. 
They also take the chance to thank Elaine Pimentel, Heinrich Wansing, Alessandra Palmigiano, Dorota Leszczy\'nska-Jasion, and two anonymous referees for their comments on an earlier incarnation of this manuscript.
}


\bibliographystyle{plain}
\bibliography{journal}

%

\end{document}